\documentclass[11pt]{article}
\usepackage{amsfonts,amsmath,amsthm,amssymb,mathtools,bbm}
\usepackage[margin=1in]{geometry}
\usepackage{enumitem}
\usepackage{xcolor}
\usepackage[normalem]{ulem}
\usepackage{tikz}
\usepackage{pgfplots}\pgfplotsset{compat=1.16}
\usepackage[colorlinks=true,allcolors=black]{hyperref}
\usepackage[capitalize,nameinlink]{cleveref}
\usepackage{titletoc}
\usepackage{comment}

\newtheorem{theorem}{Theorem}
\newtheorem{proposition}{Proposition}
\newtheorem{lemma}{Lemma}
\newtheorem{corollary}{Corollary}
\newtheorem{remark}{Remark}
\newtheorem{definition}{Definition}
\newtheorem{assumption}{Assumption}
\newtheorem{example}{Example}
\crefname{assumption}{Assumption}{Assumptions}
\Crefname{assumption}{Assumption}{Assumptions}
\crefname{algocf}{Algorithm}{Algorithms}
\Crefname{algocf}{Algorithm}{Algorithms}
\crefformat{enumi}{#2#1#3}
\Crefformat{enumi}{#2#1#3}

\usepackage[ruled]{algorithm2e}
\SetKwIF{If}{ElseIf}{Else}{if}{}{else if}{else}{end if}%
\SetKwFor{While}{while}{}{end while}%
\SetKwRepeat{Do}{do}{while}
\SetKw{KwGoTo}{go to}

\newcommand{\sep}{\,|\,}
\newcommand{\supp}{\operatorname{supp}}

\newcommand{\aff}{\operatorname{aff}}
\newcommand{\Msg}{\mathcal{M}}
\newcommand{\W}{\mathcal{W}}
\newcommand{\wT}{R}
\newcommand{\cups}{\cup \ldots \cup}
\newcommand{\prob}{\operatorname{Pr}}

\definecolor{suggestgreen}{RGB}{0,128,33}

\newcommand{\vtent}{v^{\mathrm{tent}}}
\definecolor{suggestpurple}{RGB}{128,0,160}
\definecolor{suggestred}{RGB}{180,0,0}
\definecolor{suggestorange}{RGB}{204,102,0}

\newcommand{\DEG}{\textup{PD}}

\usepackage[
            backend = biber,
            style = philosophy-classic,
            maxcitenames = 3,
            scauthorsbib = true,
            dashed = true,
            doi = false,
            isbn = false,
            url = false,
            eprint = false,
            backref = true
            ]{biblatex}
\renewbibmacro{in:}{}
\DeclareDelimFormat[cbx@textcite]{nameyeardelim}{\addspace}
\DeclareDelimFormat{nameyeardelim}{\addcomma\space}
\renewbibmacro*{pageref}{%
  \iflistundef{pageref}
    {}%
    {\setunit{\addperiod\space}%
     \printtext[backrefparens]{%
       \ifnumgreater{\value{pageref}}{1}
         {pp\adddot\ppspace}%
         {p\adddot\ppspace}%
       \printlist[pageref][-\value{listtotal}]{pageref}}\nopunct}}%

\let\cite\textcite

\begin{document}

\title{Coalition-Proof Disclosure\thanks{We thank Nageeb Ali, Aislinn Bohren, Steve Callander, Jeff Ely, Johannes H\"orner, Navin Kartik, Elliot Lipnowski, Jacopo Perego, and the audience members at the University of Pennsylvania, Penn State, QMUL-City Theory Workshop, Berlin Verifiability in Mechanism and Information Design Workshop, Gerzensee and MPSA. All errors are our own.}}
\author{Germ\'an Gieczewski\thanks{Princeton University. E-mail: \href{mailto:germang@princeton.edu}{germang@princeton.edu}.} \and Maria Titova\thanks{Vanderbilt University. E-mail: \href{mailto:motitova@gmail.com}{motitova@gmail.com}.}}
\date{October 13, 2026}
\maketitle

\begin{abstract}
We study generalized verifiable disclosure by a sender with type-independent preferences. We first characterize all perfect Bayesian equilibria as partitions of the sender's types into coalitions ordered by payoff. An equilibrium survives our coalition-proofness criterion if no group of types can profitably coordinate an announced deviation that only they are able and willing to join. Coalition-proof equilibria are greedy partitions: at each step, a coalition of the remaining types attains the highest payoff consistent with equilibrium. In our general setting, coalition-proof equilibria may fail to exist. We establish existence under restrictions on preferences and evidence: quasiconcavity with message completeness, betweenness, cheap talk with message completeness, and payoff degradation. For a disclosure game with maximally rich evidence, we show that the sender's ex-ante payoff in a coalition-proof equilibrium is a piecewise linear tent over the belief simplex, analogous to the concave closure in Bayesian persuasion and the quasiconcave closure in cheap talk.
\end{abstract}

\section{Introduction}\label{sec:intro}

Games of verifiable disclosure model communication by an informed sender who chooses which message to send from a set that depends on his private information. The canonical models \parencite{grossman1981,milgrom1981} deliver the classic unraveling result: skepticism toward silence forces full revelation. However, richer disclosure games in which only partial certification is possible or evidence is coarse have also been studied extensively in the literature \parencite{glazerrubinstein,sher2011,hagenbachkoessler,hart2017,BenPorathEtAl2019}. In these latter models, partial revelation in rich patterns is possible, but it typically comes with severe multiplicity of equilibria driven by how the receiver interprets messages nobody sends. Most of the literature responds by selecting receiver-optimal equilibria that often coincide with the receiver's commitment solution \parencite{glazerrubinstein,hart2017,BenPorathEtAl2019}.

This paper takes a different approach. We study disclosure games under one substantive assumption: the sender's preferences are type-independent. The setup is otherwise general: the sender's payoff depends on the receiver's posterior belief through a correspondence, and an arbitrary message mapping assigns each type a set of available messages. Rather than selecting equilibria on the receiver's behalf, we ask which equilibria survive credible collective deviations by the informed party. A perfect Bayesian equilibrium (PBE) fails our test of \emph{coalition proofness} if some group of sender types can announce a joint deviation (possibly mixed, possibly spread over several messages, possibly by reusing messages already sent on path) such that, when the receiver correctly anticipates exactly which types join the announcement, every participant gains. In the spirit of neologism proofness for cheap-talk games, the selection requires that credible coalitional deviations be correctly interpreted \parencite{Farrell1993}.

We begin by characterizing all PBE. We show that all PBE strategies, coalition-proof or not, are \emph{partition strategies}: they partition the type space into \emph{coalitions} ordered by equilibrium payoff from high to low. Each coalition bundles a set of types, evidence that only they can produce among the remaining types, and a common payoff. Equilibrium analysis thus reduces to a recursive partition problem implemented by an algorithm, which at each step removes one coalition from the game until no types are left. Coalition-proof PBE are exactly the outputs of the \emph{greedy} version of this algorithm in which, generally speaking, the coalition that attains the largest available payoff is removed at each step. 

In this general setting coalition-proof PBE may fail to exist for the same reason that neologism-proof equilibria may fail to exist in cheap-talk games. We thus give sufficient conditions for existence. If the message mapping is \emph{complete}, in the sense that the available evidence is closed under unions \parencite{bertomeucianciaruso}, quasiconcavity of the upper envelope suffices. \emph{Betweenness} \parencite{hart2017} of a single-valued payoff guarantees existence without any restrictions on the message mapping. Under either of these two conditions, if the game is generic, in the sense that distinct pooling groups earn distinct payoffs, the coalition-proof PBE partition is essentially unique. With cheap talk also available, completeness suffices without quasiconcavity. Finally, \emph{payoff degradation} guarantees existence whenever any coalition can be rebuilt with the same members and any lower payoff down to a common floor. This property holds under \emph{free disposal} of the sender's payoff, with no restriction on messages, or under cheap talk and \emph{revelation aversion}, a property indicating that the sender's worst outcome is having his type revealed.

Finally, we study the model with a maximally rich message space built on the stochastic message-mapping game of \textcite{titova}, where any combination of fractions of types can pool while excluding all others. There, a coalition-proof PBE takes a simple recursive form: the top coalition assembles the belief that maximizes the sender's payoff on the current support, and the recursion proceeds down the type space. The ex-ante payoff of this construction is a piecewise linear function of the prior that we call the \emph{tent}. The tent is the disclosure analogue of the concave closure of Bayesian persuasion \parencite{kamenicagentzkow} and the quasiconcave envelope of cheap talk \parencite{lipnowskiravid}: it lies below the former and generally coincides with neither.

\subsection*{Two Motivating Examples}

The first example shows that receiver-optimal equilibria can rest on implausibly asymmetric interpretations of deviations wanted equally by all sender types.

\begin{example}[Implausible revelation in receiver-optimal equilibria]\label{ex:intro-incumbent}
 A centrist incumbent (the sender) learns the state of the world $\theta\in\{-1,1\}$ drawn from a uniform prior; he can reveal it to the voter (the receiver) or not, choosing a message $m\in\{\theta,\varnothing\}$. The voter sees $m$ and elects a left-wing challenger ($a=-1$), a right-wing challenger ($a=1$), or the incumbent ($a=S$). The incumbent is office-motivated and obtains $\mathbbm{1}_{a=S}$. The voter receives $1$ if the elected challenger matches the state ($a=\theta$), $0$ if not, and $0.9$ from the safe option of reelecting the incumbent.
\end{example}

This game has two kinds of PBE. In the first, both types send $m=\varnothing$ and the incumbent is reelected. In the second, the incumbent is reelected with probability zero: the empty message is either off path and interpreted as coming disproportionately from one state, or it is sent disproportionately by one type on path. In any receiver-optimal equilibrium, the voter learns $\theta$ with probability one, and the voter's belief after $\varnothing$ places a probability of at most $0.1$ or at least $0.9$ on $\theta=1$: the voter extracts full revelation by interpreting $\varnothing$ asymmetrically, whether it is off path or on path, even though the two types have identical incentives to send it. For the incumbent, revealing the state is weakly dominated by silence: anything he says can be used against him. He could announce, ``It is in my interest to reveal nothing, and I benefit from this announcement regardless of the state; so hearing nothing, keep your prior.'' Both types strictly benefit from making this announcement, and under the correct interpretation both gain.\footnote{A refinement such as the intuitive criterion or D1 does not help here: the refinements give conditions under which to rule out some types as deviators, but in the example just given, the types are equally plausible deviators.} Coalition proofness eliminates every equilibrium except pooling on silence.

One may conjecture that limiting the receiver's ability to interpret deviations adversarially simply shifts surplus to the sender such that selecting ex-ante sender-optimal equilibria will achieve the same outcome. The second example shows it will not: coalition proofness is an interim concept, and profitable interim deviations can destroy ex-ante optimal pooling.

\begin{example}[Implausible ex-ante sender-optimal equilibria]\label{ex:intro-nuisance}
 Consider a \textcite{grossman1981} game with a nuisance dimension. Let $\Theta=\{(L,A),(L,B),(H,A),(H,B)\}$ with a uniform prior. Type $\theta$ can send any message $m\in2^\Theta$ with $\theta\in m$. The receiver's best response to belief $\mu$ is $a^*(\mu)=\sqrt{\prob_\mu(\theta(1)=H)}-8\bigl[\prob_\mu(\theta(2)=A)-0.5\bigr]^2$, and the sender's payoff is $a$.
\end{example}

Revealing $\theta(1)=H$ is good for the sender; revealing anything about the nuisance dimension $\theta(2)$ hurts him. One PBE outcome is full revelation supported by adversarial off-path beliefs. The ex-ante sender-optimal PBE has all types pool on $m=\Theta$ with payoff $\sqrt{0.5}$, supported by the same kind of asymmetric threat: off path, the receiver attributes the message $\{(H,A),(H,B)\}$ disproportionately to a single type. But the high types can announce exactly that message, evidence only they can produce, and both gain under the correct interpretation, meaning that the pooling equilibrium is not coalition-proof. In the coalition-proof prediction, the high types separate along the payoff-relevant dimension and pool along the nuisance dimension, mirroring the unique equilibrium outcome of the game with the nuisance dimension removed. When the sender wants to reveal some dimensions of the state and conceal others, coalition proofness delivers precisely that outcome.

\subsection*{Related Literature}

Our selection concept belongs to the family of belief-based refinements built on credible deviations: announcement proofness \parencite{matthews1991}, neologism proofness \parencite{Farrell1993}, the closely related equilibria of \textcite{GrossmanPerry1986}, and undefeated equilibrium \parencite{undefeated}. \textcite{callander} apply announcement proofness to select the sender-optimal equilibrium of cheap talk in complex environments. Neologism proofness and the criterion of \textcite{GrossmanPerry1986} allow deviations only to a single off-path message, announcement proofness lets the deviating types spread over several new messages, and undefeated equilibrium tests an equilibrium against alternative equilibria that use its off-path messages. In each case, the deviation uses only messages that are off the equilibrium path; our blocking coalitions may mix over several messages and use messages already on the equilibrium path, which is what makes the greedy characterization exact and existence obtainable under weaker conditions.

\textcite{bertomeucianciaruso} adapt neologism proofness to disclosure games. A Grossman--Perry--Farrell equilibrium (GPFE) is a pure-strategy PBE that admits no single-message self-signaling deviation. We allow mixed strategies, including the mixing that arises in truth-leaning equilibria. Their priority algorithm is analogous to our \cref{alg:greedy-partition}, but its termination is necessary but not sufficient for a GPFE to exist, and their existence result requires their convexity condition (CV) together with completeness (CP) or nested reports (NR). Under betweenness, a coalition-proof PBE exists without further restrictions on evidence (\cref{thm:two}). In their Example~7, the priority algorithm does not terminate and no GPFE exists, though a coalition-proof PBE does. Their model also allows message-dependent payoffs through disclosure costs or lemons discounts, which our belief-based payoff cannot represent.

\textcite{koesslerskreta} study information design by a privately informed designer and propose \emph{interim optimality} as a selection criterion. Their single-receiver setting with type-independent preferences is comparable to our rich disclosure benchmark (\cref{sec:rich-games}), as both permit arbitrary splittings of the prior. The timing of deviations differs, however: their designer evaluates a mechanism before its random message is realized, whereas in our rich game a sender observes his payoff-irrelevant label and knows whether he is included in a coalition. Their mechanisms let types commit to mixing over messages that yield different payoffs; sender mixing in our equilibria requires indifference.

A second strand, closer to our approach, characterizes equilibria of disclosure games directly. \textcite{rappoport} characterizes receiver-optimal equilibria. \textcite{wenhao} characterizes all PBE under the assumptions that each piece of evidence has cheap-talk copies and any two pieces of evidence can be jointly presented. Our \cref{prop:pbe-characterization} characterizes all PBE for an arbitrary message mapping.

Our paper also contributes to the literature on disclosure with structured evidence. 
\textcite{hart2017} study evidence games, which are disclosure games with a particular evidence structure and sender payoff. They propose an equilibrium refinement of truth leaning and show that it selects the receiver's commitment outcome. Their key payoff condition, betweenness, is one of our existence conditions. 
We show in \cref{sec:comp-hkp} that in evidence games every truth-leaning equilibrium outcome is also a coalition-proof PBE outcome. Thus, coalition proofness provides an alternative sender-side foundation for focusing on receiver-optimal outcomes in these games.\footnote{Likewise, \textcite{gao2026} shows that a version of announcement proofness selects the truth-leaning outcome in a continuous-type version of \citeauthor{hart2017}'s evidence-game framework with selective data disclosure and posterior-mean sender payoffs.} 
\textcite{glazerrubinstein}, \textcite{sher2011,sher2014}, and \textcite{BenPorathEtAl2019} develop the commitment side of persuasion with evidence, and \textcite{hagenbachkoessler} study disclosure among many players. \textcite{callanderjpe} select the sender-optimal equilibrium in a verifiable-disclosure model of expert advice, and \textcite{perego} test selective disclosure experimentally.

Finally, our benchmarks connect disclosure to the values of neighboring communication models. With commitment, the sender's value is the concave closure of \textcite{kamenicagentzkow}; with pure cheap talk, it is the quasiconcave envelope of \textcite{lipnowskiravid}, which coalition proofness selects as the unique prediction in the cheap-talk corner of our model. The tent of the rich-message benchmark \parencite{titova} is the disclosure counterpart of these objects.

The rest of the paper proceeds as follows. \Cref{sec:model} lays out the setting. \cref{sec:coalitions} develops coalitions and partitions. \Cref{sec:pbe} characterizes PBE. \Cref{sec:cppbe} introduces coalition proofness and characterizes coalition-proof PBE as greedy partitions. \Cref{sec:existence} presents the existence results and shows that truth-leaning equilibria in evidence games are coalition-proof. \Cref{sec:rich-games} solves the rich-message benchmark and constructs the tent. \Cref{sec:conclusion} concludes. Proofs are provided in the appendix.

\section{Model}\label{sec:model}

There are two players, a sender and a receiver. First, the sender learns his type $\theta \in \Theta$ drawn from a common prior $\mu^0 \in \Delta \Theta$.\footnote{For a finite set $S$, $\Delta S$ denotes the set of probability distributions over $S$.} Then, he sends a message $m \in \Msg$ from a non-empty set $M(\theta)$, where $M: \Theta \to 2^{\Msg} \smallsetminus \{ \varnothing \}$ is the message mapping. Finally, the receiver updates her belief and acts, and payoffs are realized. We employ the belief-based approach and abstract away from the receiver's action. The sender's payoff primitive is a correspondence $V: \Delta \Theta \rightrightarrows \mathbb{R}$; its envelopes are $\overline{v}(\mu) := \max V(\mu)$ and $\underline{v}(\mu) := \min V(\mu)$. Throughout, we refer to the sender's payoff simply as ``the payoff.'' In all sections except for \cref{sec:rich-games}, we maintain the following assumption:
\begin{assumption}\label{assumption:game}
\begin{enumerate}[label=(A\arabic*), ref=(A\arabic*)]
    \item \label{A1:finite-sets} $\Theta$ and $\Msg$ are finite.
    \item \label{A2:full-support} $\mu^0$ has full support.
    \item \label{A3:message-coverage} $\Msg = \bigcup_{\theta \in \Theta} M(\theta)$.
    \item \label{A4:V-uhc} $V$ is upper hemicontinuous with non-empty compact interval values.
\end{enumerate}
\end{assumption}
Here, \cref{A3:message-coverage} imposes that every message is feasible for at least one type. \cref{A4:V-uhc} follows from Berge's maximum theorem under mild conditions.\footnote{For example, suppose the receiver's action space $A$ is a non-empty compact Hausdorff space, the sender's payoff $u_S: A \to \mathbb{R}$ is continuous, and for each $\theta \in \Theta$, the receiver's payoff $u_R(\cdot, \theta) : A \to \mathbb{R}$ is continuous, and let $V(\mu)$ be the convex hull of the sender's payoffs $u_S(a)$ over the receiver's best responses $a$ to $\mu$. Then, among other properties, $V$ is non-empty, compact-valued, convex-valued, and upper hemicontinuous (see \cref{lem:best-response} in the online appendix).} Under \cref{A4:V-uhc}, the upper envelope $\overline{v}$ is upper semicontinuous, the lower envelope $\underline{v}$ is lower semicontinuous, and $V(\mu) = [\underline{v}(\mu), \overline{v}(\mu)]$.

Our main object of study is a disclosure game defined as follows:
\begin{definition}\label{dfn:game}
    A \emph{disclosure game} $G$ is a tuple $(\Theta,\Msg,M,\mu^0,V)$ satisfying conditions \cref{A1:finite-sets}--\cref{A4:V-uhc}.
\end{definition}

\paragraph{Strategies.} A sender strategy in game $G$ is a map $\sigma: \Theta \to \Delta \Msg$ such that $\supp \sigma(\cdot \sep \theta) \subseteq M(\theta)$. The \emph{evidence} of $\sigma$ is the set of messages it uses, 
\[
    X(\sigma) := \bigcup_{\theta \in\Theta} \supp \sigma(\cdot \sep \theta).
\]
Message $m$ is \emph{on path} under $\sigma$ if
\[
    p_\sigma(m) := \sum_{\theta \in \Theta} \mu^0(\theta) \sigma(m \sep \theta) > 0;
\]
since $\mu^0$ has full support by \cref{A2:full-support}, the set of on-path messages is exactly $X(\sigma)$. For $m \in X(\sigma)$, the induced belief $\mu_\sigma(\cdot \sep m) \in \Delta \Theta$ is
\[
    \mu_\sigma (\theta \sep m) := \frac{\mu^0(\theta) \sigma(m \sep \theta)}{p_\sigma(m)}.
\]

\paragraph{Restricted games.} Let $\wT$ be a non-empty subset of $\Theta$. A restricted game on $\wT$ is
\[
    G \vert_{\wT} := (\wT, \Msg_{\wT}, M_{\wT}, \mu^0_{\wT}, V_{\wT}),
\]
where $\Msg_{\wT} := \bigcup_{\theta \in \wT} M(\theta) \subseteq \Msg$ is the set of all messages available to types in $\wT$; $M_{\wT}(\theta) := M(\theta)$, treated as a subset of $\Msg_{\wT}$; $\mu^0_{\wT} \in \Delta \wT$ is the conditional prior defined as
\[
    \mu^0_{\wT}(\theta) := \frac{\mu^0(\theta)}{ \mu^0(\wT) }, \text{ where } \mu^0(\wT) := \sum_{\theta \in \wT} \mu^0(\theta);
\]
and $V_{\wT}$ is the sender's payoff correspondence restricted to beliefs supported on $\wT$.\footnote{For finite sets $S \subseteq S'$, let $\iota_{S,S'}: \Delta S \to \Delta S'$ denote the zero-extension map from $S$ to $S'$: for each $\nu \in \Delta S$ and $x \in S'$, $\iota_{S,S'}(\nu)(x)=\nu(x)$ if $x \in S$ and $\iota_{S,S'}(\nu)(x)=0$ otherwise. Thus, $V_{\wT} = V \circ \iota_{\wT,\Theta}$.}

Any restricted game $G \vert_{\wT}$ is itself a disclosure game, $G \vert_{\Theta} = G$, and restriction is transitive: for non-empty $\wT' \subseteq \wT \subseteq \Theta$, we have $(G \vert_{\wT})\vert_{\wT '} = G \vert_{\wT '}$ (see \cref{lem:restriction} in the online appendix). All the notions and arguments below are described for an arbitrary disclosure game $G$ and therefore apply to every restricted game. When we consider a restricted game $G \vert_{\wT}$, we add the subscript $\wT$.

\paragraph{Useful notation.} Write $n:=|\Theta|$ for the number of types and $\delta_\theta\in\Delta\Theta$ for the point mass on $\theta$. Given a non-empty set of types $\wT \subseteq \Theta$ and messages $X \subseteq \Msg$, the set of types in $\wT$ that can send at least one message in $X$ is the preimage of $X$ relative to $\wT$,
\[
    M^{-1}_{\wT}(X) := \{ \theta \in \wT \sep M(\theta) \cap X \neq \varnothing \},
\]
which we abbreviate as $M^{-1}(X)$ when there is no possibility of confusion. We write $
    P(m) := M^{-1}(\{ m \})$
for the set of types that can send message $m$. Note that $P(m)$ is non-empty for each $m \in \Msg$ by \cref{A3:message-coverage}.
Since $\mu^0$ has full support, $\mu^0_{\wT}$ is supported exactly on $\wT$. We identify distributions on $\wT$, including posterior beliefs induced in $G\vert_{\wT}$, with their zero extensions to $\Delta\Theta$ whenever they are used in the original game; similarly, we identify distributions on $\Msg_{\wT}$ with their zero extensions to $\Delta\Msg$.

\section{Coalitions and Partitions}\label{sec:coalitions}

Fix a disclosure game $G = (\Theta,\Msg,M,\mu^0,V)$ and a non-empty set of types $C \subseteq \Theta$. A \emph{coalition strategy on $C$} is a sender strategy $\sigma$ of the restricted game $G \vert_C$. The posterior beliefs induced by $\sigma$ are the beliefs $\mu_\sigma(\cdot \sep m)$ already defined for strategies, which are computed in the restricted game $G\vert_C$ and zero-extended to $\Delta\Theta$ when used in the original game. Since game restriction is transitive, these beliefs depend only on $C$ and $\sigma$, not on which larger (restricted) game contains $C$.

\begin{definition}\label{dfn:coalition}
    A \emph{coalition} of a disclosure game $G$ is a triple $(C,\sigma,w)$ such that
    \begin{enumerate}[label=(C\arabic*), ref=(C\arabic*)]
    \item \label{C1} $C \subseteq \Theta$ is non-empty;
    \item \label{C2} $\sigma$ is a coalition strategy on $C$;
    \item \label{C3} the evidence of $\sigma$ is exclusive to $C$:
    \[
        M^{-1}(X(\sigma)) \subseteq C;
    \]
    \item \label{C4} every message in $X(\sigma)$ yields the same sender payoff:
    \[
        w \in V( \mu_\sigma (\cdot \sep m) ) \text{ for every } m \in X(\sigma).
    \]
\end{enumerate}
\end{definition}
A coalition consists of the participating types, their strategy, and their common payoff. The coalition payoff is constant across all messages in the coalition's evidence and hence across messages sent by each coalition type. Note that the reverse inclusion $C \subseteq M^{-1}(X(\sigma))$ holds automatically since each $\theta \in C$ uses messages from $M(\theta) \cap X(\sigma)$; consequently, condition \cref{C3} becomes $M^{-1}(X(\sigma)) = C$ and states that the types participating in the coalition are exactly the types that can produce the coalition's evidence. The beliefs in \cref{C4} are well-defined because every $m \in X(\sigma)$ is on path.

\begin{remark}\label{remark:coalitions-exist}
    Fix a disclosure game $G$ and a message $m \in \Msg$, and let $\sigma^m$ be a coalition strategy on $P(m)$ under which every type that has access to message $m$ sends it with probability one. Then for every $w \in V( \mu^0_{P(m)})$, the triple $(P(m),\sigma^m,w)$ is a coalition of $G$. In particular, every disclosure game admits a coalition.
\end{remark}

\begin{definition}
    A \emph{partition} of $G$ is a finite sequence $        \Pi = \{ (C_t,\sigma_t,w_t) \}_{t=1}^T$ 
    such that $C_1, \ldots, C_T$ are pairwise disjoint, non-empty, and $\Theta = \bigcup_{t=1}^T C_t$, and for each $t$, the triple $(C_t,\sigma_t,w_t)$ is a coalition in the residual game $G_t := G\vert_{R_t}$, where $R_t := \bigcup_{s=t}^T C_s$. 
\end{definition}

We refer to $t$ as the \emph{step} and $C_t$ as the \emph{cell}, and we write $X_t := X(\sigma_t)$ for the evidence used at step $t$. For a partition $\Pi$, we call $R_t$ the \emph{residual set} at step $t$. More generally, for any object $Y_R$ attached to the restricted game $G\vert_R$, we abbreviate $Y_{R_t}$ as $Y_t$; thus $G_t=G\vert_{R_t}$, $\Msg_t=\Msg_{R_t}$, and so on. 

All partitions arise from the recursion formalized by the following algorithm:
\begin{algorithm}[ht!]
    \caption{Partition Algorithm}\label{alg:partition}
    Let $t := 1$ and $R_1 := \Theta$\;
    \While{$R_t \neq \varnothing$}{
        select a coalition $(C_t,\sigma_t,w_t)$ of $G_t$\;
        let $R_{t+1} := R_t \smallsetminus C_t$ and $t := t+1$\;
        \medskip
    }
\end{algorithm}

A \emph{run} of the algorithm is one of its possible executions. It has one selection per step. The algorithm \emph{terminates} when the type space is exhausted; its \emph{output} is the selected sequence $\{ (C_t,\sigma_t,w_t) \}_{t=1}^T$. At every step of every run, a selection exists by \cref{remark:coalitions-exist}, each step removes a non-empty cell (hence at least one type), and each output satisfies the definition of a partition. 
\begin{remark}\label{remark:partitions-exist}
    Every disclosure game admits a partition, and every partition has at most $n$ steps.
\end{remark}

By transitivity of game restriction, $G_t\vert_{C_t} = G\vert_{C_t}$, so $\sigma_t$ is simply a coalition strategy on $C_t$ with beliefs $\mu_{\sigma_t}$ that do not depend on the larger game in which the coalition is viewed. Similarly, the common-payoff condition \cref{C4} of the step $t$ coalition is the same whether viewed in $G_t\vert_{C_t}$, $G\vert_{C_t}$, or $G$. The only condition that uses the residual set $R_t$ is exclusivity \cref{C3}: no type in $R_t \smallsetminus C_t$ can produce evidence in $X_t$. However, types in $C_1 \cups C_{t-1}$ may be able to produce such evidence.

\begin{definition}
    Given a partition $\Pi = \{ (C_t,\sigma_t,w_t) \}_{t=1}^T$ of $G$, its associated sender strategy $\sigma^{\Pi}: \Theta \to \Delta \Msg$ is obtained by letting every $\theta \in C_t$ use $\sigma_t(\cdot \sep \theta)$ zero-extended to $\Delta \Msg$. A sender strategy is a \emph{partition strategy} if it is associated with some partition. 
\end{definition}
Note that $\sigma^\Pi$ is indeed a sender strategy of $G$: if $\theta\in C_t$, then
\[
    \supp\sigma^\Pi(\cdot\sep\theta)
    =
    \supp\sigma_t(\cdot\sep\theta)
    \subseteq M(\theta).
\]
Note, too, that distinct steps use disjoint evidence. If, on the contrary, $m\in X_s\cap X_t$ with $s<t$, then some type in $C_t\subseteq R_s$ can send a message in $X_s$, contradicting the exclusivity of step $s$ coalition in $G_s$. Thus, whenever
$m\in X_t$, only types in $C_t$ send $m$ under $\sigma^\Pi$; in particular, $m$ is on path.

\section{Perfect Bayesian Equilibria}\label{sec:pbe}

To define and characterize PBE of disclosure games, we introduce the concepts of feasible beliefs, skeptical payoffs, and individually rational partitions.

\begin{definition}\label{dfn:set-feasible-beliefs}
    For message $m \in \Msg$, the \emph{set of feasible beliefs} is
    \[
        \mathcal{F}(m) := \{ \mu_\sigma (\cdot \sep m) \sep \sigma \text{ is a sender strategy of } G \text{ with } m \in X(\sigma) \}.
    \]
\end{definition}
Given message $m$, feasible beliefs are those that arise, via Bayes' rule, under some sender strategy that uses that message as evidence. In other words, these are the beliefs that are consistent with disclosure: if message $m$ cannot be sent by some types, then all feasible beliefs have zero mass on those types; if some type can only send message $m$, then, among other things, all feasible beliefs after that message have positive mass on that type. Each $\mathcal{F}(m)$ is a non-empty compact convex polytope parametrized explicitly by the mixing weights of the types that can send $m$ (see \cref{lem:feasible-polytope} in the online appendix).

\begin{definition}\label{dfn:skeptical-payoff}
    For a type $\theta \in \Theta$, the \emph{skeptical payoff} is
    \[
        \underline{u}(\theta) := \max_{m \in M(\theta)} \min_{\mu \in \mathcal{F}(m)} \underline{v}(\mu).
    \]
\end{definition}
 The skeptical payoff is the payoff type $\theta$ obtains from his best message when, at every message, the receiver holds the least favorable feasible belief and breaks ties against the sender.\footnote{The maximum and minimum are attained because $\underline{v}$ is lower semicontinuous and $\mathcal{F}(m)$ is non-empty and compact (see \cref{lem:feasible-polytope} in the online appendix).} Throughout, the skeptical payoff $\underline{u}(\theta)$ is the one computed in $G$ itself. We say that a partition is individually rational if every type receives at least the skeptical payoff.

\begin{definition}\label{dfn:IR}
    A partition $\Pi = \{ (C_t,\sigma_t,w_t) \}_{t=1}^T$ of $G$ is \emph{individually rational (IR)} if
    \begin{equation}\label{eqn:IR}
        w_t \geq \underline{u}(\theta) \quad \text{ for every } t \text{ and every } \theta \in C_t. \tag{IR}
    \end{equation}
\end{definition}

Next we define PBE strategies and equilibrium.

\begin{definition}\label{dfn:PBE}
    A sender strategy $\sigma: \Theta \to \Delta \Msg$ is a \emph{PBE strategy} if there exist a belief system $\mu: \Msg \to \Delta \Theta$ and a sender payoff function $r: \Msg \to \mathbb{R}$ such that
    \begin{itemize}
        \item[1.] beliefs are feasible:
        \item[] \quad \quad \quad\quad \quad \quad \quad \quad \quad $\mu(\cdot \sep m) \in \mathcal{F}(m)$ for every $m \in \Msg$;
        \item[2.] beliefs are Bayesian on path:
            \item[] \quad \quad\quad \quad \quad\quad \quad \quad $\mu(\cdot \sep m) = \mu_\sigma(\cdot \sep m)$   for every $m \in X(\sigma)$;
        \item[3.] the realized payoff is compatible with the payoff correspondence:
\item[] \, \quad \quad\quad \quad \quad\quad \quad \quad $r(m) \in V(\mu(\cdot \sep m))$  for every $m \in \Msg$;
        \item[4.] the sender is sequentially rational:
\item[] \quad \quad \quad\quad \quad \quad$\supp \sigma(\cdot \sep \theta) \subseteq \arg\max_{m \in M(\theta)} r(m) $  for every $\theta \in \Theta$.
    \end{itemize}


    A triple $(\sigma,\mu,r)$ satisfying these four conditions is a \emph{PBE}.
\end{definition}
The first condition requires that even the off-path beliefs are feasible, that is, consistent with disclosure/rationalizable by some sender strategy. The function $r$ incorporates the receiver's tie-breaking and mixing: after each message $m$, the receiver selects the sender's payoff from the set $V(\mu(\cdot\sep m))$.
Every disclosure game admits a PBE (see \cref{lem:pbe-existence} in the online appendix).

If $(\sigma,\mu,r)$ is a PBE, then sequential rationality makes $r$ constant on $\supp \sigma(\cdot \sep \theta)$, equal to type $\theta$'s equilibrium payoff in that PBE, $u(\theta) := \max_{m \in M(\theta)} r(m)$.

The PBE characterization below rests on two simple observations. 
First, the partition strategy preserves the beliefs generated by its cells: if $m\in X_t$, then under $\sigma^\Pi$ only types in $C_t$ send $m$, so the Bayes posterior after $m$ is the posterior generated by $\sigma_t$ in $G\vert_{C_t}$, viewed as a belief in $\Delta\Theta$. Second, skeptical payoffs are the relevant off-path benchmark.
In any PBE, type $\theta$ receives at least $\underline u(\theta)$; conversely, after any unused message, the receiver can select a feasible belief and a compatible payoff that give no type more than its skeptical payoff. 
\begin{proposition}\label{prop:pbe-characterization}
    A sender strategy $\sigma$ is a PBE strategy if and only if $\sigma$ is associated with an IR partition $\Pi = \{ (C_t,\sigma_t,w_t) \}_{t=1}^T$ with $w_1 \geq \ldots \geq w_T$. 
    Furthermore, every PBE strategy is associated with an IR partition with strictly decreasing payoffs $w_1 > \ldots > w_T$.
\end{proposition}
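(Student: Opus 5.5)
We prove the two directions separately and then refine the ordering to strict.

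\emph{Sufficiency.} Let $\Pi$ be an IR partition with $w_1\ge\dots\ge w_T$ and $\sigma=\sigma^\Pi$. Define beliefs and payoffs on path as follows. For $m\in X_t$, set $\mu(\cdot\sep m)=\mu_{\sigma_t}(\cdot\sep m)$ and $r(m)=w_t$. This belief is Bayesian, since only $C_t$ sends $m$ (the evidence sets are disjoint, as noted before the proposition). It is also feasible, and $w_t\in V(\mu(\cdot\sep m))$ by \cref{C4}.

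For an unused message $m$, choose $\mu(\cdot\sep m)\in\arg\min_{\nu\in\mathcal F(m)}\underline v(\nu)$ and $r(m)=\underline v(\mu(\cdot\sep m))$. By the definition of $\underline u$, every $\theta$ with $m\in M(\theta)$ then gets $r(m)\le\underline u(\theta)$.

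It remains to check sequential rationality for $\theta\in C_t$ facing $m\in M(\theta)$.
\begin{itemize}
\item If $m$ is off path, then $r(m)\le \underline u(\theta)\le w_t$ by IR.
\item If $m\in X_s$ with $s>t$, then $r(m)=w_s\le w_t$ by monotonicity.
\item If $m\in X_s$ with $s<t$: since $\theta$ can send $m$, we have $\theta\in M^{-1}(X_s)$. Because $\theta\in R_t\subseteq R_s$, exclusivity of step $s$ in $G_s$ forces $\theta\in C_s$, a contradiction. So this case cannot occur.
\end{itemize}
Hence messages in $X_t$ are optimal for $\theta$.

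\emph{Necessity.} Let $(\sigma,\mu,r)$ be a PBE with type payoffs $u(\theta)$, which take finitely many values $w_1>\dots>w_T$. Let $C_t=\{\theta:u(\theta)=w_t\}$ and let $\sigma_t$ be the restriction of $\sigma$ to $C_t$. We check that $(C_t,\sigma_t,w_t)$ is a coalition of $G\vert_{R_t}$.

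First, a message $m$ sent by some $\theta\in C_t$ has $r(m)=w_t$. If another type $\theta'$ also sends $m$, then $u(\theta')=w_t$, so $\theta'\in C_t$. Hence the posterior generated by $\sigma_t$ on $C_t$ equals $\mu_\sigma(\cdot\sep m)$, and \cref{C4} follows from condition 3 of \cref{dfn:PBE}.

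Next, exclusivity \cref{C3}. Suppose $\theta'\in R_t\setminus C_t$ can send $m\in X(\sigma_t)$. Then $\theta'$ could obtain $w_t>u(\theta')$ by sending $m$, contradicting sequential rationality. Hence $M^{-1}_{R_t}(X(\sigma_t))\subseteq C_t$.

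Finally, IR. For each type $\theta$, pick the message $m$ attaining the max in $\underline u(\theta)$. Since $\mu(\cdot\sep m)\in\mathcal F(m)$ by condition 1, we get
\[
u(\theta)\ge r(m)\ge\underline v(\mu(\cdot\sep m))\ge\min_{\nu\in\mathcal F(m)}\underline v(\nu)=\underline u(\theta).
\]
This yields an IR partition with strictly decreasing payoffs associated with $\sigma$, proving both the converse direction and the ``furthermore'' clause.

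The main subtlety is bookkeeping rather than difficulty. It lies in the two exclusivity arguments: using residual sets correctly in sufficiency (messages of earlier cells are unavailable to later types), and in necessity, grouping types by payoff so that pooled messages never mix cells. Feasibility of off-path beliefs relies on $\mathcal F(m)$ being nonempty and compact, which is assumed from the online appendix lemma.
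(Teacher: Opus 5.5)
Your proof is correct and follows essentially the same route as the paper. For necessity, both group types by equilibrium payoff and verify exclusivity, the common-payoff condition and IR in the same way; for sufficiency, both support an IR partition with skeptical minimizing off-path beliefs and use the same case analysis for sequential rationality.
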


Intuitively, in any PBE, we can group sender types by payoff in a strictly decreasing order and construct a coalition for each of those payoffs. Conversely, any IR partition with non-increasing payoffs can be supported as a PBE: on-path messages receive the cell payoff, while off-path messages are assigned skeptical feasible beliefs and payoffs. A type cannot profitably deviate off path because its cell payoff is at least its skeptical payoff; it cannot profitably deviate to earlier evidence because that evidence is unavailable in the relevant residual game; and it does not wish to deviate to later evidence because later cells receive weakly lower payoffs.

Call an IR partition with non-increasing payoffs a \emph{PBE partition}. \cref{alg:pbe-partition} returns all PBE partitions: it is simply the partition algorithm that imposes PBE restrictions (i.e., IR and non-increasing payoffs) at each step.

\begin{algorithm}[ht!]
    \caption{PBE Partition Algorithm}\label{alg:pbe-partition}
    Let $t := 1$, $R_1 := \Theta$, and $w_0 := \infty$\;
    \While{$R_t \neq \varnothing$}{
        select a coalition $(C_t,\sigma_t,w_t)$ of $G_t$ with
        $w_t \in \left[\max_{\theta\in R_t}\underline{u}(\theta),\,w_{t-1}\right]$\;
        \textbf{halt} with no output if there is none\;
        let $R_{t+1} := R_t \smallsetminus C_t$ and $t := t+1$\;
        \medskip
    }
\end{algorithm}

Note that \cref{alg:pbe-partition} uses a stronger (but ultimately equivalent) version of IR: at step $t$, it imposes $w_t\geq\max_{\theta\in R_t}\underline{u}(\theta)$, rather than $w_t\geq\max_{\theta\in C_t}\underline{u}(\theta)$.
For partitions with non-increasing payoffs, this stronger condition does not change the set of resulting partitions, as shown in \cref{lem:strong-IR} below. 
The stronger condition is easier to work with recursively: the admissible interval $\bigl[\max_{\theta\in R_t}\underline{u}(\theta),\,w_{t-1}\bigr]$ depends only on the residual set $R_t$ and the previous cell's payoff $w_{t-1}$, not on which candidate coalition $(C_t,\sigma_t,w_t)$ is being considered.
\begin{lemma}\label{lem:strong-IR}
    Let $\Pi=\{(C_t,\sigma_t,w_t)\}_{t=1}^T$ be a partition of $G$ with $w_1\geq\cdots\geq w_T$. Then $\Pi$ is IR if and only if
    $ w_t\geq \max_{\theta\in R_t}\underline{u}(\theta)$ for every $t$.
\end{lemma}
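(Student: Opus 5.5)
Both directions follow directly from the definitions, using only that the cells partition $\Theta$ and that the residual sets are nested, $R_t = C_t \cup C_{t+1} \cup \cdots \cup C_T$. The monotonicity of payoffs $w_1 \geq \cdots \geq w_T$ does all the work in the nontrivial direction.

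\emph{Sufficiency.} Suppose $w_t \geq \max_{\theta \in R_t}\underline{u}(\theta)$ for every $t$. Since $C_t \subseteq R_t$, we immediately get $w_t \geq \max_{\theta\in C_t}\underline{u}(\theta) \geq \underline{u}(\theta)$ for every $\theta \in C_t$. This is exactly \eqref{eqn:IR}, and it does not use the ordering of payoffs.

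\emph{Necessity.} Suppose $\Pi$ is IR with $w_1 \geq \cdots \geq w_T$. Fix a step $t$ and any $\theta \in R_t$. Because the cells are disjoint and cover $R_t$, there is a unique $s \geq t$ with $\theta \in C_s$. IR applied at step $s$ gives $\underline{u}(\theta) \leq w_s$, and monotonicity gives $w_s \leq w_t$ since $s \geq t$. Hence $\underline{u}(\theta) \leq w_t$. Taking the maximum over the finite nonempty set $R_t$ yields $w_t \geq \max_{\theta \in R_t}\underline{u}(\theta)$.

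There is no real obstacle. The only points to state explicitly are these:
\begin{itemize}
\item $\underline{u}$ is always computed in the original game $G$, not in a residual game, so the same number $\underline{u}(\theta)$ appears in both formulations.
\item The maximum over $R_t$ exists because $R_t$ is finite and nonempty.
\end{itemize}
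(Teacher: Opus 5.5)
Your proof is correct and matches the paper's argument essentially line for line: the reverse direction uses $C_t \subseteq R_t$, and the forward direction places $\theta \in R_t$ in some $C_s$ with $s \geq t$ and chains IR with non-increasing payoffs, $\underline{u}(\theta) \leq w_s \leq w_t$.
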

\begin{proof}
    $(\Longleftarrow)$ is immediate as $C_t\subseteq R_t$. Conversely, if $\Pi$ is IR and $\theta\in R_t$, then $\theta\in C_s$ for some $s\geq t$, so $\underline{u}(\theta)\leq w_s\leq w_t$ by IR and non-increasing payoffs.
\end{proof}

Unlike the unrestricted partition algorithm, \cref{alg:pbe-partition} may halt because no admissible coalition payoff is available at some step. If it terminates, its output is a PBE partition. Since every disclosure game admits a PBE, at least one run of \cref{alg:pbe-partition} terminates and returns an associated PBE partition.

\section{Coalition-Proof PBE}\label{sec:cppbe}

Fix a disclosure game $G=(\Theta,\Msg,M,\mu^0,V)$. For a partition $\Pi=\{(C_t,\sigma_t,w_t)\}_{t=1}^T$ of $G$ and a payoff $\widetilde w\in\mathbb{R}$, let
\[
    \Theta(\widetilde w)
    :=
    \bigcup_{t\,:\,w_t<\widetilde w} C_t
\]
denote the set of types whose payoff in the partition is strictly below $\widetilde w$.

\begin{definition}\label{dfn:blocking}
    Let $\Pi$ be a partition of $G$ and $\widetilde w\in\mathbb{R}$.
    \begin{enumerate}
        \item[(i)] $(\widetilde C,\widetilde\sigma,\widetilde w)$ is a
        \emph{blocking coalition} of $\Pi$ if
        $\Theta(\widetilde w)\neq\varnothing$ and
        $(\widetilde C,\widetilde\sigma,\widetilde w)$ is a coalition of
        the restricted game $G\vert_{\Theta(\widetilde w)}$.
        \item[(ii)] $\Pi$ is \emph{coalition-proof} if it admits no
        blocking coalition.
    \end{enumerate}
    A \emph{coalition-proof PBE partition} is a PBE partition that is
    coalition-proof. A sender strategy is a \emph{coalition-proof PBE
    strategy} if it is associated with some coalition-proof PBE partition.
\end{definition}

Because the payoff correspondence $V$ may be multi-valued, the realized payoffs $w_t$ are part of the equilibrium outcome rather than pinned down by the strategy; coalition proofness is accordingly a property of the partition.

A blocking coalition is an announced deviation: the types in $\widetilde C$ declare that they will switch to strategy $\widetilde\sigma$. Under the beliefs induced by this announcement, some selection of the receiver's best responses gives every participant payoff $\widetilde w$, strictly above his equilibrium payoff.
The restriction to $G\vert_{\Theta(\widetilde w)}$ captures credibility: only types whose partition payoff is strictly below $\widetilde w$ will want to join or imitate the announcement.
Among those types, the announced evidence must be exclusive to the deviators and yield the common payoff $\widetilde w$. 

Coalition proofness allows the announced deviation to involve multiple messages at once since it may involve a mixed strategy. Furthermore, the announced deviation may include messages that are already on the equilibrium path. This causes no ambiguity about who deviates: if the partition has non-increasing payoffs, as every PBE partition does, then every on-path user of such a message participates in the announced deviation.

Because the greedy selection defined below picks, at each step, the largest available coalition payoff, we introduce notation for the payoffs that coalitions of a restricted game can attain. For any non-empty $R\subseteq\Theta$, write
\[
    \W_R
    :=
    \{w\in\mathbb{R}\sep
    (C,\sigma,w)\text{ is a coalition of }G\vert_R
    \text{ for some }C\subseteq R\text{ and some }\sigma\},
\]
which denotes the set of coalition payoffs attainable in the restricted game $G\vert_R$. This set is non-empty and compact (see \cref{lem:coalition-compact} in the appendix).

\begin{definition}\label{dfn:greedy}
    A partition $\Pi=\{(C_t,\sigma_t,w_t)\}_{t=1}^T$ of $G$ is
    \emph{greedy} if, for every $t$ (with $w_0:=\infty$),
    \[
        w_t
        =
        \max\Bigl(
        \W_t
        \cap
        \bigl[\max_{\theta\in R_t}\underline{u}(\theta),\,
        w_{t-1}\bigr]
        \Bigr),
    \]
    where $\W_t$ is the coalition payoff set of $G_t$.
\end{definition}

Greedy partitions are outputs of \cref{alg:greedy-partition} defined below. The greedy partition algorithm is essentially the PBE partition algorithm with a stronger selection rule: at each residual game $G_t$, the payoff must satisfy the PBE bounds $\max_{\theta\in R_t}\underline{u}(\theta)\leq w_t\leq w_{t-1}$, and among the payoffs in $\W_t$ satisfying them it must be the largest. If no such payoff exists, the run halts.

\begin{algorithm}[ht!]
    \caption{Greedy Partition Algorithm}\label{alg:greedy-partition}
    Let $t := 1$, $R_1 := \Theta$, and $w_0 := \infty$\;
    \While{$R_t \neq \varnothing$}{
        select a coalition $(C_t,\sigma_t,w_t)$ of $G_t$ with
        $w_t=\max\bigl(\W_t\cap\bigl[\max_{\theta\in R_t}\underline{u}(\theta),\,w_{t-1}\bigr]\bigr)$\;
        \textbf{halt} with no output if there is none\;
        let $R_{t+1} := R_t \smallsetminus C_t$ and $t := t+1$\;
        \medskip
    }
\end{algorithm}

\begin{proposition}\label{prop:cppbe-characterization}
    A partition of $G$ is a coalition-proof PBE partition if and only if it is greedy. Consequently, a sender strategy is a coalition-proof PBE strategy if and only if it is associated with some greedy partition.
\end{proposition}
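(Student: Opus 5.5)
The plan is to reduce everything to one structural observation. In a partition with non-increasing payoffs, the deviation domain $\Theta(\widetilde w)$ is always a residual set $R_s$, and the step $s$ is pinned down by where $\widetilde w$ falls in the payoff sequence. With that in hand, blocking coalitions and violations of the greedy rule become the same object. Throughout I take $w_0:=\infty$.

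\emph{Step 1 (identifying $\Theta(\widetilde w)$).} Let $\Pi$ be a partition with $w_1\geq\cdots\geq w_T$, and let $\widetilde w$ be such that $\Theta(\widetilde w)\neq\varnothing$. Set $s:=\min\{t: w_t<\widetilde w\}$. By monotonicity, $w_t<\widetilde w$ holds exactly for $t\geq s$. Hence
\[
\Theta(\widetilde w)=\bigcup_{t\geq s}C_t=R_s,
\]
and minimality of $s$ gives $w_s<\widetilde w\leq w_{s-1}$. Conversely, suppose $w_t<\widetilde w\leq w_{t-1}$ for some $t$. Then the minimal index is exactly $t$, so $\Theta(\widetilde w)=R_t$. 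Consequently, a triple $(\widetilde C,\widetilde\sigma,\widetilde w)$ is a blocking coalition of $\Pi$ if and only if there is a step $s$ with $w_s<\widetilde w\leq w_{s-1}$ such that the triple is a coalition of $G_s$, that is, $\widetilde w\in\W_s\cap(w_s,w_{s-1}]$. This tie-handling is the only delicate point, and I expect it to be the main place where care is needed. Cells with payoff equal to $\widetilde w$ are excluded from $\Theta(\widetilde w)$, and that exclusion is exactly what makes the upper bound $w_{s-1}$ weak rather than strict.

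\emph{Step 2 (greedy $\Rightarrow$ coalition-proof PBE).} A greedy partition has $w_t\in[\max_{\theta\in R_t}\underline u(\theta),w_{t-1}]$ for every $t$. It therefore has non-increasing payoffs and is IR by \cref{lem:strong-IR}, so it is a PBE partition. Now suppose it admitted a blocking coalition with payoff $\widetilde w$. By Step 1 there is a step $s$ with $\widetilde w\in\W_s$ and $w_s<\widetilde w\leq w_{s-1}$. Moreover, $\widetilde w>w_s\geq\max_{\theta\in R_s}\underline u(\theta)$. So $\widetilde w$ lies in $\W_s\cap[\max_{\theta\in R_s}\underline u(\theta),w_{s-1}]$ and exceeds $w_s$. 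This contradicts the greedy choice of $w_s$ as the maximum of that set.

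\emph{Step 3 (coalition-proof PBE $\Rightarrow$ greedy).} Let $\Pi$ be a coalition-proof PBE partition. By \cref{lem:strong-IR}, $w_t$ belongs to $\W_t\cap[\max_{\theta\in R_t}\underline u(\theta),w_{t-1}]$ for every $t$. This set is compact, since $\W_t$ is compact by \cref{lem:coalition-compact}, so its maximum exists. Suppose $w_t$ were not that maximum. Then some $w'$ in the set satisfies $w'>w_t$, and $w'$ is the payoff of some coalition $(C',\sigma',w')$ of $G_t$. Since $w_t<w'\leq w_{t-1}$, Step 1 gives $\Theta(w')=R_t\neq\varnothing$. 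Hence $(C',\sigma',w')$ is a coalition of $G\vert_{\Theta(w')}$, that is, a blocking coalition, contradicting coalition proofness. So $\Pi$ is greedy.

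The statement about strategies then follows directly from the definition of a coalition-proof PBE strategy.
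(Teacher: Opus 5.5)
Your proof is correct and takes essentially the same route as the paper's. Both arguments use non-increasing payoffs to identify $\Theta(\widetilde w)$ with the residual set $R_\tau$, where $\tau$ is the first step whose payoff falls strictly below $\widetilde w$, and then show that a blocking coalition is exactly a coalition payoff in the greedy constraint set at step $\tau$ that exceeds $w_\tau$. The only difference is that you state this identification once, in both directions, as Step~1, whereas the paper redoes it inline in each half of the proof.
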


\cref{prop:cppbe-characterization} says that the no-blocking condition is exactly the greedy selection rule.
Suppose that at step $t$ the residual game $G_t$ contains a coalition that can do better than the payoff actually assigned at that step without exceeding the payoff of the previous cell.
Such a payoff is attractive exactly to the types remaining in the residual set: earlier cells already receive at least as much, whereas every type in $R_t$ receives weakly less than $w_t$ and hence strictly less than the proposed payoff. 
This coalition is therefore a credible announced deviation, hence a blocking coalition.
Greediness rules out precisely these deviations by assigning, at each residual game, the largest coalition payoff compatible with the IR lower bound and the previous cell's payoff cap.

As might be expected from the literature on neologism proofness, coalition-proof PBE do not always exist, meaning that \cref{alg:greedy-partition} may halt no matter what choices are made at each step. However, as we show next, they exist under relatively weak conditions, covering many settings studied previously in the disclosure literature.

\section{Existence of Coalition-Proof PBE}\label{sec:existence}

Fix a disclosure game $G=(\Theta,\Msg,M,\mu^0,V)$. By \cref{prop:cppbe-characterization}, a coalition-proof PBE exists if and only if some run of \cref{alg:greedy-partition} terminates. This section gives sufficient conditions for existence: first on the upper envelope $\overline{v}$ and the message mapping $M$ (\cref{sec:existence-qc-mc}), then for the single-valued payoff only (\cref{sec:existence-betweenness}), then for the message mapping only (\cref{sec:existence-cheaptalk}), and finally through payoff degradation, which constrains only the lower envelope under free disposal (\cref{sec:existence-disposal}). \Cref{sec:existence-tightness} concludes the section by showing that a coalition-proof PBE always exists if $n=2$ and providing a minimal example of non-existence with three types.

The first three existence conditions have a common mechanism: they produce runs of \cref{alg:greedy-partition} along which the residual maximum $\max\W_t$ is non-increasing, so a coalition is available at each step. 
The fourth existence condition, which we call payoff degradation, ensures that we can replicate residual coalitions with ones that have the same participants but a lower payoff; a coalition is then available at each step even if the residual maximum strictly increases.

Some of our results establish the essential uniqueness of coalition-proof PBE under the following genericity condition, which requires distinct type sets to yield distinct pooling payoffs\footnote{Genericity as given in \cref{dfn:generic} is needed in \cref{sec:existence-qc-mc,sec:existence-betweenness}, in which structural restrictions such as quasiconcavity and/or quasiconvexity are imposed on the upper envelope $\overline{v}$ of the sender's payoff correspondence. Functions of either kind are not closed under addition and hence do not form a vector space, so measure-theoretic notions of genericity such as prevalence \parencite{huntsaueryorke} do not apply. We therefore use injectivity, which is topologically generic: within each such class, under the $\|\cdot\|_\infty$ metric, the envelopes with injective $C\mapsto\overline{v}(\mu^0_C)$ form an open dense set, since $\Theta$ is finite.}:
\begin{definition}\label{dfn:generic}
    Game $G$ is \emph{generic} if the map $C\mapsto\overline{v}(\mu^0_C)$ is injective on non-empty subsets $C\subseteq\Theta$.
\end{definition}

\subsection{Quasiconcavity and Message Completeness}\label{sec:existence-qc-mc}

The first set of conditions that guarantee existence consists of quasiconcavity of the upper envelope $\overline{v}$ and completeness of the message mapping.

\begin{definition}\label{dfn:qc}
    The upper envelope $\overline{v}$ is \emph{quasiconcave} (QC) if, for all $\mu,\mu'\in\Delta\Theta$ and $\lambda\in(0,1)$,
    \[
        \overline{v}(\lambda\mu+(1-\lambda)\mu')
        \geq
        \min\{\overline{v}(\mu),\overline{v}(\mu')\},
    \]
    and it is \emph{strictly quasiconcave} (QC$^*$) if the inequality is strict whenever $\mu\neq\mu'$.
\end{definition}

\begin{definition}\label{dfn:mc}
    The message mapping is \emph{complete} (M-C) if for all $m,m'\in\Msg$ there exists $m''\in\Msg$ with
    $  P(m'')=P(m)\cup P(m')$,
    where $P(m) = M^{-1}(\{m\})$ is the set of types that can send $m$.
\end{definition}

Completeness of the message mapping, as in \cite{bertomeucianciaruso}, requires that the family $\{P(m)\sep m\in\Msg\}$ of all type sets that can be expressed by the message mapping is closed under unions.
That is, if some message means ``My type is in $A$'' and another means ``My type is in $B$,'' then some message means ``My type is in $A$ or $B$.''
Roughly speaking, message completeness ensures that if a group of senders can form a coalition using multiple messages, then they can also form a coalition using a single message. 

Under these conditions, the following lemma lets us restrict our attention to ``pooling'' coalitions:

\begin{lemma}\label{lem:pooling}
     If $\overline{v}$ is quasiconcave and the message mapping is complete, then for every coalition $(C,\sigma,w)$ of $G$, there is a message $m^*$ with $P(m^*)=C$, and for any such $m^*$, the pooling triple $(C,\sigma^{m^*},\overline{v}(\mu^0_C))$ is a coalition of $G$ with
    $        \overline{v}(\mu^0_C)\geq w.
    $
    
    Consequently, the largest coalition payoff in the game is the largest pooling payoff,
    \[
        \max\W_\Theta=\max_{m\in\Msg}\overline{v}(\mu^0_{P(m)}),
    \]
    and every coalition attaining $\max\W_\Theta$ has a type set $C$ with $\overline{v}(\mu^0_C)=\max\W_\Theta$.
\end{lemma}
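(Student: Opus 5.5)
The proof has three parts: find the message by completeness, check that the pooling triple is a coalition, and bound its payoff using quasiconcavity applied to the Bayes decomposition of the conditional prior.

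\emph{Existence of $m^*$.} Let $(C,\sigma,w)$ be a coalition of $G$ and write $X:=X(\sigma)$. By \cref{C3} and the remark after \cref{dfn:coalition}, $C=M^{-1}(X)=\bigcup_{m\in X}P(m)$. Since $X$ is finite and non-empty, applying (M-C) repeatedly to the messages in $X$ gives a message $m^*$ with $P(m^*)=\bigcup_{m\in X}P(m)=C$.

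\emph{The pooling triple is a coalition.} Fix any $m^*$ with $P(m^*)=C$. By \cref{remark:coalitions-exist}, $(P(m^*),\sigma^{m^*},w')$ is a coalition of $G$ for every $w'\in V(\mu^0_C)$. In particular this holds for $w'=\overline{v}(\mu^0_C)$, because $V$ is interval-valued with maximum $\overline{v}$.

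\emph{The payoff bound.} Under $\sigma$, every type in $C$ sends some message in $X$, so the Bayes decomposition gives
\[
    \mu^0_C=\sum_{m\in X}p_\sigma(m)\,\mu_\sigma(\cdot\sep m),
\]
with $p_\sigma$ computed in $G\vert_C$ and the weights summing to one. By \cref{C4}, $w\le\overline{v}(\mu_\sigma(\cdot\sep m))$ for each $m\in X$. Quasiconcavity of $\overline{v}$ extends by induction from two points to finite convex combinations: $\overline{v}\bigl(\sum_i\lambda_i\mu_i\bigr)\ge\min_i\overline{v}(\mu_i)$. To see this, split off one point and renormalize the remaining weights. Hence
\[
    \overline{v}(\mu^0_C)\ge\min_{m\in X}\overline{v}(\mu_\sigma(\cdot\sep m))\ge w.
\]
The only care needed here is this finite-combination extension and the identification of beliefs in $\Delta C$ with their zero extensions, both of which are routine.

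\emph{The consequences.} Any $w\in\W_\Theta$ is at most $\overline{v}(\mu^0_{P(m^*)})$ for some message $m^*$, so $\max\W_\Theta\le\max_m\overline{v}(\mu^0_{P(m)})$. Conversely, each pooling payoff lies in $\W_\Theta$ by \cref{remark:coalitions-exist}, which gives equality. Finally, take a coalition attaining $w=\max\W_\Theta$ with type set $C$. By the argument above, $\overline{v}(\mu^0_C)$ is itself a pooling payoff in $\W_\Theta$, so $\max\W_\Theta=w\le\overline{v}(\mu^0_C)\le\max\W_\Theta$, and therefore $\overline{v}(\mu^0_C)=\max\W_\Theta$.
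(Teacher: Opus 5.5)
Your proposal is correct and follows essentially the same route as the paper: you iterate completeness over $X(\sigma)$ to get $m^*$ with $P(m^*)=C$, obtain the pooling coalition from \cref{remark:coalitions-exist}, and apply iterated quasiconcavity to the Bayes decomposition of $\mu^0_C$ together with \cref{C4}. Your explicit sandwich argument for the last claim about payoff-maximal coalitions spells out a step that the paper leaves implicit.
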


Note that quasiconcavity of $\overline{v}$ and completeness of the message mapping are both inherited by every restricted game. Hence \cref{lem:pooling} applies to each restricted game as well.
Under these two conditions a coalition-proof PBE exists, and under genericity it is essentially unique.

\begin{theorem}\label{thm:one}
    If $\overline{v}$ is quasiconcave and the message mapping is complete, then a coalition-proof PBE exists. 
    If, in addition,
    \begin{enumerate}
        \item[(i)] $\overline{v}$ is strictly quasiconcave or the game is generic, then no run of \cref{alg:greedy-partition} halts, and every run produces a coalition-proof PBE partition. A partition of $G$ is a coalition-proof PBE partition if and only if $w_t=\max\W_t$ and $w_t\leq w_{t-1}$ for every $t$, where $w_0:=\infty$.
        \item[(ii)] the game is generic, then the coalition-proof PBE partition is essentially unique: any two coalition-proof PBE partitions of $G$ have the same length $T$, the same cells $C_1,\dots,C_T$, and the same payoffs $w_1,\dots,w_T$.
    \end{enumerate}
\end{theorem}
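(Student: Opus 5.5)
The plan is to build a run of \cref{alg:greedy-partition} that never halts. The mechanism is the one announced at the start of \cref{sec:existence}: along this run the residual maximum $\max\W_t$ is non-increasing, and it always weakly exceeds the IR floor.

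We use \cref{lem:pooling}, applied in each residual game $G_t$ (quasiconcavity and completeness are inherited). It gives $\max\W_t=\max_{m\in\Msg_t}\overline{v}(\mu^0_{P_t(m)})$, where $P_t(m)=P(m)\cap R_t$.

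\textbf{Construction.} At step $t$ we pick a maximizing message $m_t$ and set $C_t:=P_t(m_t)$. We then take the pooling coalition $(C_t,\sigma^{m_t},w_t)$ with $w_t:=\max\W_t$. Choosing a maximal-by-inclusion $C_t$ among the maximizers is available via completeness, and may be needed in the degenerate case.

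\textbf{Monotonicity.} We must show $\max\W_{t+1}\le w_t$. Take any message $m\in\Msg_{t+1}$ and let $D=P_{t+1}(m)$. By completeness there is $m''$ with $P(m'')=P(m)\cup P(m_t)$, so $P_t(m'')=C_t\cup P_t(m)\supseteq C_t\cup D$. I would argue via quasiconcavity as follows. The set $P_t(m)$ is the disjoint union of $D$ and $P_t(m)\cap C_t$. Hence $\mu^0_{P_t(m'')}$ is a convex combination of $\mu^0_{C_t}$ and $\mu^0_D$. Quasiconcavity then gives
\[
\overline{v}(\mu^0_{P_t(m'')})\ge\min\{\overline{v}(\mu^0_{C_t}),\overline{v}(\mu^0_D)\}.
\]
If $\overline{v}(\mu^0_D)>w_t$, the right side equals $w_t$, which alone gives no contradiction. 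So this step needs more care, and I expect it to be the main obstacle. I would instead proceed by contradiction with a maximal choice: let $C_t$ be a maximizer whose prior mass, or cardinality, is largest. If some $D$ in the residual had $\overline{v}(\mu^0_D)>w_t$, then in $G_t$ the set $P_t(m)$ itself would have to be handled. I would note $P_t(m)\supseteq D$, that $P_t(m)\cap C_t$ is nonempty (otherwise $D$ already beats $w_t$ in $G_t$), and that the union $C_t\cup P_t(m)$ is attainable in $G_t$ with value at least $\min\{w_t,\cdot\}=w_t$. By maximality of $C_t$, this union value must be $<w_t$ or equal to $w_t$ with the same set, forcing $P_t(m)\subseteq C_t$, hence $D=\varnothing$, a contradiction. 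Writing $\mu^0_{C_t\cup P_t(m)}$ as a mixture of $\mu^0_{C_t}$ and $\mu^0_{D}$ makes this precise.

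\textbf{IR floor.} We must show $w_t\ge\max_{\theta\in R_t}\underline{u}(\theta)$. For $\theta\in R_t$, any $m\in M(\theta)$ has a feasible belief in the original game. One admissible strategy has all types in $P(m)\cap R_t$ pool on $m$ (together with earlier types in whatever way). Its minimizer therefore satisfies $\underline{u}$-contribution $\le\underline{v}(\mu^0_{P_t(m)})\le\overline{v}(\mu^0_{P_t(m)})\le\max\W_t$, which gives the bound. Hence the run never halts, and by \cref{prop:cppbe-characterization} it produces a coalition-proof PBE.

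\textbf{Part (i).} Under QC$^*$ or genericity, the monotonicity argument works for any maximizing coalition, not just a maximal one. Under QC$^*$, strict quasiconcavity removes the tie case. Under genericity, the maximizing set is unique, so every coalition attaining $\max\W_t$ has type set $C_t$ by \cref{lem:pooling}. Hence every run has $\max\W_t$ feasible at each step, and the stated characterization follows from \cref{dfn:greedy} once the floor is shown to be automatic.

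\textbf{Part (ii).} This follows by induction on $t$. Genericity plus \cref{lem:pooling} pins down $C_t$ uniquely as the unique set attaining $\max\W_t$, and hence pins down $R_{t+1}$ and $w_t$.
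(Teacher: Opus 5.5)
Your argument is essentially the paper's. Selecting at each step a payoff-maximal coalition of largest cardinality (inclusion-maximality or largest prior mass work equally well), and merging it with a residual coalition via M-C and QC, is exactly \cref{lem:merging}. Your final version of the monotonicity step is correct once you note two facts. First, $P_t(m'')=C_t\cup D$ holds with equality. Second, $C_t\cap D=\varnothing$. Together these make $\mu^0_{C_t\cup D}$ a mixture of $\mu^0_{C_t}$ and $\mu^0_D$, so $C_t\cup D$ is a maximizer strictly containing $C_t$. Your remark that $P_t(m)\cap C_t\neq\varnothing$ is not needed. Parts (i) and (ii) follow the paper's induction.

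The one step that does not go through as written is the IR floor. To get $\mu^0_{P(m)\cap R_t}\in\mathcal{F}(m)$, you need a sender strategy of the whole game $G$ under which the senders of $m$ are exactly $P(m)\cap R_t$. Letting the earlier types in $C_1\cup\dots\cup C_{t-1}$ behave ``in whatever way'' is not enough: if any of them sends $m$, the posterior is no longer $\mu^0_{P(m)\cap R_t}$ and your chain of inequalities breaks. You must show that every earlier type that can send $m$ can avoid it, i.e.\ that no type in an earlier cell has $M(\theta')=\{m\}$. This follows from exclusivity. Such a $\theta'\in C_s$ would put $m\in X_s$. But $m\in\Msg_t$ means some $\theta\in R_t\subseteq R_s\smallsetminus C_s$ can send $m$, violating \cref{C3} at step $s$; this is \cref{lem:remaining-games}(i)--(ii). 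The argument uses only exclusivity of earlier cells, so it gives the floor for every prefix. You need that generality for the ``if'' direction of the characterization in (i), where the partition is arbitrary rather than one you constructed.
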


To prove \cref{thm:one}, we use the following merging argument, formalized in \cref{lem:merging} in the appendix. A greedy run halts at step $t$ only if the largest coalition payoff of $G_t$ exceeds $w_{t-1}$. Suppose it does: the coalition $(C_{t-1},\sigma_{t-1},w_{t-1})$ removed at step $t-1$ attains payoff $w_{t-1}$, the largest coalition payoff of $G_{t-1}$, and some coalition of $G_t$ with type set $C'$ pays $w'>w_{t-1}$. By message completeness, the two coalitions can be merged into a single pooling coalition of $G_{t-1}$ with type set $\widetilde C\supset C_{t-1}$. By quasiconcavity, this coalition pays at least $w_{t-1}$, hence exactly $w_{t-1}$. Under strict quasiconcavity, this coalition would pay strictly more than $w_{t-1}$, a contradiction. Under genericity, $C_{t-1}$ and $\widetilde C$ would be distinct type sets with the same pooling value $w_{t-1}$, also a contradiction. Under quasiconcavity alone, the two coalitions both pay exactly $w_{t-1}$, but the one with type set $\widetilde{C}$ contains strictly more types. Consequently, a greedy run that always removes payoff-maximal coalitions with the most types does not halt.

\subsection{Betweenness}\label{sec:existence-betweenness}

Next we impose no condition on the message mapping and, in return, strengthen the payoff restriction of \cref{sec:existence-qc-mc} from quasiconcavity to betweenness: the payoff at a mixture of two beliefs lies between the payoffs at these beliefs. Betweenness is a property of a real-valued payoff, so throughout this subsection $V=\{v\}$; the two envelopes coincide, $\overline v=\underline v=v$, and $v$ is continuous by \cref{A4:V-uhc}.\footnote{\textcite{hart2017} provide a microfoundation: betweenness holds whenever the receiver's expected utility is single-peaked in her action at every belief, and for $n=2$ it reduces to monotonicity of $v$ along the simplex. There, the sender's payoff at a belief is the peak of the receiver's single-peaked expected utility. \textcite{bertomeucianciaruso} impose a related condition on a real-valued payoff, their convexity condition (CV): the value of a merged belief lies between the worst and best values of its components.}

\begin{definition}\label{dfn:B}
    The sender's payoff satisfies \emph{betweenness} (B) if $V=\{v\}$ and $v$ is both quasiconcave and quasiconvex: for all $\mu,\mu'\in\Delta\Theta$ and $\lambda\in(0,1)$,
    \[
        \min\{v(\mu),v(\mu')\}
        \leq
        v(\lambda\mu+(1-\lambda)\mu')
        \leq
        \max\{v(\mu),v(\mu')\}.
    \]
    The payoff satisfies \emph{strict betweenness} (B$^*$) if, in addition, both inequalities are strict whenever $v(\mu)\neq v(\mu')$.
\end{definition}

Like quasiconcavity, betweenness is inherited by every restricted game. Quasiconvexity replaces message completeness: even when the types that can send a set of messages cannot pool on a single one, some coalition of these types pays at least their pooling value.

\begin{lemma}\label{lem:btw-value-id}
    If the sender's payoff satisfies betweenness, then
    \begin{enumerate}
        \item[(i)] every coalition $(C,\sigma,w)$ of $G$ satisfies $w=v(\mu^0_C)$;
        \item[(ii)] for every non-empty $R\subseteq\Theta$, the largest coalition payoff is the largest pooling value over relative preimages,
        \[
            \max\W_R=\max_{\varnothing\neq X\subseteq\Msg_R}v\bigl(\mu^0_{M^{-1}_R(X)}\bigr),
        \]
        and this value is attained by a coalition of $G\vert_R$.
    \end{enumerate}
\end{lemma}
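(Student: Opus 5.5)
For part (i), I will show that the common payoff $w$ of a coalition $(C,\sigma,w)$ equals $v(\mu^0_C)$ by averaging. Under $\sigma$, the posteriors $\mu_\sigma(\cdot\sep m)$ for $m\in X(\sigma)$ have weights $p_\sigma(m)>0$, and these weights average back to the conditional prior. Concretely, $\sum_{m\in X(\sigma)}p_\sigma(m)\mu_\sigma(\cdot\sep m)=\mu^0_C$, where the prior is the one of the restricted game $G\vert_C$. By \cref{C4} and $V=\{v\}$, every posterior has $v(\mu_\sigma(\cdot\sep m))=w$.

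Betweenness makes the level sets of $v$ convex. If $v(\mu)=v(\mu')=w$, then the two inequalities in the definition give $w\le v(\lambda\mu+(1-\lambda)\mu')\le w$. Applying this two-point statement inductively over the finitely many messages in $X(\sigma)$ gives $v(\mu^0_C)=w$. At each step I write a convex combination of $k$ points as a mixture of one point and a convex combination of the remaining $k-1$.

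For part (ii), I first prove the inequality $\le$. Take any coalition $(C,\sigma,w)$ of $G\vert_R$ and set $X=X(\sigma)\subseteq\Msg_R$. Exclusivity \cref{C3}, together with the remark that the reverse inclusion is automatic, gives $M^{-1}_R(X)=C$. Part (i), applied in the restricted game $G\vert_R$ (betweenness is inherited), then gives $w=v(\mu^0_C)=v(\mu^0_{M^{-1}_R(X)})$. So every element of $\W_R$ is bounded by the right-hand side.

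For the reverse inequality and attainment, I fix a nonempty $X\subseteq\Msg_R$ maximizing $v(\mu^0_{M^{-1}_R(X)})$; there are finitely many candidates. Let $D:=M^{-1}_R(X)$. I need a coalition of $G\vert_R$ with payoff at least $v(\mu^0_D)$. This is the main obstacle, because types in $D$ may have no common message, so pooling on a single message is not available.

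My plan is to use maximality of $X$ and argue by strong induction on $|D|$ (equivalently, by descent). Consider the strategy in which each $\theta\in D$ mixes uniformly, or in any fixed way, over $M(\theta)\cap X$. Its evidence is $X$, or a subset of $X$ with the same preimage, so \cref{C3} holds in $G\vert_R$. The posteriors average to $\mu^0_D$. If all posteriors give the same value of $v$, then by part (i) that value is $v(\mu^0_D)$ and we are done.

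Otherwise, quasiconvexity implies that some message $m$ has $v(\mu(\cdot\sep m))\ge v(\mu^0_D)$. The usual approach is then to search over strategies on evidence $X$ for one that equalizes posterior values. Instead I will try to pick $X$ to be maximal, with minimal preimage among maximizers, and use a fixed-point/continuity argument. The feasible posterior vectors form a compact polytope (\cref{lem:feasible-polytope}), and $v$ is continuous. I will choose the strategy on $D$ with evidence in $X$ that maximizes $\min_{m}v(\mu_\sigma(\cdot\sep m))$ over used messages, allowing messages to be dropped while keeping the preimage inside $D$.

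At an optimum, I will show the values are equalized. If some message has a strictly higher value, shifting mass of types toward the lower-valued messages, or dropping the high message, raises the minimum. If dropping it shrinks the preimage to a proper subset $D'$ of types, then maximality of $X$ gives $v(\mu^0_{D'})\le v(\mu^0_D)$, which lets me handle the case by induction. Betweenness along the perturbation path will be the tool for making the shifting step rigorous. This equalization step is where I expect the real difficulty to lie.
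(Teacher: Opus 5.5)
Your part (i) and the upper bound in (ii) are correct and match the paper's argument. The gap is in the attainment half of (ii). The equalization step you defer is the entire content of that half, and the heuristic you sketch for it does not work.

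\emph{A coalition with type set exactly $D$ need not exist.} Take the mapping of \cref{ex:nonexistence} with $v(\mu)=\min\{3\mu_2,1\}$, which satisfies betweenness.
\begin{itemize}
\item $D=\Theta$, the preimage of $X=\{a,b\}$, is a maximizing preimage with value $1$.
\item Every posterior after $a$ has value $1$ and every posterior after $b$ has value $0$. So your max--min program over strategies on $D$ has value $0<v(\mu^0_D)$.
\item The attaining coalition is $\{1,2\}$ pooling on $a$. It is obtained by discarding the \emph{low} message together with the type forced to send it. Your descent (shift mass toward low messages, drop the high one) goes the other way. Shifting type $1$ toward $b$ leaves the minimum flat at $0$.
\end{itemize}

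\emph{With a maximizer of minimal preimage, the gap remains.} For such a maximizer a coalition on $D$ does exist, but you give no argument that a max--min optimum equalizes values.
\begin{itemize}
\item Moving mass of $\theta$ from $m_1$ to $m_2$ pushes one posterior toward $\delta_\theta$ and the other away. Betweenness only makes $v$ weakly monotone along such segments, so no available shift need raise the minimum.
\item Within a strategy on $D$ every type must still send something in $M(\theta)\cap X$, so dropping a message does not shrink the type set. A coalition on $D'\subsetneq D$ needs evidence $X'$ with $M^{-1}_R(X')=D'$.
\item Your inductive branch cannot close. Minimality gives $v(\mu^0_{D'})<v(\mu^0_D)$, so a coalition paying at least $v(\mu^0_{D'})$ does not suffice.
\end{itemize}

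\emph{What is missing is a fixed-point input.} The paper supplies it with no equalization or induction.
\begin{itemize}
\item Form the auxiliary disclosure game $G'=(D,X,\theta\mapsto M(\theta)\cap X,\mu^0_D,V_D)$.
\item Take a PBE of $G'$ (\cref{lem:pbe-existence}, via Kakutani). Pass to its partition with decreasing payoffs $\widetilde w_1\geq\cdots\geq\widetilde w_k$ (\cref{prop:pbe-characterization}).
\item Sequential rationality makes the top-payoff types an exclusive, equal-payoff coalition. This is exactly the structure your max--min program lacks.
\item By (i), $\widetilde w_\tau=v(\mu^0_{\widetilde C_\tau})$. The cell priors average to $\mu^0_D$, so quasiconvexity gives $\widetilde w_1\geq v(\mu^0_D)$.
\item The first cell is a coalition of $G\vert_R$. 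Any type of $R$ that can send a message in $X(\widetilde\sigma_1)\subseteq X$ lies in $D$, where exclusivity in $G'$ already applies.
\end{itemize}
The attaining type set $\widetilde C_1$ may be a proper subset of $D$, as in the example above. The paper therefore never needs a coalition on $D$ itself.
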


For (i), the coalition's prior $\mu^0_C$ is a convex combination of its on-path posteriors. Each of these posteriors yields $w$ by \cref{C4}, so betweenness gives $v(\mu^0_C)=w$. For (ii), exclusivity makes every coalition type set a relative preimage, so (i) gives the upper bound. To attain it, choose a maximizing message set $X^*$ and form the auxiliary game with messages $X^*$ and types $M^{-1}_R(X^*)$. This game has a PBE partition with decreasing payoffs. Its cell priors average to the auxiliary prior, so quasiconvexity and (i) imply that its first cell pays at least the maximum pooling value. The first cell is also a coalition of $G\vert_R$: any type that can send one of its messages belongs to the auxiliary type set, where the cell's exclusivity already applies. Thus this cell attains the bound.

Betweenness guarantees existence, and either strict betweenness or genericity ensures that every greedy run terminates.

\begin{theorem}\label{thm:two}
    If the sender's payoff satisfies betweenness, then a coalition-proof PBE exists. If, in addition,
    \begin{enumerate}
        \item[(i)] the sender's payoff satisfies strict betweenness or $G$ is generic, then no run of \cref{alg:greedy-partition} halts, and every run produces a coalition-proof PBE partition. A partition of $G$ is a coalition-proof PBE partition if and only if $w_t=\max\W_t$ and $w_t\leq w_{t-1}$ for every $t$, where $w_0:=\infty$.
        \item[(ii)] $G$ is generic, then the coalition-proof PBE partition is essentially unique: any two coalition-proof PBE partitions of $G$ have the same length $T$, the same cells $C_1,\dots,C_T$, and the same payoffs $w_1,\dots,w_T$.
    \end{enumerate}
\end{theorem}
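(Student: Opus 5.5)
The plan parallels the proof of \cref{thm:one}, with \cref{lem:btw-value-id} in the role of \cref{lem:pooling}. By \cref{prop:cppbe-characterization} it suffices to produce a terminating run of \cref{alg:greedy-partition}.

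\emph{Step 1: IR is automatic at the residual maximum.} I will show that $\max\W_t \geq \max_{\theta\in R_t}\underline u(\theta)$ for every residual set $R_t$. Fix $\theta\in R_t$ and let $m$ be the maximizer in $\underline u(\theta)$. Take $X=\{m\}$ in \cref{lem:btw-value-id}(ii); then $C=M^{-1}_{R_t}(\{m\})$. The belief $\mu^0_C$ is feasible for $m$ in $G$: types in $C$ send $m$ with probability one, and types outside $R_t$ that can send $m$ send other messages. Hence
\[
v(\mu^0_C)\geq \min_{\mathcal F(m)} v=\underline u(\theta),
\]
and so $\max\W_t\geq\underline u(\theta)$. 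Consequently, if $\max\W_t\leq w_{t-1}$ at every step, the greedy choice is $w_t=\max\W_t$, it is attained by \cref{lem:btw-value-id}(ii), and the run never halts.

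\emph{Step 2: the residual maximum is non-increasing.} Suppose the run selects $(C_{t-1},\sigma_{t-1},w_{t-1})$ with $w_{t-1}=\max\W_{t-1}$. Suppose also that some coalition $(C',\sigma',w')$ of $G_t$ has $w'>w_{t-1}$, so $C'=M^{-1}_{R_t}(X')$. Let $X=X_{t-1}\cup X'$ and $\widetilde C=M^{-1}_{R_{t-1}}(X)$. Then $\widetilde C=C_{t-1}\cup C''$, where $C''\supseteq C'$ consists of the types in $R_t$ able to send messages in $X'$; types of $R_t$ cannot send $X_{t-1}$ by exclusivity. In fact $C''=C'$ by exclusivity of $C'$ in $G_t$. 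So $\widetilde C=C_{t-1}\sqcup C'$ and $\mu^0_{\widetilde C}$ is a strict mixture of $\mu^0_{C_{t-1}}$ and $\mu^0_{C'}$.

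By (i) of the lemma, $v(\mu^0_{C_{t-1}})=w_{t-1}$ and $v(\mu^0_{C'})=w'$. By the lemma's (ii), $v(\mu^0_{\widetilde C})\leq\max\W_{t-1}=w_{t-1}$, while betweenness gives $v(\mu^0_{\widetilde C})\geq w_{t-1}$; hence equality holds. Under strict betweenness this is a contradiction, since $w'\neq w_{t-1}$ forces a strict inequality. Under genericity it is also a contradiction, since $\widetilde C\neq C_{t-1}$ have equal pooling values. This proves part (i). For the characterization in (i): by Step 1 the greedy condition reduces to $w_t=\max\W_t$ together with $w_t\leq w_{t-1}$, and conversely these two conditions imply greediness.

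\emph{Step 3: existence under weak betweenness (the main obstacle).} Here $\widetilde C$ pays exactly $w_{t-1}$ but is strictly larger. Following \cref{thm:one}, I will use the run that at each step removes a payoff-maximal coalition whose type set is maximal by cardinality. I must check that $\widetilde C$ is actually attained by a coalition of $G_{t-1}$ with payoff $w_{t-1}$; only its pooling value has been computed so far. I would use the auxiliary-game construction from the proof of \cref{lem:btw-value-id}(ii) applied to $X$: its first cell $D$ pays at least $v(\mu^0_{\widetilde C})=w_{t-1}$, hence exactly $w_{t-1}$. If $D\neq\widetilde C$, I iterate on the remaining cells, or choose the maximizer with the largest preimage among payoff-maximal message sets. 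The delicate point is ensuring that $D$ strictly contains $C_{t-1}$. I expect to argue by taking, among all coalitions of $G_{t-1}$ paying $\max\W_{t-1}$, one whose type set $C^\ast$ is maximal. Repeating the merge with $C^\ast$ then shows that any residual coalition $C'$ paying more yields a coalition $C^\ast\sqcup C'$ paying $\max\W_{t-1}$. I will verify this by pooling $C^\ast\cup C'$ on $X$ directly, with types using their cell strategies. If the resulting posteriors are unequal, quasiconvexity applied to the auxiliary partition still gives a top cell whose type set contains $C^\ast$, contradicting maximality.

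\emph{Step 4: uniqueness (ii).} Under genericity each $w_t=\max\W_t$ is achieved by a unique type set, since every coalition with payoff $w$ has $v(\mu^0_C)=w$ and the map $C\mapsto v(\mu^0_C)$ is injective. Induction on $t$ then shows that the cells and payoffs coincide, and hence so does $T$.
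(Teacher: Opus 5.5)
Your Steps 1, 2 and 4 are sound and follow the paper. Step 1 is \cref{lem:remaining-games}(iii); you should justify that no type outside $R_t$ is forced to send $m$, which is \cref{lem:remaining-games}(i). Step 2 is \cref{lem:btw-merging}, and Step 4 is the genericity induction.

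The gap is Step 3, existence under plain betweenness, which is the theorem's main claim. Selecting a payoff-maximal coalition with a maximal type set does not stop the residual maximum from rising. Your assertion that the top cell of the auxiliary game contains $C^\ast$ is unfounded: quasiconvexity shows only that this cell pays $y$, not which types it contains. Here is a counterexample:
\begin{itemize}
\item Let $\Theta=\{1,2,3,4\}$ with a uniform prior, $M(1)=\{a,b\}$, $M(2)=M(3)=\{a\}$, $M(4)=\{b\}$, and $v(\mu)=\max\{0,2\mu_4-1\}-\max\{0,3(\mu_2+\mu_3)-2\}$.
\item Every contour set of $v$ has the form $\{\mu_4\geq c\}$, $\{\mu_4\leq c\}$, $\{\mu_2+\mu_3\leq c\}$ or $\{\mu_2+\mu_3\geq c\}$, so B holds.
\item The only possible type sets are $\{1,2,3\}$, $\{1,4\}$ and $\Theta$. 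Pooling gives $v(\mu^0_{\{1,2,3\}})=v(\mu^0_{\{1,4\}})=0$.
\item If type $1$ sends $a$ with probability $\alpha$, the payoffs after $a$ and $b$ are $-2(1-\alpha)/(2+\alpha)\leq 0$ and $\alpha/(2-\alpha)\geq 0$. These never coincide, so $\Theta$ supports no coalition and $\W_\Theta=\{0\}$.
\item Hence $\{1,2,3\}$ is the unique payoff-maximal type set of maximal cardinality, and it is also inclusion-maximal, so your rule may pick it.
\item Removing it leaves $\{4\}$ with $\max\W_2=v(\delta_4)=1>0$ and $\underline u(4)=0$. The greedy constraint set $\{1\}\cap[0,0]$ is empty, so the run halts.
\item Your merge $C^\ast\sqcup C'=\Theta$ is not a coalition. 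The auxiliary game on $\Theta$ with messages $\{a,b\}$ is $G$ itself. Its unique PBE partition $(\{1,4\},0),(\{2,3\},-1)$ has a top cell that neither contains $\{1,2,3\}$ nor is larger than it.
\end{itemize}

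The missing idea is a tie-breaking rule among payoff-maximal coalitions that is compatible with the merge. The paper supplies it in \cref{lem:btw-order}. By recursive proper separation it builds a complete, transitive order $\succeq$ on the level set $\{v=y\}$ with two properties. First, mixing a level-set belief with a strictly better belief, when the mixture stays on the level set, strictly raises its rank. Second, any positive decomposition of a level-set belief into beliefs at or below level $y$ has a level-set component ranked weakly above it. At each step one selects a payoff-maximal coalition whose prior is $\succeq$-maximal. If $w_{s+1}>w_s$, your merge gives $D=C_s\cup C_{s+1}$ with $v(\mu^0_D)=y$. The auxiliary game's strictly decreasing PBE partition then has a top cell $\widetilde C_1$ that is a coalition of $G_s$ paying $y$, while every other cell pays less. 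The two properties give $\mu^0_{\widetilde C_1}\succeq\mu^0_D\succ\mu^0_{C_s}$, contradicting the selection. In the example, this order ranks $\mu^0_{\{1,4\}}$ above $\mu^0_{\{1,2,3\}}$, which is exactly the choice under which the run terminates. Cardinality and inclusion carry no such information, so Step 3 must be replaced by an argument of this kind.
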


As in \cref{sec:existence-qc-mc}, a greedy run can halt only if the residual maximum rises above the previous payoff. Suppose removing a payoff-maximal coalition with type set $C$ and payoff $w$ leaves a coalition with type set $C'$ and payoff $w'>w$. The merging argument shows that $C\cup C'$ is a relative preimage with pooling value $w$ (see \cref{lem:btw-merging} in the appendix). Under strict betweenness, this is impossible, since the union's value lies strictly between $w$ and $w'$. Under genericity, it contradicts injectivity, since $C$ and $C\cup C'$ are distinct type sets with the same pooling value. In either case, removing a payoff-maximal coalition cannot raise the residual maximum, so no greedy run halts.

Under plain betweenness, the equality case can occur, so we select among payoff-maximal coalitions by ranking their induced priors within each level set of $v$. This ranking ensures that a rise in the residual maximum implies that a higher-ranked payoff-maximal coalition was available at the previous step, contradicting the selection. For the construction, see \cref{lem:btw-order} in the appendix.

\subsubsection{Evidence Games and Truth-Leaning}\label{sec:comp-hkp}

An evidence game ({\textcite{hart2017}}) is a disclosure game with a particular message mapping and a payoff that satisfies betweenness.\footnote{In \textcite{hart2017}, the receiver chooses a reward $x\in\mathbb{R}$, the sender's payoff is $x$ itself, and the receiver's expected utility is single-peaked in $x$ at every belief; the unique peak $v(\mu)$ is the sender's payoff, which is single-valued, quasiconcave, and quasiconvex.} The message mapping is a preorder on types: each type can reveal itself or any type below it. Since B holds, a coalition-proof PBE exists by \cref{thm:two}. We show that coalition proofness provides a sender-side foundation for the receiver's optimal-commitment outcome in their model.

\begin{definition}\label{dfn:evidence-game}
    An \emph{evidence game} is a disclosure game $(\Theta,\Msg,M,\mu^0,V)$ such that
    \begin{itemize}
        \item the sender's payoff satisfies betweenness;
        \item the message mapping has \emph{evidence structure}, that is,
        $\Msg=\Theta$, and $M:\Theta\to2^\Theta\smallsetminus\{\varnothing\}$ satisfies
        \begin{enumerate}
            \item[(L1)] $\theta\in M(\theta)$ for every $\theta\in\Theta$ (\emph{reflexivity});
            \item[(L2)] if $\theta'\in M(\theta)$ and $\theta''\in M(\theta')$, then $\theta''\in M(\theta)$ (\emph{transitivity}).
    \end{enumerate}
    \end{itemize}
\end{definition}

\textcite{hart2017} define truth-leaning equilibria as limits of equilibria of perturbed games in which revealing the whole truth carries an infinitesimal advantage. By their Proposition~1, such equilibria exist, and each has the same outcome as an equilibrium satisfying the following two conditions, which for ease of exposition we adopt as our definition.

\begin{definition}\label{dfn:truth-leaning}
    A PBE $(\sigma,\mu,r)$ of an evidence game is \emph{truth-leaning} if
    \begin{enumerate}
        \item[(A0)] for every $\theta\in\Theta$: if $r(\theta)=\max_{m\in M(\theta)}r(m)$, then $\sigma(\theta\sep\theta)=1$;
        \item[(P0)] for every off-path $m\in\Msg$: $\mu(\cdot\sep m)=\delta_m$.
    \end{enumerate}
\end{definition}

Condition (A0) breaks ties in favor of the whole truth: if message $\theta$ gives type $\theta$ its highest reward, then that type sends $\theta$ with probability one. Condition (P0) attributes an unused message $m$ to type $m$. Note that our definition of PBE requires that off-path beliefs are feasible; this requirement is consistent with (P0) as the fully revealing strategy induces belief $\delta_m$ at message $m$, so $\delta_m \in \mathcal{F}(m)$.

We next show that every truth-leaning equilibrium of an evidence game is coalition-proof.

\begin{proposition}\label{prop:HKP}
    Let $G$ be an evidence game. Then the sender's strategy in every truth-leaning equilibrium is a coalition-proof PBE strategy; under the conditions of Theorem~1 of \textcite{hart2017}, the receiver's optimal-commitment outcome is a coalition-proof PBE outcome of $G$.
\end{proposition}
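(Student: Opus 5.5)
The plan is to use \cref{prop:cppbe-characterization}: it suffices to show that the partition associated with a truth-leaning PBE admits no blocking coalition. The second claim then follows from Theorem~1 of \textcite{hart2017}, which says that under their conditions the receiver's optimal-commitment outcome is a truth-leaning equilibrium outcome.

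Fix a truth-leaning PBE $(\sigma,\mu,r)$ with equilibrium payoffs $u(\theta)$. By \cref{prop:pbe-characterization}, $\sigma$ is associated with an IR partition $\Pi$ whose cells are the level sets of $u$, ordered by strictly decreasing payoff. Suppose, toward a contradiction, that $(\widetilde C,\widetilde\sigma,\widetilde w)$ is a blocking coalition. By \cref{lem:btw-value-id}(i), applied in $G\vert_{\Theta(\widetilde w)}$, we have $\widetilde w=v(\mu^0_{\widetilde C})$. Moreover $u(\theta)<\widetilde w$ for every $\theta\in\widetilde C$.

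The first step is a closure property of $\widetilde C$ under the preorder. Let $\theta\in\widetilde C$, so $\theta$ can send some $m\in X(\widetilde\sigma)$. Let $\theta'$ be any type that sends message $\theta$ on path. Then $u(\theta')=r(\theta)\le u(\theta)<\widetilde w$, so $\theta'\in\Theta(\widetilde w)$. Since $\theta\in M(\theta')$ and $m\in M(\theta)$, transitivity (L2) gives $m\in M(\theta')$, and exclusivity then puts $\theta'\in\widetilde C$. Thus every on-path sender of a truthful message $\theta\in\widetilde C$ lies in $\widetilde C$.

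The second step bounds the truthful rewards. For every $\theta\in\widetilde C$ we have $r(\theta)\le u(\theta)<\widetilde w$. If message $\theta$ is off path, (P0) gives $r(\theta)=v(\delta_\theta)$. If it is on path, $r(\theta)=v(\mu(\cdot\sep\theta))$ with $\mu(\cdot\sep\theta)$ supported on $\widetilde C$.

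The third step, which is the main obstacle, is to write $\mu^0_{\widetilde C}$ as a convex combination of beliefs whose values are all strictly below $\widetilde w$. Quasiconvexity (betweenness) then gives $v(\mu^0_{\widetilde C})<\widetilde w$, the desired contradiction. The natural decomposition uses the beliefs $\mu(\cdot\sep\theta)$ at on-path truthful messages $\theta\in\widetilde C$, together with point masses $\delta_\theta$ for the remaining types. The difficulty is that a type $\theta\in\widetilde C$ may send, in equilibrium, a message $m$ outside $\widetilde C$ rather than its truthful message. By (A0) this happens only when $r(\theta)<u(\theta)=r(m)$, and then part of its prior mass is not accounted for by the truthful posteriors.

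To handle this, I would first try to show that for such a type, $v(\delta_\theta)\le u(\theta)<\widetilde w$, so its mass can be assigned to $\delta_\theta$. When message $\theta$ is off path this is exactly (P0). When it is on path, I would argue by induction down the preorder. Consider a minimal type in $\widetilde C$, which can only reveal itself and types equivalent to it. For such a type the available messages lie inside $\widetilde C$, and its truthful posterior is a mixture of point masses of lower types. This would let me propagate the bound $v(\delta_\theta)<\widetilde w$ upward using betweenness.

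If that induction fails, the fallback is a direct route: show greediness via \cref{lem:btw-value-id}(ii). This means checking $v(\mu^0_{M^{-1}_{R_t}(X)})\le w_t$ for every message set $X$, using the same truthful-posterior decomposition restricted to the residual set $R_t$. In either route, the heart of the argument is decomposing the coalition prior into truth-leaning rewards, each strictly below $\widetilde w$.
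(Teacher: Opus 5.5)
Your architecture matches the paper's proof. You have the closure step: every on-path sender of a truthful message of a member of $\widetilde C$ lies in $\widetilde C$. You decompose $\mu^0_{\widetilde C}$ into truthful posteriors and point masses, and you derive the contradiction from quasiconvexity.

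There is, however, a genuine gap exactly where you place the ``main obstacle.'' You need $v(\delta_\theta)<\widetilde w$ for every $\theta\in\widetilde C$ with $\sigma(\theta\sep\theta)<1$, but you prove it only when message $\theta$ is off path. For the on-path case you propose an induction down the preorder that is never carried out, and its sketch points the wrong way:
\begin{itemize}
\item The posterior after message $\theta$ is supported on the types that can send $\theta$. These are types weakly \emph{above} $\theta$, not lower types.
\item A minimal element of $\widetilde C$ can still reveal types below it that lie outside $\widetilde C$. So its available messages need not lie inside $\widetilde C$.
\item Your fallback through \cref{lem:btw-value-id}(ii) needs the same missing fact, so it does not help.
\end{itemize}

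The missing idea is that the on-path case is vacuous; this is the truth-telling dichotomy. Suppose message $\theta$ is sent by some type $\theta''$ with $\sigma(\theta\sep\theta'')>0$.
\begin{itemize}
\item Sequential rationality gives $u(\theta'')=r(\theta)$.
\item Since $\theta\in M(\theta'')$, (L2) gives $M(\theta)\subseteq M(\theta'')$, so $u(\theta)\le u(\theta'')$.
\item (L1) gives $r(\theta)\le u(\theta)$.
\end{itemize}
Hence $r(\theta)\le u(\theta)\le u(\theta'')=r(\theta)$, so $r(\theta)=u(\theta)$, and (A0) forces $\sigma(\theta\sep\theta)=1$.

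Consequently, a type that does not send its own message with probability one has that message off path. Then (P0) and (A0) give $v(\delta_\theta)=r(\theta)<u(\theta)<\widetilde w$. The on-path truthful messages of members of $\widetilde C$ are then exactly those of members with $\sigma(\theta\sep\theta)=1$. Your decomposition is well defined: put weight $p_\sigma(\theta)/\mu^0(\widetilde C)$ on each such posterior, and put the leftover mass of each other member on its point mass. Every component has value strictly below $\widetilde w$, so quasiconvexity yields $v(\mu^0_{\widetilde C})<\widetilde w$, contradicting \cref{lem:btw-value-id}(i). No induction on the preorder is needed.
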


\begin{corollary}\label{cor:hkp-generic}
    In a generic evidence game, every truth-leaning equilibrium and every coalition-proof PBE induce the same cells and the same payoffs.
\end{corollary}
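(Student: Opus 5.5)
The plan is to combine \cref{prop:HKP} with the uniqueness part (ii) of \cref{thm:two}. The one step that needs care is checking that the ``cells and payoffs induced by an equilibrium'' are well defined, i.e.\ that they do not depend on which partition is associated with the sender's strategy.

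\emph{Step 1: set-up.} Let $G$ be a generic evidence game. Then $V=\{v\}$ satisfies betweenness, so \cref{thm:two} applies. By \cref{thm:two}(ii), all coalition-proof PBE partitions of $G$ share the same length $T$, the same cells $C_1,\dots,C_T$, and the same payoffs $w_1,\dots,w_T$.

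\emph{Step 2: the payoffs of a greedy partition are strictly decreasing.} Take any coalition-proof PBE partition $\Pi=\{(C_t,\sigma_t,w_t)\}_{t=1}^T$. Each step-$t$ triple is a coalition of $G$ restricted to $C_t$, so \cref{lem:btw-value-id}(i), applied in that restricted game, gives $w_t=v(\mu^0_{C_t})$. The cells are pairwise disjoint and non-empty, hence distinct, so genericity makes the $w_t$ pairwise distinct. Combined with $w_1\ge\dots\ge w_T$, this gives $w_1>\dots>w_T$.

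\emph{Step 3: cells are the payoff level sets of the equilibrium.} Consider any PBE $(\sigma,\mu,r)$ whose strategy $\sigma=\sigma^\Pi$ comes from a partition $\Pi$ with non-increasing payoffs.
\begin{itemize}
\item A type $\theta\in C_t$ sends only messages in $X_t$.
\item Each such message is sent only by types in $C_t$, so Bayes' rule gives the posterior $\mu_{\sigma_t}(\cdot\sep m)$.
\item Since $V$ is single-valued, $r(m)=v(\mu_{\sigma_t}(\cdot\sep m))=w_t$.
\end{itemize}
Hence $u(\theta)=w_t$. When the $w_t$ are strictly decreasing, the cells are exactly the level sets $\{\theta: u(\theta)=w\}$, ordered by $w$. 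Thus the pair (cells, payoffs) is determined by the equilibrium's payoff profile $u$ and does not depend on which associated partition we pick.

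\emph{Step 4: conclusion.} Let $(\sigma,\mu,r)$ be any truth-leaning equilibrium. By \cref{prop:HKP}, $\sigma$ is associated with some coalition-proof PBE partition $\Pi$. By Step 2, $\Pi$ has strictly decreasing payoffs. By Step 3, the cells and payoffs induced by this equilibrium are those of $\Pi$, which by Step 1 coincide with the cells and payoffs common to all coalition-proof PBE partitions. The same reasoning (Steps 2--3) applies verbatim to any coalition-proof PBE, so every truth-leaning equilibrium and every coalition-proof PBE induce the same cells and the same payoffs.
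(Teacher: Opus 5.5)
Your proposal is correct and follows the paper's proof, which simply combines \cref{prop:HKP} with \cref{thm:two}(ii). Your Steps 2--3 use \cref{lem:btw-value-id}(i) and genericity to show that payoffs strictly decrease, so the cells coincide with the equilibrium's payoff level sets. This makes explicit a well-definedness point that the paper leaves implicit.
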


To prove \cref{prop:HKP}, we use the truth-telling dichotomy (\textcite{hart2017}, Proposition 2), which states that in any truth-leaning equilibrium, either type $\theta$ sends its own message with probability one, or type $\theta$ never sends its own message and $v(\delta_\theta) < u(\theta)$.
Consider a truth-leaning equilibrium $(\sigma,\mu,r)$ and suppose, toward a contradiction, that there is a blocking coalition $(\widetilde C, \widetilde \sigma, \widetilde w)$. From betweenness, that coalition pays $\widetilde w = v( \mu^0_{\widetilde C} )$ (\cref{lem:btw-value-id}).

Next, partition the participants in $\widetilde C$ into those who send their own messages with probability one in the original equilibrium (call that set $\widetilde C^{\mathrm{tr}}$) and those who do not.
For each truth-telling member $\theta'\in\widetilde C^{\mathrm{tr}}$, all types that pool on message $\theta'$ in the original equilibrium must also be in $\widetilde C$. Indeed, they can send any message available to the truth-telling member (by transitivity) and receive the same payoff as $\theta'$ in equilibrium, which is $r(\theta') = u(\theta') < \widetilde w$. Thus the on-path posterior after message $\theta'$ is supported on a subset of $\widetilde C$ and pays strictly less than $\widetilde w$.
Next, decompose $\mu^0_{\widetilde C}$ as a convex combination of the following beliefs. First, take all on-path posteriors $\mu_\sigma (\cdot \sep \theta')$ for all truth-telling members $\theta'\in\widetilde C^{\mathrm{tr}}$. These are supported on $\widetilde C$ and pay strictly less than $\widetilde w$. For each non-truth-telling member $\theta''\in\widetilde C\smallsetminus\widetilde C^{\mathrm{tr}}$, take the degenerate belief $\delta_{\theta''}$, weighted by its remaining probability mass. By the truth-telling dichotomy, each such belief pays $v(\delta_{\theta''})<u(\theta'')<\widetilde w$. Quasiconvexity of $v$ thus gives $v(\mu^0_{\widetilde C}) < \widetilde w$, a contradiction. The optimal-commitment conclusion follows because a truth-leaning equilibrium exists and yields the receiver's optimal-commitment outcome by Proposition~1 and Theorem~1 of \textcite{hart2017}. \cref{cor:hkp-generic} follows from \cref{thm:two}(ii).

These results provide an alternative foundation for selecting receiver-optimal outcomes in the model of \textcite{hart2017}. Under genericity, ruling out profitable coalitional deviations selects the same outcome as truth leaning or receiver commitment. Betweenness is central to this conclusion because it restricts the gains from pooling. Pooling can benefit some types at the expense of others, but the pooled payoff cannot exceed the payoffs of all the groups that pool. A coalition also cannot increase its common payoff by changing how its members mix over messages. In \textcite{hart2017}, this restriction follows from the receiver's expected utility being single-peaked in her action at every belief. We show that betweenness, together with the evidence structure, also ensures that the receiver's optimal-commitment outcome survives credible deviations by groups of sender types.

\subsection{Cheap Talk and Message Completeness}\label{sec:existence-cheaptalk}

This subsection gives sufficient conditions for existence under message completeness together with cheap talk without assuming quasiconcavity of the upper envelope or other conditions on $V$.

\begin{definition}\label{dfn:mct}
    The message mapping has \emph{cheap-talk copies} (M-CT) if, for every $m\in\Msg$,
    \[
        \bigl|\{m'\in\Msg\sep P(m')=P(m)\}\bigr|\geq n.
    \]
\end{definition}

Messages with the same preimage carry identical evidence, so choosing among them is cheap talk. Any disclosure game can be augmented to satisfy M-CT by adjoining, for each message, enough copies with the same preimage. The augmented game is again a disclosure game, and M-CT is inherited by every restricted game because the relative preimages of the copies coincide.

Our main tool is the quasiconcave closure of $\overline{v}$, the smallest quasiconcave function lying above it. The quasiconcave closure is the construction by which \textcite{lipnowskiravid}, who call it the \emph{quasiconcave envelope}, characterize the value of sender-optimal cheap talk; we apply it to the disclosure upper envelope $\overline{v}$.

\begin{definition}\label{dfn:qc-closure}
    The \emph{quasiconcave closure} of the upper envelope is the function $\overline{v}^{qc}:\Delta\Theta\to\mathbb{R}$ given by
    \[
        \overline{v}^{qc}(\mu):=\max\Bigl\{\min_{1\leq i\leq n}\overline{v}(p_i)\sep \lambda\in\Delta(\{1,\dots,n\}),\ p_1,\dots,p_n\in\Delta\Theta,\ \textstyle\sum_{i}\lambda_i p_i=\mu\Bigr\}.
    \]
\end{definition}
The maximum is attained because $\overline{v}$ is upper semicontinuous by \cref{A4:V-uhc} and the constraint set is compact.

With cheap-talk copies, a coalition is no longer tied to a single posterior. A coalition on a type set $C=P(m)$ can spread its members across the $n$ copies of $m$ and induce any decomposition of its prior $\mu^0_C$ into at most $n$ posteriors. Because the members have a common payoff, such a splitting pays only up to the value of $\overline{v}$ at its least favorable posterior, and the splitting that makes this least value largest attains the closure $\overline{v}^{qc}(\mu^0_C)$. Therefore, under M-CT a coalition can reach $\overline{v}^{qc}$ at any type set $C=P(m)$ (or $C=P(m)\cap R$ in $G\vert_R$), and together with message completeness this guarantees existence.

\begin{theorem}\label{thm:three}
    If the message mapping is complete and has cheap-talk copies, then a coalition-proof PBE exists.
\end{theorem}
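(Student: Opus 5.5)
We prove \cref{thm:three} by mimicking the merging argument behind \cref{thm:one}, with the quasiconcave closure $\overline{v}^{qc}$ playing the role of a quasiconcave upper envelope. There are four steps.

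\textbf{Step 1: an analogue of \cref{lem:pooling}.} Under M-C and M-CT, for every non-empty $R\subseteq\Theta$ we want
\[
    \max\W_R=\max_{m\in\Msg_R}\overline{v}^{qc}\bigl(\mu^0_{P(m)\cap R}\bigr),
\]
and we want every coalition of $G\vert_R$ with type set $C$ to satisfy $w\leq\overline{v}^{qc}(\mu^0_C)$.

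For the upper bound, take a coalition $(C,\sigma,w)$. Each of its on-path posteriors pays at least $w$ under $\overline{v}$, by \cref{C4}. The prior $\mu^0_C$ is a convex combination of these posteriors. By Carath\'eodory we can reduce to at most $n$ posteriors, so $\overline{v}^{qc}(\mu^0_C)\geq w$.

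For attainment, we need two facts:
\begin{itemize}
    \item By M-C, iterated unions show that every relative preimage $C=M^{-1}_R(X)$ equals $P(m^*)\cap R$ for some $m^*$. Every coalition type set is such a preimage by \cref{C3}.
    \item Given a splitting of $\mu^0_C$ into $p_1,\dots,p_n$ attaining $\overline{v}^{qc}(\mu^0_C)=:w^*$, let the members of $C$ mix over the $n$ copies of $m^*$ so as to induce exactly the $p_i$.
\end{itemize}
This induced payoff vector is not constant: posterior $p_i$ admits the payoffs $V(p_i)$, which need not contain $w^*$. The naive fix is to let the receiver select $w^*$ at each posterior. That works only if $\underline{v}(p_i)\leq w^*\leq\overline{v}(p_i)$, and only the second inequality is guaranteed.

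\textbf{Step 2: repairing the splitting.} We choose the maximizing splitting more carefully. Among splittings whose minimum $\overline{v}$-value is $w^*$, take one that is extremal, for instance one that merges posteriors as much as possible, or one chosen lexicographically. If some $\underline{v}(p_i)>w^*$, we merge $p_i$ with a posterior at which the value $w^*$ is attained. By upper hemicontinuity and the interval-valued property, moving along the segment between them should produce a belief at which $w^*\in V$, while every other posterior keeps $\overline{v}\geq w^*$. This should contradict maximality or extremality.

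This repair is the step I expect to be the main obstacle. If it fails, I would instead define the relevant value via a closure of the correspondence itself. That means using splittings whose posteriors share a common selectable payoff, which is exactly $\W_C$ for the type set $C$. Then Step 3 would be run directly with the property \enquote{coalition payoff sets are closed under merging from below}.

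\textbf{Step 3: merging.} Show that a greedy run that removes, at each step, a payoff-maximal coalition with the most types never halts. Suppose $(C_{t-1},\sigma_{t-1},w_{t-1})$ attains $\max\W_{t-1}$, and some coalition of $G_t$ with type set $C'$ pays $w'>w_{t-1}$. Both $C_{t-1}$ and $C'$ are relative preimages, so by M-C the union $\widetilde C=C_{t-1}\cup C'$ is of the form $P(m)\cap R_{t-1}$. Concatenate the two splittings, reweighted by $\mu^0(C_{t-1})$ and $\mu^0(C')$. This gives a splitting of $\mu^0_{\widetilde C}$ whose minimum value is at least $w_{t-1}$, after Carath\'eodory reduction to $n$ posteriors and the repair from Step 2. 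So $\widetilde C$ supports a coalition paying $w_{t-1}$ with strictly more types than $C_{t-1}$, contradicting the selection rule. Hence $\max\W_t\leq w_{t-1}$ along this run.

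\textbf{Step 4: the IR bound.} Finally, the lower constraint $\max_{\theta\in R_t}\underline{u}(\theta)$ must not bind. Take $\theta\in R_t$ and the message $m$ attaining $\underline{u}(\theta)$. If $m$ is still available in $G_t$, the coalition on $P(m)\cap R_t$ pays $\overline{v}^{qc}\geq\overline{v}$ at a feasible belief, which is at least the skeptical minimum, so $\max\W_t\geq\underline{u}(\theta)$. If $m$ was used at an earlier step $s$, then $\theta$ could send $m$ while in $R_s$, so exclusivity forced $\theta\in C_s$; this contradicts $\theta\in R_t$. The intersection in \cref{alg:greedy-partition} is therefore non-empty at every step, the run terminates, and \cref{prop:cppbe-characterization} yields a coalition-proof PBE.
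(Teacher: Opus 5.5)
Running the merging argument of \cref{thm:one} directly in $G$ is a legitimate variant of the paper's route. The paper instead applies \cref{thm:one} to the closure game $G^{qc}$ and transfers each pooling cell back to $G$. The problem is that everything after your upper bound rests on the attainment claim of Step~2, and you do not prove it. That claim says $G\vert_R$ has a coalition with type set $C=P(\bar m)\cap R$ paying exactly $w^*:=\overline{v}^{qc}(\mu^0_C)$.

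The repair you sketch does not work as stated. Call $p_i$ bad if $\underline{v}(p_i)>w^*$ and good if $w^*\in V(p_i)$. Moving a bad $p_i$ toward a good $p_j$ trivially reaches a belief with $w^*\in V$, namely $p_j$ itself. So the real issue is keeping $\mu^0_C$ a convex combination with nonnegative weights, and your sketch never addresses it.
\begin{itemize}
\item Replacing $p_i$ by some $q\in[p_i,p_j]$ forces you to take mass from $p_j$. This is possible only if $q$ lies between $p_i$ and the barycenter of $\lambda_ip_i+\lambda_jp_j$. Nothing guarantees this, and if the first good point lies beyond that barycenter you would need a negative weight on $p_j$.
\item Appealing to maximality does not help. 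A maximizing splitting can perfectly well contain bad posteriors, and ``contradicting extremality'' would mean exhibiting a maximizing splitting with fewer bad posteriors, which is exactly the missing construction.
\item The fallback via $\W_C$ and ``closure under merging from below'' is not stated precisely enough to check.
\end{itemize}
Without attainment, the identity $\max\W_R=\max_m\overline{v}^{qc}(\mu^0_{P(m)\cap R})$ is unproven, and Step~3 fails. The concatenated splitting gives only $\overline{v}^{qc}(\mu^0_{\widetilde C})\geq w_{t-1}$. You also need $\overline{v}^{qc}(\mu^0_{\widetilde C})\leq\max\W_{t-1}$ and an actual coalition on $\widetilde C$ paying $w_{t-1}$.

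The missing idea, which is the paper's \cref{lem:ct-realization}, is to move each $p_i$ toward the prior $\mu^0_C$ rather than toward another posterior.
\begin{itemize}
\item At the endpoint $p_i$ you have $\overline{v}(p_i)\geq w^*$. At the endpoint $\mu^0_C$ you have $\underline{v}(\mu^0_C)\leq\overline{v}(\mu^0_C)\leq\overline{v}^{qc}(\mu^0_C)=w^*$.
\item The sets where $\overline{v}<w^*$ and where $\underline{v}>w^*$ are disjoint and relatively open on the segment, by the semicontinuity in \cref{A4:V-uhc}. Connectedness of the segment therefore yields $t_i\in[0,1]$ with $w^*\in V(\widetilde p_i)$, where $\widetilde p_i:=(1-t_i)p_i+t_i\mu^0_C$.
\item If some $t_i=1$, pooling on $\bar m$ already pays $w^*$. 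Otherwise set $\beta_i:=\lambda_i/(1-t_i)$; then $\sum_i\beta_i\widetilde p_i=\bigl(\sum_i\beta_i\bigr)\mu^0_C$.
\end{itemize}
The barycenter is thus preserved automatically, and no mass has to be traded between posteriors. The $k\leq n$ cheap-talk copies then implement the $\widetilde p_i$ as you describe, and with this lemma your Steps~1 and~3 go through.

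In Step~4, plain pooling on $m$ at payoff $\overline{v}(\mu^0_{P(m)\cap R_t})$ suffices; you do not need $\overline{v}^{qc}$ there. What must be verified is $\mu^0_{P(m)\cap R_t}\in\mathcal{F}(m)$. This holds because no type removed at an earlier step is forced to send $m$, which follows from the exclusivity observation you already make.
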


The proof relies on a few key lemmas proved in the appendix. It applies \cref{thm:one} to a companion game whose upper envelope is already quasiconcave: the \emph{closure game} $G^{qc}$, obtained from $G$ by replacing $\overline{v}$ with $\overline{v}^{qc}$ (see \cref{lem:closure-game} in the appendix). Two facts link the two games. First, no coalition of $G$ can pay more than $\overline{v}^{qc}$ at its prior because its common payoff is bounded by $\overline{v}$ at each posterior it induces and the prior is a convex combination of those posteriors (see \cref{lem:qc-bound} in the appendix); this holds in any disclosure game. Second, M-CT lets a coalition reach $\overline{v}^{qc}$ at its prior (see \cref{lem:ct-realization} in the appendix). Hence $G$ and $G^{qc}$ have the same largest coalition payoff in every restricted game. Then, as $\overline{v}^{qc}$ is quasiconcave by construction and $G^{qc}$ inherits M-C from $G$, \cref{thm:one} applies to $G^{qc}$ and returns a coalition-proof PBE partition. By M-CT, each cell of that partition is matched in $G$ by a coalition with the same type set and payoff, so the partition carries over to $G$ step by step and is a coalition-proof PBE partition there as well. The appendix gives the construction.

A special case is pure cheap talk, where every message is available to every type and therefore carries no evidence. Coalition proofness selects the sender-optimal cheap-talk payoff $\overline{v}^{qc}(\mu^0)$ of \textcite{lipnowskiravid}: the sender obtains it in every coalition-proof PBE, though the supporting equilibrium need not be unique.

\begin{corollary}\label{cor:pure-ct}
    Suppose $P(m)=\Theta$ for every $m\in\Msg$ and $|\Msg|\geq n$. Then a coalition-proof PBE exists and is essentially unique: every coalition-proof PBE partition consists of the single cell $\Theta$ with payoff $\overline{v}^{qc}(\mu^0)$.
\end{corollary}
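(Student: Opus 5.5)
Existence follows directly from \cref{thm:three}, since pure cheap talk is a special case of its hypotheses. Completeness holds trivially: for any $m,m'$ we have $P(m)\cup P(m')=\Theta=P(m)$, so $m''=m$ works. Cheap-talk copies hold because every message has preimage $\Theta$ and $|\Msg|\geq n$. Hence a coalition-proof PBE exists. The work lies in the uniqueness claim.

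For uniqueness I would use \cref{prop:cppbe-characterization}, which says every coalition-proof PBE partition is greedy, and argue that the first step of any greedy partition already exhausts $\Theta$.

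The first step is to compute $\max\W_\Theta$. Every coalition $(C,\sigma,w)$ of $G$ must satisfy exclusivity \cref{C3}: $M^{-1}(X(\sigma))\subseteq C$. But every message is sendable by all types, so $M^{-1}(X(\sigma))=\Theta$, and therefore $C=\Theta$. So every coalition of $G$ has type set $\Theta$. By \cref{lem:qc-bound}, $w\leq\overline{v}^{qc}(\mu^0)$. By \cref{lem:ct-realization}, applied with $C=P(m)=\Theta$, some coalition attains exactly $\overline{v}^{qc}(\mu^0)$. Hence $\max\W_\Theta=\overline{v}^{qc}(\mu^0)$.

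The second step is to check that this maximal payoff is admissible in the greedy selection, i.e., that it is at least $\max_\theta\underline u(\theta)$. For every $m$, the feasible set $\mathcal F(m)$ contains $\mu^0$, because full pooling on $m$ induces it. So $\underline u(\theta)\leq\underline v(\mu^0)\leq\overline v(\mu^0)\leq\overline{v}^{qc}(\mu^0)$, where the last inequality comes from the trivial splitting. Since $w_0=\infty$, the upper bound imposes nothing. The greedy rule therefore forces $w_1=\overline{v}^{qc}(\mu^0)$.

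The third step is to conclude. Because every coalition of $G$ has type set $\Theta$, the first cell satisfies $C_1=\Theta$ and $R_2=\varnothing$. So every greedy partition, and hence every coalition-proof PBE partition, is the single cell $\Theta$ with payoff $\overline{v}^{qc}(\mu^0)$; only the strategy $\sigma_1$ may vary.

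The only delicate point is applying \cref{lem:ct-realization} here. The realizing splitting uses up to $n$ posteriors, which requires $n$ distinct messages with preimage $\Theta$. That is exactly the assumption $|\Msg|\geq n$, matching M-CT.
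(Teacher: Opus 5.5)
Your proposal is correct and follows the paper's own argument: exclusivity forces every coalition to have type set $\Theta$, \cref{lem:qc-bound} and \cref{lem:ct-realization} pin down $\max\W_\Theta=\overline{v}^{qc}(\mu^0)$, and \cref{prop:cppbe-characterization} then forces a single greedy cell. Your explicit check that $\max_\theta\underline u(\theta)\leq\overline{v}^{qc}(\mu^0)$ via $\mu^0\in\mathcal{F}(m)$, which the paper leaves implicit (it also follows from \cref{lem:remaining-games}(iii)), and your appeal to \cref{thm:three} for existence are both valid, though the latter is redundant because the single-cell partition you construct is itself greedy.
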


\subsection{Payoff Degradation}\label{sec:existence-disposal}

Our fourth existence condition, payoff degradation, requires that a coalition can lower its payoff while keeping the same participants. It places no restriction on the upper envelope $\overline{v}$, and under free disposal, none on the message mapping either. Write $v_{\min}:=\min_{\Delta\Theta}\underline{v}$ for the minimum of the lower envelope, which is attained because $\underline{v}$ is lower semicontinuous on the compact set $\Delta\Theta$ by \cref{A4:V-uhc}.

\begin{definition}\label{dfn:DEG}
    The game $G$ satisfies \emph{payoff degradation} (\DEG) if, for every non-empty $R\subseteq\Theta$, every coalition $(C,\sigma,w)$ of $G\vert_R$, and every $w'\in[v_{\min},w]$, the restricted game $G\vert_R$ admits a coalition $(C,\sigma',w')$.
\end{definition}

Two conditions guarantee payoff degradation: free disposal, and revelation aversion combined with cheap-talk copies.

\begin{proposition}\label{prop:DEG-conditions}
    In each of the following cases $G$ satisfies payoff degradation:
    \begin{enumerate}
        \item[(i)] the sender has \emph{free disposal}, meaning that $\underline{v}(\mu)=v_{\min}$ for every $\mu\in\Delta\Theta$;
        \item[(ii)] the message mapping has cheap-talk copies and the sender is \emph{revelation-averse}, meaning that $\underline{v}(\delta_\theta)=v_{\min}$ for every $\theta\in\Theta$.
    \end{enumerate}
\end{proposition}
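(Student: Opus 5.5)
Both cases start the same way: fix a non-empty $R\subseteq\Theta$, a coalition $(C,\sigma,w)$ of $G\vert_R$, and a target $w'\in[v_{\min},w]$. We must produce a coalition strategy $\sigma'$ on $C$ such that $(C,\sigma',w')$ is a coalition of $G\vert_R$. Exclusivity \cref{C3} and the common-payoff condition \cref{C4} are the only requirements to check. In both cases, the type set will be unchanged, and any messages the new strategy uses will have relative preimage inside $C$.

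\textbf{Case (i).} We keep $\sigma'=\sigma$. The evidence $X(\sigma)$ is unchanged, so \cref{C3} holds as before. For \cref{C4}, fix $m\in X(\sigma)$. By hypothesis $w\in V(\mu_\sigma(\cdot\sep m))$, so $\overline v(\mu_\sigma(\cdot\sep m))\geq w$. Free disposal gives $\underline v(\mu_\sigma(\cdot\sep m))=v_{\min}$. Since $V(\mu)=[\underline v(\mu),\overline v(\mu)]$ is an interval by \cref{A4:V-uhc}, we get $w'\in[v_{\min},w]\subseteq V(\mu_\sigma(\cdot\sep m))$. Hence $(C,\sigma,w')$ is a coalition of $G\vert_R$.

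\textbf{Case (ii).} Here we change the strategy so that every posterior is degenerate, and then use the interval structure again.
\begin{itemize}
\item For each $m\in X(\sigma)$, M-CT provides at least $n$ messages $m'$ with $P(m')=P(m)$. Intersecting with $R$, their relative preimages in $G\vert_R$ coincide with $M^{-1}_R(\{m\})\subseteq C$.
\item Each $\theta\in C$ sends some $m_\theta\in\supp\sigma(\cdot\sep\theta)$ in the original coalition. Since at most $|C|\leq n$ types share any given preimage class, we can assign injectively to each $\theta$ a distinct copy $m'_\theta$ of $m_\theta$.
\item Let $\sigma'(\cdot\sep\theta)=\delta_{m'_\theta}$. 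This is feasible because $P(m'_\theta)=P(m_\theta)\ni\theta$.
\item The evidence $X(\sigma')$ consists of copies of messages in $X(\sigma)$. Its relative preimage in $R$ is therefore contained in $M^{-1}_R(X(\sigma))\subseteq C$, which gives \cref{C3}.
\item Each message $m'_\theta$ is sent only by $\theta$, so the posterior it induces is $\delta_\theta$.
\item Revelation aversion gives $\underline v(\delta_\theta)=v_{\min}$. We also need $\overline v(\delta_\theta)\geq w$, and this is the main obstacle: the original coalition guarantees $\overline v\geq w$ only at its pooled posteriors, not at point masses.
\end{itemize}

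To handle that obstacle, I would not separate every type. Instead, I keep $\sigma$ and add one extra message: each type diverts a small probability $\epsilon$ onto a private copy, while the rest of its mass stays on the original messages. That alters those posteriors, so \cref{C4} would have to be rechecked at them. A cleaner route is to split each original message $m\in X(\sigma)$ across its copies while preserving the posterior: all types using $m$ split their mass identically across the copies. This is cheap talk that leaves the posteriors unchanged, so it alone does not lower the payoff. The real fix uses the interval $V(\mu)=[\underline v(\mu),\overline v(\mu)]$ together with upper hemicontinuity. I would decompose each posterior $\mu_\sigma(\cdot\sep m)$, through the $n$ copies, into posteriors whose $\underline v$ is at most $w'$ and whose $\overline v$ is at least $w'$, moving continuously from the pooled posterior toward a vertex $\delta_\theta$ along which $\underline v$ reaches $v_{\min}$. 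Lower semicontinuity of $\underline v$ and upper semicontinuity of $\overline v$, combined with the interval-valued uhc correspondence, should give an intermediate belief $\nu$ on the segment with $w'\in V(\nu)$. Bayes plausibility is then restored by the remaining copies carrying the complementary posterior, which must also admit $w'$.

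Making this intermediate-value step precise for the complementary posterior is the delicate point I expect to dominate the proof. If it fails, I would fall back on the simplest form of the claim: apply the splitting coordinatewise, one message at a time, and verify it there.
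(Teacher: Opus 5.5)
Your case (i) is correct and is exactly the paper's argument. Case (ii) has a genuine gap, and it is the one you flag yourself. Suppose you move from a posterior $q=\mu_\sigma(\cdot\sep m)$ toward a single vertex $\delta_\theta$ and find $\nu$ on that segment with $w'\in V(\nu)$. Bayes plausibility then forces the remaining mass to average to a complementary belief on the far side of $q$, on the ray from $\delta_\theta$ through $q$. Nothing in the hypotheses controls $V$ there: revelation aversion speaks only about point masses, so at that complementary point $\overline v$ may be below $w'$ or $\underline v$ above it. The single-ray-plus-complement scheme therefore cannot be completed in general, and your fallback of splitting one message at a time does not say what replaces it. (Your first attempt, full separation, fails for the reason you give: $\overline v(\delta_\theta)$ can be below $w'$.)

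The missing idea is to use a ray toward \emph{every} vertex of $\supp q$ at once, so that no complementary belief is needed. Suppose $w'\notin V(q)$. For each $\theta_j\in\supp q$, your intermediate-value argument on the segment from $q$ (where $\overline v(q)\geq w\geq w'$) to $\delta_{\theta_j}$ (where $\underline v(\delta_{\theta_j})=v_{\min}\leq w'$) gives $t_j\in(0,1]$ such that $q_j:=(1-t_j)q+t_j\delta_{\theta_j}$ satisfies $w'\in V(q_j)$. Then $q=\sum_j\alpha_j q_j$ exactly, with $\alpha_j\propto q(\theta_j)/t_j$, because the $\delta_{\theta_j}$-components recombine into a multiple of $q$. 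Every piece admits $w'$ and is supported in $\supp q\subseteq P(m)\cap C$.

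One further step is needed to fit the pieces into the available copies. Several messages in $X(\sigma)$ may share a preimage class, and M-CT guarantees only $n$ messages per class. The paper therefore pools all pieces coming from messages with a common preimage $P_\rho$. It then applies Carath\'eodory to the class-level posterior, expressing it with at most $n$ pieces. You cannot instead run the ray argument from the class-level posterior directly, since without quasiconcavity $\overline v$ there need not be $\geq w'$. Spreading the coalition across the copies as in \cref{lem:ct-realization} then yields $\sigma'$. Exclusivity holds because $X(\sigma')$ realizes the same preimages as $X(\sigma)$.
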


Under free disposal, the sender's worst-case payoff is $v_{\min}$ at every belief, so a coalition can keep its strategy and lower its payoff to any target down to $v_{\min}$. Under revelation aversion, the worst-case payoff is $v_{\min}$ at every point belief, so the target payoff is available at beliefs between each posterior and the point beliefs. Cheap-talk copies let a coalition split each posterior into such beliefs.

\begin{theorem}\label{thm:DEG-exists}
    If $G$ satisfies payoff degradation, then no run of \cref{alg:greedy-partition} halts; consequently, $G$ admits a greedy partition and a coalition-proof PBE exists.
    In particular, these conclusions hold in each case of \cref{prop:DEG-conditions}.
\end{theorem}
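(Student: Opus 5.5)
The plan is to show directly that at every step $t$ of every run of \cref{alg:greedy-partition}, the admissible set $\W_t\cap\bigl[L_t,\,w_{t-1}\bigr]$ is non-empty, where $L_t:=\max_{\theta\in R_t}\underline{u}(\theta)$. Since $\W_t$ is compact (\cref{lem:coalition-compact}), non-emptiness means the maximum exists and some coalition attains it. So no run halts, every run terminates, and its output is greedy. By \cref{prop:cppbe-characterization} that output is a coalition-proof PBE partition. The final sentence of the theorem then follows from \cref{prop:DEG-conditions}. Non-emptiness splits into two claims: (a) $\W_t$ contains some $w\geq L_t$; (b) any such $w$ can be degraded into the interval.

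\emph{Step (b).} Given a coalition $(C,\sigma,w)$ of $G_t$ with $w\geq L_t$, set $w':=\min\{w,w_{t-1}\}$.

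First, $w_{t-1}\geq L_t$. For $t=1$ this holds because $w_0=\infty$. For $t>1$, the previous greedy step gave $w_{t-1}\geq L_{t-1}\geq L_t$, since $R_t\subseteq R_{t-1}$.

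Second, $L_t\geq v_{\min}$. Each skeptical payoff is a value of $\underline{v}$ at some belief, and $v_{\min}=\min_{\Delta\Theta}\underline{v}$.

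Hence $w'\in[v_{\min},w]$ and $w'\in[L_t,w_{t-1}]$. Applying \cref{dfn:DEG} with $R=R_t$ yields a coalition $(C,\sigma',w')$ of $G_t$, so $w'\in\W_t\cap[L_t,w_{t-1}]$.

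\emph{Step (a), the main obstacle.} This step must link the skeptical payoff, which is computed in the full game $G$, to a coalition of the residual game $G_t$.

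Pick $\theta^*\in R_t$ with $\underline{u}(\theta^*)=L_t$, and a message $m\in M(\theta^*)$ attaining the max in \cref{dfn:skeptical-payoff}, so that $\underline{v}(\mu)\geq L_t$ for all $\mu\in\mathcal{F}(m)$. Consider the pooling coalition of $G_t$ on $D:=P(m)\cap R_t$, in which every type of $D$ sends $m$, with payoff $\overline{v}(\mu^0_D)$. This is a coalition of $G_t$ by \cref{remark:coalitions-exist} applied to $G_t$. It then suffices to show $\mu^0_D\in\mathcal{F}(m)$, because then $\overline{v}(\mu^0_D)\geq\underline{v}(\mu^0_D)\geq L_t$.

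To show $\mu^0_D\in\mathcal{F}(m)$, build a strategy of $G$ as follows:
\begin{itemize}
\item types in $D$ send $m$;
\item types in $R_t\smallsetminus D$ send any available message, necessarily different from $m$;
\item each type in an earlier cell $C_s$, $s<t$, plays $\sigma_s$.
\end{itemize}
The key point is that no earlier-cell type ever sends $m$. Indeed, if $m\in X_s$ for some $s<t$, then exclusivity of the step-$s$ coalition in $G_s$ would forbid $\theta^*\in R_t\subseteq R_s\smallsetminus C_s$ from being able to send $m$, a contradiction. Hence the Bayes posterior after $m$ under this strategy is exactly $\mu^0_D$, which is therefore feasible.

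This exclusivity argument is the delicate part: earlier types forced to send $m$ could otherwise contaminate the belief. With it in place, (a) and (b) combine to show that no run halts, which proves the theorem.
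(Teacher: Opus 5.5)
Your proof is correct and follows essentially the paper's route. The paper also combines the lower bound $\max\W_t\geq\max_{\theta\in R_t}\underline u(\theta)$ (its \cref{lem:remaining-games}(iii), proved by the same exclusivity argument you use to show $\mu^0_{P(m)\cap R_t}\in\mathcal{F}(m)$) with payoff degradation down to $w_{t-1}$, whose IR property is inherited from the previous greedy step; your choice $w'=\min\{w,w_{t-1}\}$ simply merges the paper's two cases, $\max\W_t\leq w_{t-1}$ and $\max\W_t>w_{t-1}$, into one.
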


Our previous results show existence using conditions that ensure that some run of \cref{alg:greedy-partition} terminates because removing a payoff-maximal coalition cannot raise the residual maximum. With payoff degradation, even if the residual maximum rises, the algorithm can continue by reducing the next coalition's payoff to the previous one's. The reduced payoff is still IR for the remaining types because the greedy choice in the previous step is.

\subsection{The Binary Case and Non-Existence}\label{sec:existence-tightness}

We conclude the existence section with three observations. First, \cref{thm:two}, and \cref{thm:DEG-exists} under free disposal, provide sufficient conditions for existence that constrain only the payoff correspondence $V$, while \cref{thm:three} provides sufficient conditions that constrain only the message mapping $M$. Second, with two types a coalition-proof PBE always exists, which we show in \cref{thm:binary} below. Third, our results are relatively ``tight'': single-valuedness, continuity and concavity of the sender's payoff correspondence, and either cheap-talk copies or the evidence structure on the message mapping are not sufficient for existence of a coalition-proof PBE. \Cref{ex:nonexistence} illustrates this point.

\begin{theorem}\label{thm:binary}
    If $n=2$, then every disclosure game admits a coalition-proof PBE.
\end{theorem}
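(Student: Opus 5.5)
The plan is to run \cref{alg:greedy-partition} on $\Theta=\{a,b\}$ and show that, with a suitable choice among payoff-maximal coalitions at step~1, the run terminates. By \cref{prop:cppbe-characterization}, a terminating run gives a coalition-proof PBE.

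\emph{Step 1 is feasible.} Every disclosure game admits a PBE. By \cref{prop:pbe-characterization} and \cref{lem:strong-IR}, its first cell pays some $w\in\W_\Theta$ with $w\geq\max_\theta\underline u(\theta)$. Since $\W_\Theta$ is compact, $w_1^*:=\max\bigl(\W_\Theta\cap[\max_\theta\underline u(\theta),\infty)\bigr)$ exists and equals $\max\W_\Theta$. If some coalition paying $w_1^*$ has cell $\Theta$, select it and the run ends after one step. Otherwise every such coalition is a singleton, say $(\{a\},\sigma_1,w_1^*)$. Exclusivity means $a$ has a message unavailable to $b$, so the induced belief is $\delta_a$ and $w_1^*\in V(\delta_a)$.

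\emph{Step 2.} The residual game $G\vert_{\{b\}}$ only induces the belief $\delta_b$, so $\W_2=[\underline v(\delta_b),\overline v(\delta_b)]$. I need this interval to meet $[\underline u(b),w_1^*]$.

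First, $\underline u(b)\leq\underline v(\delta_b)$. For any $m\in M(b)$, the fully revealing behaviour of $b$ together with $a$ never sending $m$ shows $\delta_b\in\mathcal F(m)$, so $\min_{\mathcal F(m)}\underline v\leq\underline v(\delta_b)$. Together with $\underline u(b)\leq w_1^*$, the only way step~2 can fail is $\underline v(\delta_b)>w_1^*$. I will show that this case contradicts the assumption that no coalition with cell $\Theta$ pays $w_1^*$.

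\emph{Excluding $\underline v(\delta_b)>w_1^*$ (the main obstacle).} Suppose $b$ had an exclusive message. Then $(\{b\},\cdot,\overline v(\delta_b))$ would be a coalition of $G$ paying more than $w_1^*=\max\W_\Theta$, which is impossible. So every $m\in M(b)$ is shared with $a$; fix one such $m$.

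Consider the coalition strategies on $\Theta$ in which $b$ sends $m$, and $a$ sends $m$ with probability $1-q$ and its exclusive message $m_a$ with probability $q\in(0,1)$. The posterior after $m_a$ is $\delta_a$, where $w_1^*$ is available. As $q$ ranges over $(0,1)$, the posterior after $m$ sweeps the open segment between $\mu^0$ and $\delta_b$.

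At the $\mu^0$ end, pooling on $m$ forms a coalition with cell $\Theta$, so $\overline v(\mu^0)\leq w_1^*$. Equality would already give a cell-$\Theta$ coalition paying $w_1^*$, so I may assume $\overline v(\mu^0)<w_1^*$.

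On the segment, the set $\{\overline v<w_1^*\}$ is relatively open by upper semicontinuity and contains points near $\mu^0$. The set $\{\underline v>w_1^*\}$ is relatively open by lower semicontinuity and contains points near $\delta_b$. These two sets are disjoint because $\underline v\le\overline v$. Since the segment is connected, they cannot cover it. Hence some interior point $\mu$ satisfies $w_1^*\in[\underline v(\mu),\overline v(\mu)]=V(\mu)$.

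The corresponding $q$ gives a coalition with cell $\Theta$ paying $w_1^*$ at both messages. This contradicts the assumption that all payoff-maximal coalitions are singletons.

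\emph{Conclusion.} Either step~1 selects cell $\Theta$ and the run ends, or step~2 admits a payoff in $\W_2\cap[\underline u(b),w_1^*]$ and the run ends after step~2. In both cases some run of \cref{alg:greedy-partition} terminates, which yields a greedy partition and hence a coalition-proof PBE.

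The delicate point is the connectedness argument. It must use only the semicontinuity of the two envelopes from \cref{A4:V-uhc}, and it must handle the open endpoints of the segment, which is why the equality case $\overline v(\mu^0)=w_1^*$ is treated separately.
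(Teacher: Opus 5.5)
Your proposal is correct and follows the paper's own argument. A halting run must remove a singleton $\{a\}$ at payoff $w_1^*$ and leave $b$ with $\underline v(\delta_b)>w_1^*$. Then every message of $b$ is shared with $a$, and an intermediate-value argument along the segment from $\mu^0$ to $\delta_b$ yields a two-type coalition paying $w_1^*$, which step~1 can select instead. You handle the IR bound and the endpoint case $\overline v(\mu^0)=w_1^*$ more carefully than the text does. One small gap: your claim $\delta_b\in\mathcal F(m)$ needs $a$ to be able to avoid $m$. This holds because $a$ has the exclusive message $m_a$, but you should say so.
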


To understand this result, consider a greedy run of \cref{alg:greedy-partition} with two types. For that run to halt, the first step must remove one type and leave the other type with a higher payoff. Let $\theta$ be the type removed at step 1, $w_1$ the first-step payoff, $\theta'$ the type left behind, and $m$ a message available to $\theta'$. Message $m$ must then be available to $\theta$ as well, or else $\theta'$ could obtain a higher payoff at step $1$. 
Suppose $\theta'$ sends $m$ with probability one and $\theta$ sends it with probability $\alpha$ and its own evidence otherwise. As $\alpha$ rises from $0$ to $1$, the posterior belief after $m$ moves from $\delta_{\theta'}$ (at which $\underline v(\delta_{\theta'})>w_1$ because the run halts) to the prior $\mu^0$ (at which pooling pays at most $w_1$). 
Since $V$ is upper hemicontinuous and has interval values, for some $\alpha$ every message yields payoff $w_1$, meaning that there exists a coalition of both types paying $w_1$. 
A greedy run that selects this coalition at step $1$ terminates, so a coalition-proof PBE exists.
The following three-type example is therefore a minimal example of non-existence of coalition-proof PBE.

\begin{example}\label{ex:nonexistence}
    Let $\Theta=\{1,2,3\}$ with a uniform prior, $\Msg=\{a,b\}$, $M(1)=\{a,b\}$, $M(2)=\{a\}$, $M(3)=\{b\}$, and let $V=\{v\}$ be single-valued with
    \[
        v(\mu):=\min\bigl\{\mu_1+0.9\,\mu_3,\ -0.1\,\mu_1+1.1\,\mu_2+0.9\,\mu_3\bigr\}.
    \]
\end{example}

\begin{figure}[ht!]
\centering
\begin{tikzpicture}[font=\normalsize]
\def\zlevel{-0.45}
\begin{axis}[view={125}{25}, hide axis, clip=false, xtick=\empty, ytick=\empty, ztick=\empty,
  xmin=0, xmax=1, ymin=0, ymax=1, zmin=\zlevel, zmax=1, width=11.5cm, height=9.5cm]
\draw[black!60] (0,0,\zlevel) -- (1,0,\zlevel) -- (0,1,\zlevel) -- cycle;
\draw[densely dashed, black!45] (1,0,\zlevel) -- (1,0,-0.1);
\draw[densely dashed, black!45] (0,1,\zlevel) -- (0,1,0);
\draw[densely dashed, black!45] (0,0,\zlevel) -- (0,0,0.9);
\draw[densely dashed, black!45] (0.5,0.5,\zlevel) -- (0.5,0.5,0.5);
\draw[densely dashed, black!45] (0.5,0,\zlevel) -- (0.5,0,0.4);
\filldraw[fill=black!40, fill opacity=0.3, draw=black!70] (0,0,0.9) -- (1,0,-0.1) -- (0.5,0.5,0.5) -- cycle;
\filldraw[fill=black!40, fill opacity=0.3, draw=black!70] (0,0,0.9) -- (0,1,0) -- (0.5,0.5,0.5) -- cycle;
\draw[blue!60!black, line width=2.6pt, dash pattern=on 6pt off 4pt] (0,1,\zlevel) -- (0.5,0.5,\zlevel) node[pos=0.5, below, inner sep=6pt, text=black] {$\mathcal{F}(a)$};
\draw[red!75!black, line width=2.6pt, dash pattern=on 6pt off 4pt] (0,0,\zlevel) -- (0.5,0,\zlevel) node[pos=0.5, below, inner sep=6pt, yshift=2pt, xshift=7pt, text=black] {$\mathcal{F}(b)$};
\draw[blue!60!black, line width=2.6pt, line cap=round] (0,1,0) -- (0.5,0.5,0.5);
\draw[red!75!black, line width=2.6pt, line cap=round] (0,0,0.9) -- (0.5,0,0.4);
\fill[blue!60!black] (0,1,\zlevel) circle (3pt);
\fill[blue!60!black] (0.5,0.5,\zlevel) circle (3pt);
\fill[red!75!black] (0,0,\zlevel) circle (3pt);
\fill[red!75!black] (0.5,0,\zlevel) circle (3pt);
\fill[black] (1,0,\zlevel) circle (2pt);
\node[below, inner sep=5pt] at (1,0,\zlevel) {$\delta_1$};
\node[below, inner sep=6pt] at (0,1,\zlevel) {$\delta_2$};
\node[below, inner sep=6pt] at (0,0,\zlevel) {$\delta_3$};
\node[below, inner sep=6pt] at (0.5,0.5,\zlevel) {$\mu^0_{\{1,2\}}$};
\node[below, inner sep=6pt] at (0.5,0,\zlevel) {$\mu^0_{\{1,3\}}$};
\fill[blue!60!black] (0.5,0.5,0.5) circle (3pt) node[above right, inner sep=4pt, text=black] {$0.5$};
\fill[blue!60!black] (0,1,0) circle (3pt) node[right, inner sep=6pt, text=black] {$0$};
\fill[red!75!black] (0.5,0,0.4) circle (3pt) node[left, inner sep=6pt, text=black] {$0.4$};
\fill[red!75!black] (0,0,0.9) circle (3pt) node[above, inner sep=6pt, text=black] {$0.9$};
\fill[black] (1,0,-0.1) circle (2pt) node[left, inner sep=5pt] {$-0.1$};
\end{axis}
\end{tikzpicture}
\caption{The sender's payoff $v$ (gray) in \cref{ex:nonexistence}. The dashed segments are the sets of feasible beliefs $\mathcal{F}(a)$ (blue) and $\mathcal{F}(b)$ (red); the solid segments above them are the corresponding payoffs.}
\label{fig:nonexistence}
\end{figure}

Under this message mapping, types $2$ and $3$ can only send $a$ and $b$, respectively, so every sender strategy is determined by the probability $\alpha$ with which type $1$ sends $a$. As $\alpha$ ranges over $[0,1]$, the posterior belief after $a$ ranges over $\mathcal{F}(a)=\{(x,1-x,0)\sep x\in[0,\tfrac12]\}$ and the posterior belief after $b$ over $\mathcal{F}(b)=\{(x,0,1-x)\sep x\in[0,\tfrac12]\}$, shown in blue and red in \cref{fig:nonexistence}. Consequently, the payoff after message $a$ ranges from $0$ to $0.5$ and the payoff after message $b$ ranges from $0.4$ to $0.9$, the blue and red segments on the graph of $v$.

By evidence exclusivity \cref{C3}, there are three possible type sets for a coalition: $\{1,2\}$, $\{1,3\}$, and $\{1,2,3\}$. The first two pool on messages $a$ and $b$ and obtain payoffs $0.5$ and $0.4$, respectively. The last type set uses both messages, which by \cref{C4} must yield the same payoff, necessarily in $[0.4,0.5]$. However, the payoff after message $a$ is at least $0.4$ only if $\alpha\geq\tfrac23$, and the payoff after message $b$ is at most $0.5$ only if $\alpha\leq\tfrac13$. Therefore, no coalition has type set $\{1,2,3\}$.

The partition algorithm (\cref{alg:partition}) thus has two possible runs. The first one starts with removing types $\{1,2\}$, obtaining payoff $0.5$ at the first step, and leaves type $3$ with payoff $0.9>0.5$. The resulting partition is not a PBE partition because it does not have non-increasing payoffs. The second run starts with removing types $\{1,3\}$, obtaining payoff $0.4$ at the first step, and leaves type $2$ with payoff $0<0.4$; the resulting partition is the unique PBE partition. It is however not coalition-proof because it is blocked by $\{1,2\}$ pooling on $a$ with payoff $0.5>0.4$. Hence no coalition-proof PBE exists.

Non-existence persists when the game is augmented with cheap-talk copies of $a$ and $b$, or with a message that only type $1$ can send, which gives the message mapping evidence structure. In all three games, every coalition attaining the largest payoff $0.5$ has type set $\{1,2\}$, and removing it leaves type $3$ alone with payoff $0.9$; in other words, removing a payoff-maximal coalition raises the residual maximum. Non-existence is not knife-edge: the argument uses only the strict inequalities $0.5>0.4$ and $0.9>0.5$ and the gap between $\alpha\geq\tfrac23$ and $\alpha\leq\tfrac13$, so it persists under small perturbations of $v$.

\begin{proposition}\label{prop:nonexistence}
    In \cref{ex:nonexistence}, no coalition-proof PBE exists. Non-existence persists when the message mapping is augmented to have cheap-talk copies of both messages or evidence structure. Consequently, single-valuedness, continuity, and concavity of the sender's payoff, and either cheap-talk copies or evidence structure are jointly insufficient for existence.
\end{proposition}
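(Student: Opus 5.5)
The plan is to use \cref{prop:cppbe-characterization}: a coalition-proof PBE exists if and only if some run of \cref{alg:greedy-partition} terminates. So for each of the three games I will enumerate every coalition of every residual game and check that every greedy run halts. In each game, the first step is to list the admissible type sets. These come from exclusivity \cref{C3}: a coalition's type set must equal the relative preimage of its evidence.

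\textbf{Base game.} The text already shows that $\W_\Theta=\{0.4,0.5\}$. The type sets $\{1,2\}$ and $\{1,3\}$ arise only by pooling, and I will write out the computation that no coalition has type set $\{1,2,3\}$. That coalition would need $\alpha\geq\tfrac23$ for the payoff after $a$ to be at least $0.4$, and $\alpha\leq\tfrac13$ for the payoff after $b$ to be at most $0.5$, using that $v$ is linear on each segment.

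To complete the base case, I first need the skeptical payoffs, which give the IR bounds. Type 1 gets $\underline u(1)=\max\{0,0.4\}=0.4$, type 2 gets $0$, and type 3 gets $0.4$. The greedy step 1 therefore selects $w_1=0.5$ with $C_1=\{1,2\}$. The residual is $\{3\}$, whose only coalition pays $0.9>0.5$, so this run halts. No other step-1 selection is greedy, so every greedy run halts.

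\textbf{Cheap-talk copies.} Here I add copies $a_1,\dots,a_3$ with $P=\{1,2\}$ and $b_1,\dots,b_3$ with $P=\{1,3\}$. The possible type sets are unchanged: $\{2\}$ or $\{3\}$ alone is not a relative preimage in $\Theta$. The key task is to bound coalition payoffs.
\begin{itemize}
\item On $\{1,2\}$, any splitting across the copies of $a$ yields posteriors on the edge $\mathcal F(a)$. Since $v(x,1-x,0)=\min\{x,1.1-1.2x\}=x$ for $x\le\tfrac12$, the common payoff $w$ forces every posterior to be equal, giving pooling at $0.5$.
\item Similarly, on $\{1,3\}$ we have $v(x,0,1-x)=0.9-0.5x$, which is injective, so the payoff is $0.4$.
\item On $\{1,2,3\}$, posteriors after $a$-copies have the form $(x,1-x,0)$, and those after $b$-copies have the form $(y,0,1-y)$. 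A common value $w$ forces $x=w$ and $y=(0.9-w)/0.5$. The total mass of type 1 must match the masses of types 2 and 3, each equal to $\tfrac13$. Writing $p$ and $q$ for the total probabilities of $a$- and $b$-messages, this gives $p(1-x)=\tfrac13$, $q(1-y)=\tfrac13$, and $px+qy=\tfrac13$. Thus $x/(1-x)+y/(1-y)=1$, which yields the same contradiction as in the base game (the $\alpha$ gap).
\end{itemize}
The skeptical payoffs are unchanged, since $\mathcal F$ is the same for copies. The rest of the argument is then identical to the base game.

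\textbf{Evidence structure.} I will relabel messages as types, with $\Msg=\Theta$: message $c$, available only to type 1, and messages $a$ and $b$. Reflexivity and transitivity require $M(1)=\{a,b,c\}$, read as ``1 can reveal itself, 2, or 3,'' and I will verify (L1)--(L2) directly. The new message $c$ has $P(c)=\{1\}$. This creates a new coalition $\{1\}$ with payoff $v(\delta_1)=-0.1$. It also raises type 1's skeptical payoff only to $\max\{0,0.4,-0.1\}=0.4$. For $\{1,2,3\}$, type 1 can now place mass on $c$, so posteriors include $\delta_1$, which pays $-0.1$. Such a coalition would need $w=-0.1$, which forces the posterior after $a$ to be $x=-0.1$, which is impossible. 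So the type sets with payoff at least $0.4$ are still only $\{1,2\}$ and $\{1,3\}$.

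Under this labeling, $\{1,2\}$ can be a coalition that includes $c$ only at payoff $-0.1$, and only if the posterior after $a$ also pays $-0.1$, which is impossible. The greedy first step is again $\{1,2\}$ at $0.5$, leaving $\{3\}$ at $0.9$, and the run halts.

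\textbf{The ``consequently'' clause.} This follows by observation. $v$ is single-valued and continuous, and it is concave as the minimum of two linear functions. In the two augmented games, either cheap-talk copies or evidence structure holds.

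The main obstacle is the cheap-talk case for $\{1,2,3\}$. Arbitrary splittings there must be ruled out via the injectivity of $v$ on each edge together with the mass-balance constraint, so I will handle that computation carefully.
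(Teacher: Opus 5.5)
Your base-game and evidence-structure arguments are correct and follow the paper's. The gap is in the cheap-talk case, which is the step you yourself flag. There you carry over the base game's feasible sets, but once copies are added, type~2 is no longer forced to send $a$, and type~3 is no longer forced to send $b$. An individual copy can then be sent by type~1 alone or in any proportion with type~2. So posteriors after $a$-copies range over the whole edge $\{(x,1-x,0): x\in[0,1]\}$, not only $x\le\tfrac12$.

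On that edge $v(x,1-x,0)=\min\{x,\,1.1-1.2x\}$ is tent-shaped with peak $0.5$ at $x=\tfrac12$, so a common payoff does not force a common posterior. For example, $\{1,2\}$ can split across $a_1,a_2$ into posteriors with $x=0.4$ and $x=\tfrac{7}{12}$ (weights $\tfrac{5}{11},\tfrac{6}{11}$), both paying $0.4$. Hence two of your claims are false: that a $\{1,2\}$-coalition must pool, and that in a $\Theta$-coalition every $a$-posterior has $x=w$. Your one-variable mass balance therefore does not cover all coalitions.

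For the same reason the skeptical payoffs do change: $\delta_1$ is now feasible after every message, so all three drop to $-0.1$. This is harmless, since the greedy first step takes $\max\W_\Theta$ either way. Also, $v(x,0,1-x)=0.9-x$, not $0.9-0.5x$; your formula gives $0.65$ at $\mu^0_{\{1,3\}}$. Injectivity still holds on the full edge, so your $\{1,3\}$ conclusion survives with $y=0.9-w$.

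The repair is short, and it is what the paper does. For $\{1,2\}$ you only need $w\le 0.5$, which follows from $v\le0.5$ on the edge, or from quasiconcavity ($w\le v(\mu^0_C)$, as in the proof of \cref{lem:pooling}).

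For $\Theta$, all of type~2's mass goes to $a$-copies, while type~1 contributes at most $\tfrac13$ there. So the aggregate $a$-posterior has $\mu_1\le\mu_2$, and some $a$-copy has $x\le\tfrac12$, giving $w=x\le\tfrac12$. Equality would put every $a$-copy at $x=\tfrac12$, hence all of type~1 on $a$-copies. Then the $b$-copies carry $\delta_3$ and pay $0.9\neq w$. Thus $\max\W_\Theta=0.5$ is attained only by the type set $\{1,2\}$, and every greedy run halts at $\{3\}$.

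If you prefer your mass-balance route, replace $x$ by the aggregate $\bar x$ over $a$-copies. Since some $a$-copy is used, $w\le\tfrac12$. For $w\in[0,\tfrac12]$ both roots of $v(x,1-x,0)=w$ are at least $w$, so $\bar x\ge w$ and
$1=\tfrac{\bar x}{1-\bar x}+\tfrac{y}{1-y}\ge\tfrac{w}{1-w}+\tfrac{0.9-w}{0.1+w}\ge\tfrac{18}{11}$,
a contradiction. For $w<0$, $\tfrac{y}{1-y}>9$ gives the contradiction directly.
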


\section{Rich Disclosure Games and the Tent}\label{sec:rich-games}

This section constructs a coalition-proof partition when the message space is maximally ``rich.'' In particular, we require not only that there is a message that allows any set of types to pool together and separate from others (as in \cite{grossman1981} and \cite{milgrom1981}, where any subset $m \subseteq \Theta$ such that $\theta \in m$ is a valid message) but also that \textsl{fractions} of each type can pool while excluding otherwise identical copies of themselves.

One example of a formal model that accommodates the richness that we study here is the stochastic-message-mapping game of \textcite{titova}. In that model, the sender observes a payoff-irrelevant label drawn from an atomless distribution in addition to his type $\theta$. The sender can then condition his message on this label, which means that, for every nonzero vector $\nu\leq\mu^0$, there is a message available to mass $\nu(\theta)$ of type $\theta$ senders for each $\theta\in\Theta$. We do not introduce the labels and messages explicitly and instead use $\nu$ to keep track of these masses.

Suppose that at some step $\nu(\theta)$ is the mass of type $\theta$ senders who remain, with $0\neq\nu\leq\mu^0$, and let $C:=\supp\nu$. A coalition is a nonzero vector $\eta\leq\nu$, where $\eta(\theta)$ is the mass of type $\theta$ senders in the coalition. It induces belief $\hat\eta:=\eta/\eta(\Theta)$, where $\eta(\Theta):=\sum_{\theta\in\Theta}\eta(\theta)$, and payoff $\overline v(\hat\eta)$. Because arbitrary fractions of the remaining types can pool, they can induce exactly the beliefs in
\[
    \Delta_C:=\{\mu\in\Delta\Theta:\supp\mu\subseteq C\}.
\]

For each non-empty $C\subseteq\Theta$, let $\mu^*_C\in\arg\max_{\mu\in\Delta_C}\overline v(\mu)$ be a belief that maximizes the sender's payoff over the beliefs supported on $C$. We write $\mu^*:=\mu^*_{\Theta}$. For a nonzero residual mass vector $\nu$ with $C=\supp\nu$, define
\[
    \lambda^*(\nu):=\min_{\theta:\,\mu^*_C(\theta)>0}\frac{\nu(\theta)}{\mu^*_C(\theta)}
\]
as the largest mass that can pool at that belief.

The game we study here is not a disclosure game in the sense of \cref{dfn:game}, but it admits a direct analogue of the greedy partition \cref{alg:greedy-partition}.
At each step, the algorithm selects a coalition that maximizes the sender's payoff across all beliefs supported on the remaining types and removes at least one type from the game completely.

First, find the largest sender payoff in the game. The first coalition reaches this payoff and fully removes some $\theta\in\Theta$ from the game. Indeed, the first coalition must obtain $\overline v(\mu^*)$ since otherwise senders in the proportions $\mu^*$ could pool and all gain. To do so, we choose it to induce the belief $\mu^*$, meaning its mass vector must be $\eta_1=\lambda\mu^*$ for some $\lambda>0$. To obtain a coalition of maximal size, we pick the highest feasible $\lambda$ such that $\lambda\mu^*(\theta)\leq\mu^0(\theta)$ for every $\theta$, that is, $\lambda=\lambda^*(\mu^0)$. Since $\lambda$ is maximal, $\lambda\mu^*(\theta_i)=\mu^0(\theta_i)$ for some $\theta_i$; hence all $\theta_i$ senders participate in this coalition and type $\theta_i$ is fully removed in step 1.

Let $\nu_1:=\mu^0-\eta_1$ be the masses left after the first coalition is removed, and let $R_2:=\supp\nu_1$. By construction, $R_2$ is a strict subset of $\Theta$. We repeat the same construction with $\mu^*_{R_2}\in\arg\max_{\mu\in\Delta_{R_2}}\overline v(\mu)$: since all $\theta_i$ senders were removed in step 1, the best the remaining types can do is induce the belief $\mu^*_{R_2}$.

We repeat until all types are assigned to a coalition. For example, if type $\theta_j$ is fully removed in step 2, then $R_3:=\supp\nu_2$ is a strict subset of $R_2$ and the next coalition induces $\mu^*_{R_3}$, and so on. The algorithm terminates in at most $n$ steps, and the resulting coalitions $\eta_1,\ldots,\eta_T$ add up to $\mu^0$: every sender belongs to exactly one coalition. We call such a sequence of coalitions a \emph{mass partition} of $\mu^0$. The mass partition generated by this procedure, which we formalize below in \cref{alg:rich-greedy}, is unique if $\overline v$ has a unique maximizer on $\Delta_{R_t}$ at each step. As in \cref{sec:cppbe}, a coalition $\widetilde\eta$ \emph{blocks} a mass partition if it consists only of senders who would strictly gain from it, that is, $\widetilde\eta\leq\sum_{t\,:\,\overline v(\hat\eta_t)<\overline v(\hat{\widetilde\eta})}\eta_t$; a mass partition is \emph{coalition-proof} if no coalition blocks it.

\begin{algorithm}[ht!]
    \caption{Greedy Partition Algorithm for Rich Disclosure Games}\label{alg:rich-greedy}
    Let $t := 1$ and $\nu_0 := \mu^0$\;
    \While{$\nu_{t-1} \neq 0$}{
        let $R_t := \supp \nu_{t-1}$ and select $\mu^*_{R_t} \in \arg\max_{\mu \in \Delta_{R_t}} \overline v(\mu)$\;
        let $\eta_t := \lambda^*(\nu_{t-1})\,\mu^*_{R_t}$\;
        let $\nu_t := \nu_{t-1} - \eta_t$ and $t := t+1$\;
        \medskip
    }
\end{algorithm}

\begin{proposition}\label{prop:cutecase}
    Let $\eta_1,\ldots,\eta_T$ be an output of \cref{alg:rich-greedy}. Then
    \begin{enumerate}
        \item[(i)] at each step, $\eta_t$ attains the highest payoff available to the remaining types and exhausts at least one type in $R_t$;
        \item[(ii)] the algorithm terminates after at most $n$ steps, the coalitions $\eta_1,\ldots,\eta_T$ form a mass partition of $\mu^0$, and their payoffs are weakly decreasing;
        \item[(iii)] if $\overline v$ has a unique maximizer on every non-empty face of $\Delta\Theta$, $(\eta_1,\ldots,\eta_T)$ is the unique mass partition satisfying \textup{(i)} at every step.
    \end{enumerate}
    This mass partition is coalition-proof.
\end{proposition}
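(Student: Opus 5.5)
The proof follows the construction of \cref{alg:rich-greedy} directly, taking the four claims in order. Throughout we use two observations. First, any coalition $\eta\leq\nu_{t-1}$ induces a belief $\hat\eta\in\Delta_{R_t}$. Second, conversely, every belief in $\Delta_{R_t}$ is induced by some coalition of the remaining senders: take $\eta=\lambda\mu$ with $\lambda>0$ small enough, which is possible since $\nu_{t-1}(\theta)>0$ on $R_t$.

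\textbf{Claim (i).} The payoff of $\eta_t$ is $\overline v(\mu^*_{R_t})$. Since every remaining coalition induces a belief in $\Delta_{R_t}$, this is the largest payoff available to the remaining types. Feasibility requires $\eta_t\leq\nu_{t-1}$, which holds by the definition of $\lambda^*$. The minimum defining $\lambda^*(\nu_{t-1})$ is attained at some $\theta$ with $\mu^*_{R_t}(\theta)>0$. For that $\theta$ we get $\eta_t(\theta)=\nu_{t-1}(\theta)$, so $\nu_t(\theta)=0$ and $\theta$ is exhausted.

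\textbf{Claim (ii).} By (i), $R_{t+1}\subsetneq R_t$, so the algorithm stops within $n$ steps with $\nu_T=0$. Telescoping then gives $\sum_t\eta_t=\mu^0$. Payoffs are weakly decreasing because $\Delta_{R_{t+1}}\subseteq\Delta_{R_t}$, so the maximum over a smaller face can only be lower.

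\textbf{Claim (iii).} Suppose $\overline v$ has a unique maximizer on each face. Any mass partition satisfying (i) at step $t$ must induce exactly $\mu^*_{R_t}$, so $\eta_t=\lambda\mu^*_{R_t}$ for some $\lambda$. It must also exhaust a type, and $\lambda<\lambda^*$ exhausts no type, so $\lambda\geq\lambda^*$; feasibility gives $\lambda\leq\lambda^*$. Hence $\lambda=\lambda^*(\nu_{t-1})$. Induction on $t$, with the same residual $\nu_{t-1}$ at each step, gives uniqueness.

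\textbf{Coalition-proofness.} This is the main step. Suppose $\widetilde\eta$ blocks, with payoff $\widetilde w=\overline v(\hat{\widetilde\eta})$. Let $s$ be the smallest index with $\overline v(\hat\eta_s)<\widetilde w$. By monotonicity from (ii), the set of strictly-gaining senders is $\sum_{t\geq s}\eta_t=\nu_{s-1}$, whose support is $R_s$. Then $\widetilde\eta\leq\nu_{s-1}$ forces $\hat{\widetilde\eta}\in\Delta_{R_s}$, so $\widetilde w\leq\overline v(\mu^*_{R_s})=\overline v(\hat\eta_s)<\widetilde w$, a contradiction. The one delicate point is the identification of the gaining set with $\nu_{s-1}$. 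This needs the payoffs to be weakly decreasing and the cells to be listed in step order; both follow from (ii), and ties pose no problem because the blocking inequality is strict.
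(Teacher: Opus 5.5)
Your proposal is correct and follows essentially the same route as the paper: the paper packages your inline argument for (i) (choosing $\lambda^*$ so that $\eta_t \le \nu_{t-1}$ and some type is exhausted) into a separate attainability lemma. Your arguments for (ii), (iii), and coalition-proofness match the paper's, including identifying the strictly-gaining senders with $\nu_{s-1}$ via weakly decreasing payoffs.
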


For brevity, we do not formally define PBE and PBE partitions for rich disclosure games. Under the analogous definitions, the mass partition of \cref{prop:cutecase} is a PBE partition: its payoffs are weakly decreasing by (ii), and it is individually rational because each sender of type $\theta$ in coalition $\eta_t$ obtains $\overline v(\mu^*_{R_t})\geq\underline v(\delta_\theta)$, the payoff he can guarantee by revealing his type, as $\delta_\theta\in\Delta_{R_t}$. See \textcite{titova} for a formal treatment of PBE in rich disclosure games with finitely many receiver actions.

\subsection*{The Tent Function}

Bayesian persuasion and cheap talk admit geometric characterizations of the sender's ex-ante payoff: the concave closure of $\overline v$ in \textcite{kamenicagentzkow} and its quasiconcave closure in \textcite{lipnowskiravid}, each evaluated at the prior $\mu^0$. We now provide an analogous characterization for rich disclosure.

Assume for the remainder of this section that $\overline v$ has a unique maximizer $\mu^*_C$ on every non-empty face $\Delta_C$. Then \cref{prop:cutecase} gives a unique mass partition generated by \cref{alg:rich-greedy} for every prior $\mu\in\Delta\Theta$; when $\mu$ is on the boundary, the procedure simply starts with the types in $\supp\mu$. If the resulting coalitions are $\eta_1,\ldots,\eta_T$, each sender in coalition $\eta_t$ obtains $\overline v(\hat\eta_t)$, and the sender's ex-ante payoff is the \emph{tent} of $\overline v$:
\[
    \vtent(\mu):=\sum_{t=1}^T\eta_t(\Theta)\,\overline v(\hat\eta_t).
\]
\begin{figure}[!ht]
\centering
\begin{minipage}[t]{0.48\textwidth}
\centering
\begin{tikzpicture}[x=1.15cm,y=0.9cm,font=\normalsize]
    \path[use as bounding box] (-0.4,-1.0) rectangle (5.4,4.4);
    \draw[black!60] (0,0) -- (5,0);
    \draw[densely dashed, black!45] (0,0) -- (0,2.4);
    \draw[densely dashed, black!45] (2,0) -- (2,4);
    \draw[densely dashed, black!45] (5,0) -- (5,0.4);
    \draw[black!55, line width=1pt] plot[smooth, domain=0:2, samples=60] (\x,{0.4*\x*\x+2.4});
    \draw[black!55, line width=1pt] plot[smooth, domain=2:5, samples=60] (\x,{0.4*(\x-5)*(\x-5)+0.4});
    \draw[green!45!black, line width=2.6pt, line cap=round, line join=round] (0,2.4) -- (2,4) -- (5,0.4);
    \foreach \point in {(0,2.4),(2,4),(5,0.4)}
        \fill[green!45!black] \point circle (3pt);
    \fill[black] (0,0) circle (2pt) node[below, inner sep=6pt] {$\delta_{\theta_2}$};
    \fill[black] (2,0) circle (2pt) node[below, inner sep=6pt] {$\mu^*$};
    \fill[black] (5,0) circle (2pt) node[below, inner sep=6pt] {$\delta_{\theta_1}$};
\end{tikzpicture}
\par\smallskip
(a)
\end{minipage}\hfill
\begin{minipage}[t]{0.48\textwidth}
\centering
\begin{tikzpicture}[x=1.15cm,y=0.9cm,font=\normalsize]
    \path[use as bounding box] (-0.4,-1.0) rectangle (5.4,4.4);
    \draw[black!60] (0,0) -- (5,0);
    \draw[densely dashed, black!45] (0,0) -- (0,2.4);
    \draw[densely dashed, black!45] (2,0) -- (2,4);
    \draw[densely dashed, black!45] (5,0) -- (5,0.4);
    \draw[black!55, line width=1pt] plot[smooth, domain=0:5, samples=100] (\x,{4-0.4*(\x-2)*(\x-2)});
    \draw[green!45!black, line width=2.6pt, line cap=round, line join=round] (0,2.4) -- (2,4) -- (5,0.4);
    \foreach \point in {(0,2.4),(2,4),(5,0.4)}
        \fill[green!45!black] \point circle (3pt);
    \fill[black] (0,0) circle (2pt) node[below, inner sep=6pt] {$\delta_{\theta_2}$};
    \fill[black] (2,0) circle (2pt) node[below, inner sep=6pt] {$\mu^*$};
    \fill[black] (5,0) circle (2pt) node[below, inner sep=6pt] {$\delta_{\theta_1}$};
\end{tikzpicture}
\par\smallskip
(b)
\end{minipage}
\caption{The tent $\vtent$ (green) of two payoff functions $\overline v$ (gray) with two types. In both panels, the tent equals $\overline v$ at beliefs $\delta_{\theta_2}$, $\mu^*$, and $\delta_{\theta_1}$ and is affine between them, so the two tents are the same. Both payoff functions are quasiconcave, so the gray curves also give the cheap-talk value. The persuasion value equals the tent in panel (a) and the cheap-talk value in panel (b).}
\label{fig:tent-two-type-comparison}
\end{figure}
\Cref{fig:tent-two-type-comparison} illustrates the tent with two types and an interior optimal belief $\mu^*$. In a coalition-proof PBE, at least one type must obtain $\overline v(\mu^*)$, or else senders of both types, in the correct proportion, could pool and obtain it. If the prior is $\mu^*$, both types obtain this payoff. Suppose instead that there are more $\theta_1$ senders than needed to induce $\mu^*$. All $\theta_2$ senders then pool with some $\theta_1$ senders at $\mu^*$, and the leftover $\theta_1$ senders reveal their type. The tent therefore decreases linearly from $\overline v(\mu^*)$ to $\overline v(\delta_{\theta_1})$. The reverse applies when $\theta_2$ is more numerous. With two types, only the optimal belief, its payoff, and the two endpoint payoffs determine the tent.

The tent can lie above (panel (a) of \cref{fig:tent-two-type-comparison}) or below (panel (b)) the cheap-talk value but never above the persuasion value. It can lie below the cheap-talk value because there is no ``loyalty'' across sender types: senders who can obtain $\overline v(\mu^*)$ do so even if this leaves other senders worse off and lowers the ex-ante payoff.

With two types, at most one type remains after the first coalition is removed. With more types, the order in which types are exhausted depends on the prior. Suppose the types are exhausted one at a time in the order $(\theta_{i_1},\ldots,\theta_{i_n})$. The residual sets are then
\[
    R_j=\Theta\smallsetminus\{\theta_{i_1},\ldots,\theta_{i_{j-1}}\},
    \qquad j=1,\ldots,n,
\]
and the $j$th coalition induces $\mu^*_{R_j}$. The prior is the average of these beliefs weighted by the masses of the coalitions. If one coalition exhausts several types, the residual sets it skips receive weight zero. A prior may be written this way for different orderings; \cref{prop:tent} shows that all of them give the same payoff.

\begin{proposition}\label{prop:tent}
    Given the prior $\mu^0$, the sender's ex-ante payoff under the coalition-proof mass partition of \cref{prop:cutecase} is $\vtent(\mu^0)$. For every $\mu\in\Delta\Theta$, there exist an ordering of the types and weights $\omega_1,\ldots,\omega_n$ such that
    \[
        \mu=\sum_{j=1}^{n}\omega_j\mu^*_{R_j},
        \qquad \omega_j\geq0,
        \qquad \sum_{j=1}^{n}\omega_j=1.
    \]
    Every representation of this form, for any ordering of the types, gives the same value,
    \[
        \vtent(\mu)=\sum_{j=1}^{n}\omega_j\overline v(\mu^*_{R_j}).
    \]
    Consequently, $\vtent$ is the unique function such that
    \begin{enumerate}
        \item[(i)] $\vtent(\mu^*_C)=\overline v(\mu^*_C)$ for every non-empty $C\subseteq\Theta$;
        \item[(ii)] $\vtent$ is affine on $\mathrm{conv}\{\mu^*_{R_1},\ldots,\mu^*_{R_n}\}$ for every ordering of the types.
    \end{enumerate}
\end{proposition}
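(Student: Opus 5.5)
The plan is to show that every representation of the stated form, after a harmless normalization, reproduces the mass partition of \cref{alg:rich-greedy}, which is unique by \cref{prop:cutecase}(iii). Everything else then follows. The first claim is definitional. The coalitions $\eta_1,\dots,\eta_T$ form a mass partition of $\mu^0$. Each sender in $\eta_t$ earns $\overline v(\hat\eta_t)$ with $\hat\eta_t=\mu^*_{R_t}$. Hence the ex-ante payoff is $\sum_t\eta_t(\Theta)\overline v(\mu^*_{R_t})=\vtent(\mu^0)$.

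\emph{Existence of a representation.} Run \cref{alg:rich-greedy} from $\mu$. Order the types by the step at which they are exhausted, breaking ties and ordering the types outside $\supp\mu$ arbitrarily; types outside $\supp\mu$ go first. Then every residual set $R_t$ of the run is one of the sets $R_j$ of this ordering. Put $\omega_j:=\eta_t(\Theta)$ when $R_j=R_t$, and $\omega_j:=0$ for the skipped sets. Since $\sum_t\eta_t=\mu$ and $\eta_t=\eta_t(\Theta)\mu^*_{R_t}$, this is a representation, and its value is $\vtent(\mu)$.

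\emph{Invariance (main step).} Fix any ordering and any representation $\mu=\sum_j\omega_j\mu^*_{R_j}$. The key observation is uniqueness of maximizers: if $C'\subseteq C$ and $\mu^*_C\in\Delta_{C'}$, then $\mu^*_{C'}=\mu^*_C$. I would first normalize the representation. Whenever $\omega_j>0$ and $\mu^*_{R_j}(\theta_{i_j})=0$, we have $\mu^*_{R_j}\in\Delta_{R_{j+1}}$, so $\mu^*_{R_j}=\mu^*_{R_{j+1}}$. We can therefore move $\omega_j$ onto index $j+1$ without changing the belief sum or the value $\sum_j\omega_j\overline v(\mu^*_{R_j})$. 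Repeating this terminates, since $R_n$ is a singleton whose maximizer charges its own type. The result is a representation in which every positive-weight index $j$ satisfies $\mu^*_{R_j}(\theta_{i_j})>0$.

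Now induct on the number of positive weights. Let $j_0$ be the first index with $\omega_{j_0}>0$ and set $C:=\supp\mu\subseteq R_{j_0}$. Then $\mu^*_{R_{j_0}}\in\Delta_C$, so $\mu^*_C=\mu^*_{R_{j_0}}$, and the algorithm's first belief coincides with the representation's first belief. Type $\theta_{i_{j_0}}$ lies outside $R_{j_0+1}$. Hence $\mu(\theta_{i_{j_0}})=\omega_{j_0}\mu^*_{R_{j_0}}(\theta_{i_{j_0}})$, so its ratio $\mu(\theta)/\mu^*_C(\theta)$ equals $\omega_{j_0}$. Every other ratio is at least $\omega_{j_0}$, because the remaining terms are nonnegative. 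Thus $\omega_{j_0}=\lambda^*(\mu)$ and the first coalitions agree. The remainder $\mu-\omega_{j_0}\mu^*_{R_{j_0}}$ is a nonnegative mass vector represented by the truncated sum, which is still normalized; its total mass is not $1$, but the argument is homogeneous. By induction and uniqueness of the algorithm's output, the normalized representation equals the greedy mass partition, so its value is $\vtent(\mu)$. I expect this step to be the main obstacle. Without normalization, a representation can place a belief that does not charge $\theta_{i_j}$, and then $\omega_{j_0}<\lambda^*$ and the correspondence breaks; the merging step is what repairs this.

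\emph{Characterization.} For (i), run the algorithm from $\mu^*_C$. Its support $C'$ is contained in $C$, so $\mu^*_{C'}=\mu^*_C$, the multiplier is $\lambda^*=1$, and there is a single step with value $\overline v(\mu^*_C)$. For (ii), fix an ordering. Every point of $\mathrm{conv}\{\mu^*_{R_j}\}$ has at least one weight vector, and by invariance $\vtent$ equals $\sum_j\omega_j\overline v(\mu^*_{R_j})$ for any such vector. Taking a convex combination of points together with the corresponding combination of their weight vectors shows that $\vtent$ respects convex combinations, so it is affine there. For uniqueness, let $f$ satisfy (i) and (ii) and take any $\mu$. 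The existence step supplies an ordering and weights with $\mu$ in the corresponding hull. Then (ii) and (i) give $f(\mu)=\sum_j\omega_j f(\mu^*_{R_j})=\sum_j\omega_j\overline v(\mu^*_{R_j})=\vtent(\mu)$.
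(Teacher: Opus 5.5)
Your proposal is correct and follows essentially the same route as the paper: you normalize the representation so that its first positive-weight peak charges a type absent from all later terms, use the two-sided bound on $\lambda^*$ to identify that term with the first greedy increment, and induct on the number of positive weights. The differences are cosmetic: you normalize by shifting weight forward along the deletion order, whereas the paper intersects the sets with $\supp\nu$ and merges equal consecutive sets into a strictly nested family, and you check (i) by running the algorithm directly from $\mu^*_C$ rather than invoking path independence.
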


\begin{figure}[ht!]
\centering
\begin{tikzpicture}[scale = 1]

\normalsize

\def\zlevel{-1}

\pgfplotsset{
every axis plot/.append style = {
    surf, 
    color = green!45!black,  
    line width = 0.35mm,  
    domain = 0:1,
    samples = 25},
}

\tikzset{
      PlaneBorder/.style = {
        color = green!45!black,
        nearly transparent,
    },
}

\tikzset{
      AxisStyle/.style = {
        color = black,
    },
}

\tikzset{
      VerticalDashedLineStyle/.style = {
        color = black,
        dashed,
    },
}

\tikzset{
      PointStyle/.style = {
        color = black,
        mark size = 1.5pt,
    },
}

\tikzset{
      TentPointStyle/.style = {
        color = green!45!black,
        mark size = 2.6pt,
    },
}

\begin{axis}
[
    view={113}{30},
    xtick=\empty,
    ytick=\empty,
    ztick = \empty,
    clip = false,
    hide axis,
    width = 11cm,
    height = 9cm,
]

\def\vx{     (1,0,-0.73) }
\def\vxbase{ (1,0,\zlevel) }

\def\vy{     (0,1,-0.53) }
\def\vybase{ (0,1,\zlevel) }

\def\vz{     (0,0,-0.13) }
\def\vzbase{ (0,0,\zlevel) }

\def\vxy{     (0.45,0.55,-0.125) }
\def\vxybase{ (0.45,0.55,\zlevel) }

\def\vxz{     (0.20,0,-0.09) }
\def\vxzbase{ (0.20,0,\zlevel) }

\def\vyz{     (0,0.30,-0.04) }
\def\vyzbase{ (0,0.30,\zlevel) }

\def\vxyz{     (0.20,0.30,0) }
\def\vxyzbase{ (0.20,0.30,\zlevel) }


\draw[AxisStyle, ->] (0,0,\zlevel) -- (1.4,0,\zlevel) node[below, yshift = -0.1cm] {$\prob(\theta = \theta_1)$};

\draw[AxisStyle, ->] (0,0,\zlevel) -- (0,1.2,\zlevel) node[below, xshift = 0.5cm] {$\prob(\theta = \theta_2)$};

\draw[AxisStyle, ->] (0,0,\zlevel) -- (0,0,0.2) node[right] {$\overline v$};

\draw[AxisStyle] (1,0,\zlevel) -- (0,1,\zlevel) node[below] {};


\draw[VerticalDashedLineStyle] (1,0,\zlevel) -- \vx;
\node[below] at \vxbase {$\mu^*_{1}$};
\node[PointStyle] at \vxbase {\pgfuseplotmark{*}};
\node[TentPointStyle] at \vx {\pgfuseplotmark{*}};

\draw[VerticalDashedLineStyle] (0,1,\zlevel) -- \vy;
\node[below] at \vybase {$\mu^*_{2}$};
\node[PointStyle] at \vybase {\pgfuseplotmark{*}};
\node[TentPointStyle] at \vy {\pgfuseplotmark{*}};

\draw[VerticalDashedLineStyle] (0,0,\zlevel) -- \vz;
\node[below] at \vzbase {$\mu^*_{3}$};
\node[PointStyle] at \vzbase {\pgfuseplotmark{*}};
\node[TentPointStyle] at \vz {\pgfuseplotmark{*}};

\draw[VerticalDashedLineStyle] \vxybase -- \vxy;
\node[below] at \vxybase {$\mu^*_{12}$};
\node[PointStyle] at \vxybase {\pgfuseplotmark{*}};
\node[TentPointStyle] at \vxy {\pgfuseplotmark{*}};

\draw[VerticalDashedLineStyle] \vxzbase -- \vxz;
\node[below] at \vxzbase {$\mu^*_{13}$};
\node[PointStyle] at \vxzbase {\pgfuseplotmark{*}};
\node[TentPointStyle] at \vxz {\pgfuseplotmark{*}};

\draw[VerticalDashedLineStyle] \vyzbase -- \vyz;
\node[below] at \vyzbase {$\mu^*_{23}$};
\node[PointStyle] at \vyzbase {\pgfuseplotmark{*}};
\node[TentPointStyle] at \vyz {\pgfuseplotmark{*}};

\draw[VerticalDashedLineStyle] \vxyzbase -- \vxyz;
\node[below] at \vxyzbase {$\mu^*$};
\node[PointStyle] at \vxyzbase {\pgfuseplotmark{*}};
\node[TentPointStyle] at \vxyz {\pgfuseplotmark{*}};


\addplot3[line width=1.4pt] coordinates { \vx \vxy \vxyz \vx };

\filldraw[PlaneBorder] \vx -- \vxy -- \vxyz -- \vx;


\addplot3[line width=1.4pt] coordinates { \vx \vxz \vxyz \vx };

\filldraw[PlaneBorder] \vx -- \vxz -- \vxyz -- \vx;


\addplot3[line width=1.4pt] coordinates { \vy \vxy \vxyz \vy };

\filldraw[PlaneBorder] \vy -- \vxy -- \vxyz -- \vy;


\addplot3[line width=1.4pt] coordinates { \vy \vyz \vxyz \vy };

\filldraw[PlaneBorder] \vy -- \vyz -- \vxyz -- \vy;


\addplot3[line width=1.4pt] coordinates { \vz \vyz \vxyz \vz };

\filldraw[PlaneBorder] \vz -- \vyz -- \vxyz -- \vz;


\addplot3[line width=1.4pt] coordinates { \vz \vxz \vxyz \vz };

\filldraw[PlaneBorder] \vz -- \vxz -- \vxyz -- \vz;

 \addplot3[
      surf,
      domain = 0:1,
      samples = 60,
      fill = black!50,
      nearly transparent,
      colormap = {gray}{color=(black!60) color=(black!60)},
      unbounded coords = jump,
      z filter/.expression = {z==0 ? nan : z},
      point meta = 0,
 ] { x+y <= 1 ? (-(y-0.3)^2 - (x-0.2)^2) : nan };

\end{axis}

\end{tikzpicture}
\caption{The tent $\vtent$ (green) of the payoff function $\overline v$ (gray) with three types, where $\mu^*_{ij}:=\mu^*_{\{\theta_i,\theta_j\}}$ and $\mu^*_i:=\mu^*_{\{\theta_i\}}=\delta_{\theta_i}$. The tent equals $\overline v$ at these seven beliefs and is affine on each of the six triangles with vertices $\mu^*$, $\mu^*_{ij}$, and $\mu^*_i$.}
\label{fig:tent-three-types}
\end{figure}

\Cref{fig:tent-three-types} illustrates how the tent is constructed with three types. There is an optimal belief $\mu^*$; three beliefs $\mu^*_{12}$, $\mu^*_{13}$, and $\mu^*_{23}$ that are optimal among beliefs supported on two types; and the three point beliefs $\mu^*_1$, $\mu^*_2$, and $\mu^*_3$. These seven beliefs ($2^n-1$ in general) partition the simplex into six triangles ($n!$ possibly degenerate simplices in general, one for each ordering of the types). Each triangle has as vertices $\mu^*$, one of $\mu^*_{12}$, $\mu^*_{13}$, and $\mu^*_{23}$, and a compatible point belief. On each triangle, the tent is affine and equals $\overline v$ at the vertices. Thus the graph of the tent is made up of six triangles joined along line segments that grow from $\mu^*$.

\section{Conclusion}\label{sec:conclusion}

In this paper, we characterized PBE in general, and properties of coalition-proof PBE in particular, for general disclosure games with type-independent sender preferences. Our general characterization shows that coalitions and partitions are the basic building blocks of all equilibria in this class of games. Simply picking coalitions with constrained-maximal payoffs at each step of the algorithm for the construction of PBE guarantees coalition proofness. Our existence results show that coalition-proof PBE exist under various conditions, some of which apply only to the message mapping, others only to the payoff correspondence, and still others to a combination of the two. Coalition-proof PBE always exist with two types but may fail to exist with three. Our results extend the analysis of cheap talk by \textcite{lipnowskiravid}. We also show that every truth-leaning equilibrium of \textcite{hart2017} is a coalition-proof PBE in the class of games they study. Finally, we characterize sender payoffs under a canonical game of maximally flexible disclosure, which can be thought of as analogous to canonical games of cheap talk and Bayesian persuasion \parencite{kamenicagentzkow}.

Our purpose was to provide a natural equilibrium selection concept as generally as possible for disclosure games. There are at least three avenues of future research. First, coalition-proof PBE may be guaranteed to exist under other sets of conditions that have not occurred to us. Second, we informally suggested in our examples (\cref{ex:intro-incumbent,ex:intro-nuisance}) that a desirable property of selected equilibria in disclosure games is that they respond sensibly to perturbations of the game (for example, the equilibrium outcome should not change when incorporating the opportunity to send messages that no sender type wants to send). It may be worth proposing a formal axiomatic approach to equilibrium selection along these lines and verifying whether coalition proofness or other selection criteria have good properties along this axis. Third, naturally, many games of disclosure that are of practical interest feature state-dependent sender payoffs. Our solution concept does not immediately generalize in this case because even senders using the same message will obtain different payoffs. However, the premise of coalitional deviations still appears sensible in principle and could be useful if there is a translation of it well adapted to this class of games.

\printbibliography

@article{Farrell1993,
	title = {Meaning and {Credibility} in {Cheap}-{Talk} {Games}},
	volume = {5},
	issn = {0899-8256},
	url = {https://www.sciencedirect.com/science/article/pii/S0899825683710298},
	doi = {10.1006/game.1993.1029},
	number = {4},
	urldate = {2023-12-11},
	journal = {Games and Economic Behavior},
	author = {Farrell, Joseph},
	month = oct,
	year = {1993},
	pages = {514--531},
}

@article{GrossmanPerry1986,
	title = {Perfect {Sequential} {Equilibrium}},
	volume = {39},
	issn = {0022-0531},
	url = {https://www.sciencedirect.com/science/article/pii/0022053186900220},
	doi = {10.1016/0022-0531(86)90022-0},
	number = {1},
	urldate = {2023-12-11},
	journal = {Journal of Economic Theory},
	author = {Grossman, Sanford J. and Perry, Motty},
	month = jun,
	year = {1986},
	pages = {97--119},
}

@article{BenPorathEtAl2019,
	title = {Mechanisms {With} {Evidence}: {Commitment} and {Robustness}},
	volume = {87},
	issn = {0012-9682},
	shorttitle = {Mechanisms {With} {Evidence}},
	url = {https://www.econometricsociety.org/doi/10.3982/ECTA14991},
	doi = {10.3982/ECTA14991},
	number = {2},
	urldate = {2023-05-26},
	journal = {Econometrica},
	author = {Ben-Porath, Elchanan and Dekel, Eddie and Lipman, Barton L.},
	year = {2019},
	pages = {529--566},
}

@article{rappoport,
  title={Evidence and Skepticism in Verifiable Disclosure Games},
  author={Rappoport, Daniel},
  journal={Theoretical Economics},
  year={2025},
  doi={10.3982/te5423}
}

@article{titova,
  title={Persuasion with verifiable information},
  author={Titova, Maria and Zhang, Kun},
  journal={Journal of Economic Theory},
  year={2025},
  volume={230},
  pages={106102},
  doi={10.1016/j.jet.2025.106102}
}

@article{grossman1981,
  title={The Informational Role of Warranties and Private Disclosure about Product Quality},
  author={Grossman, Sanford J.},
  journal={The Journal of Law and Economics},
  volume={24},
  number={3},
  pages={461--483},
  year={1981},
  publisher={The University of Chicago Press}
}

@article{milgrom1981,
  title={Good News and Bad News: Representation Theorems and Applications},
  author={Milgrom, Paul R.},
  journal={The Bell Journal of Economics},
  pages={380--391},
  year={1981},
  publisher={JSTOR}
}

@article{sher2011,
  title={Credibility and Determinism in a Game of Persuasion},
  author={Sher, Itai},
  journal={Games and Economic Behavior},
  volume={71},
  number={2},
  pages={409--419},
  year={2011},
  publisher={Elsevier}
}

@article{sher2014,
  title={Persuasion and Dynamic Communication},
  author={Sher, Itai},
  journal={Theoretical Economics},
  volume={9},
  number={1},
  pages={99--136},
  year={2014},
  publisher={Wiley Online Library}
}

@article{glazerrubinstein,
  title={On Optimal Rules of Persuasion},
  author={Glazer, Jacob and Rubinstein, Ariel},
  journal={Econometrica},
  volume={72},
  number={6},
  pages={1715--1736},
  year={2004},
  publisher={Wiley Online Library}
}

@article{kamenicagentzkow,
  title={Bayesian Persuasion},
  author={Kamenica, Emir and Gentzkow, Matthew},
  journal={American Economic Review},
  volume={101},
  number={6},
  pages={2590--2615},
  year={2011}
}

@article{lipnowskiravid,
  title={Cheap Talk with Transparent Motives},
  author={Lipnowski, Elliot and Ravid, Doron},
  journal={Econometrica},
  volume={88},
  number={4},
  pages={1631--1660},
  year={2020},
  publisher={Wiley Online Library}
}

@article{bertomeucianciaruso,
  title={Verifiable Disclosure},
  author={Bertomeu, Jeremy and Cianciaruso, Davide},
  journal={Economic Theory},
  volume={65},
  number={4},
  pages={1011--1044},
  year={2018},
  publisher={Springer}
}

@article{hart2017,
  title={Evidence Games: Truth and Commitment},
  author={Hart, Sergiu and Kremer, Ilan and Perry, Motty},
  journal={American Economic Review},
  volume={107},
  number={3},
  pages={690--713},
  year={2017}
}

@article{hagenbachkoessler,
  title={Certifiable Pre-Play Communication: Full Disclosure},
  author={Hagenbach, Jeanne and Koessler, Fr{\'e}d{\'e}ric and Perez-Richet, Eduardo},
  journal={Econometrica},
  volume={82},
  number={3},
  pages={1093--1131},
  year={2014},
  publisher={Wiley Online Library}
}

@article{huntsaueryorke,
  title={Prevalence: a Translation-Invariant “Almost Every” on Infinite-Dimensional Spaces},
  author={Hunt, Brian R. and Sauer, Tim and Yorke, James A.},
  journal={Bulletin of the American Mathematical Society},
  volume={27},
  number={2},
  pages={217--238},
  year={1992}
}

@article{matthews1991,
  title={Refining Cheap-talk Equilibria},
  author={Matthews, Steven A. and Okuno-Fujiwara, Masahiro and Postlewaite, Andrew},
  journal={Journal of Economic Theory},
  volume={55},
  number={2},
  pages={247--273},
  year={1991},
  publisher={Elsevier}
}

@article{undefeated,
  title={Belief-Based Refinements in Signalling Games},
  author={Mailath, George J. and Okuno-Fujiwara, Masahiro and Postlewaite, Andrew},
  journal={Journal of Economic Theory},
  volume={60},
  number={2},
  pages={241--276},
  year={1993},
  publisher={Elsevier}
}

@article{callanderjpe,
  title={The Power of Referential Advice},
  author={Callander, Steven and Lambert, Nicolas S. and Matouschek, Niko},
  journal={Journal of Political Economy},
  volume={129},
  number={11},
  pages={3073--3140},
  year={2021},
  publisher={The University of Chicago Press Chicago, IL}
}

@article{callander,
  title={Cheap Talk in Complex Environments},
  author={Aybas, Yunus C. and Callander, Steven},
  year={2024}
}

@article{perego,
    title={The Selective Disclosure of Evidence: an Experiment},
    author={Farina, Agata and Fr\'echette, Guillaume R. and Ispano, Alessandro and Lizzeri, Alessandro and Perego, Jacopo},
    journal={Review of Economic Studies},
    year={2026},
    doi={10.1093/restud/rdag076}
}

@article{koesslerskreta,
  title={Informed Information Design},
  author={Koessler, Fr{\'e}d{\'e}ric and Skreta, Vasiliki},
  journal={Journal of Political Economy},
  volume={131},
  number={11},
  pages={3186--3232},
  year={2023},
  publisher={The University of Chicago Press Chicago, IL}
}

@article{wenhao,
  title={A Role for Cheap Talk in Disclosure},
  author={Wu, Wenhao},
  year={2022},
  publisher={ShanghaiTech SEM Working Paper}
}

@book{border1985,
  title={Fixed Point Theorems with Applications to Economics and Game Theory},
  author={Border, Kim C.},
  year={1985},
  publisher={Cambridge University Press, Cambridge}
}

@book{rockafellar1970,
  title={Convex Analysis},
  author={Rockafellar, R. Tyrrell},
  year={1970},
  publisher={Princeton University Press, Princeton}
}

@book{aliprantisborder,
  title={Infinite Dimensional Analysis: A Hitchhiker's Guide},
  author={Aliprantis, Charalambos D. and Border, Kim C.},
  year={2006},
  edition={3},
  publisher={Springer, Berlin}
}

@unpublished{gao2026,
  author = {Gao, Ying},
  title = {Inference from Selectively Disclosed Data},
  date = {2026-07-31},
  note = {Working paper},
  eprint = {2204.07191},
  eprinttype = {arxiv}
}

\clearpage
\appendix
\numberwithin{theorem}{section}
\numberwithin{proposition}{section}
\numberwithin{lemma}{section}
\numberwithin{corollary}{section}
\numberwithin{definition}{section}
\numberwithin{remark}{section}
\numberwithin{equation}{section}

\section*{Appendix}
\subsection*{Contents}
\startcontents[appendix]
\printcontents[appendix]{}{1}{}
\medskip

\section[Proofs for Section~\ref{sec:pbe}]{Proofs for \cref{sec:pbe}}

\begin{proof}[Proof of \cref{prop:pbe-characterization}]
Suppose first that $\sigma$ is supported as a PBE by $(\mu,r)$. Let $u(\theta) = \max_{m \in M(\theta)} r(m)$ be the associated equilibrium-payoff function defined above, and list its distinct values as $w_1>\cdots>w_T$. Define
    \[
        C_t:=\{\theta\in\Theta \sep u(\theta) = w_t\},
        \qquad
        \sigma_t:=\sigma|_{C_t},
        \qquad
        R_t:=\bigcup_{s=t}^T C_s.
    \]
    We claim that $\Pi=\{(C_t,\sigma_t,w_t)\}_{t=1}^T$ is a partition. 
    First, notice that the cells are non-empty, disjoint, and cover $\Theta$, and $\sigma_t$ is a coalition strategy on $C_t$. It remains to show that $(C_t,\sigma_t,w_t)$ is a coalition of $G_t$, i.e., it satisfies conditions \cref{C3} and \cref{C4}, for each $t$.
    For \cref{C3}, note that every $m\in X_t$ satisfies $r(m)=w_t$, because it is used on path by a type in $C_t$.
    If some $\theta\in R_t$ can send a message $m\in X_t$, then $u(\theta)\geq r(m)=w_t$. 
    But every type in $R_t=\bigcup_{s\geq t}C_s$ has equilibrium payoff at most $w_t$. 
    Hence $u(\theta)=w_t$, so $\theta\in C_t$. 
    Therefore $M^{-1}_t(X_t)\subseteq C_t$.
    For \cref{C4}, fix $m \in X_t$. Any type that sends $m\in X_t$ under $\sigma$ must have payoff $w_t$, and hence belongs to $C_t$. Thus the on-path posterior $\mu_\sigma(\cdot\sep m)$ is the posterior generated by $\sigma_t$ in $G\vert_{C_t}$. Since PBE beliefs are Bayesian on path and $r(m)\in V(\mu(\cdot\sep m))$, we get $w_t=r(m)\in V(\mu_{\sigma_t}(\cdot\sep m))$. Therefore each $(C_t,\sigma_t,w_t)$ is a coalition in $G_t$, so $\Pi$ is a partition with strictly decreasing payoffs.

    It remains only to check IR. For every message $m$, $\mu(\cdot\sep m)\in \mathcal{F}(m)$ and
    $r(m)\in V(\mu(\cdot\sep m))$, so $r(m)\geq \min_{\nu\in \mathcal{F}(m)}\underline v(\nu)$. Maximizing
    over $m\in M(\theta)$ gives $u(\theta)\geq\underline u(\theta)$. Hence $w_t\geq\underline u(\theta)$ for every $\theta\in C_t$.

    Conversely, let $\Pi=\{(C_t,\sigma_t,w_t)\}_{t=1}^T$ be an IR partition with $w_1\geq\cdots\geq w_T$, and let
    $\sigma=\sigma^\Pi$. For each off-path message $m\notin X(\sigma)$, choose $\nu_m\in \mathcal{F}(m)$ attaining
    $\min_{\nu\in \mathcal{F}(m)}\underline v(\nu)$ and set $a_m:=\underline v(\nu_m)$. Then $a_m\in V(\nu_m)$ and
    $a_m\leq \underline u(\theta)$ for every $\theta$ with $m\in M(\theta)$.

    Define beliefs and payoffs by
    \[
        \mu(\cdot\sep m)=
        \begin{cases}
            \mu_\sigma(\cdot\sep m), & m\in X_t \text{ for some }t,\\
            \nu_m, & m\notin X(\sigma),
        \end{cases}
        \qquad
        r(m)=
        \begin{cases}
            w_t, & m\in X_t \text{ for some }t,\\
            a_m, & m\notin X(\sigma).
        \end{cases}
    \]
    The constructed belief system $\mu$ and payoff function $r$ therefore satisfy the first three requirements in the definition of $\sigma$ being a PBE strategy; it remains only to show that the sender is sequentially rational.
    Fix $\theta \in C_t$. 
    Every message in $X_t$, and hence every message used by $\theta$ under $\sigma$, yields payoff $w_t$. 
    Every message in any earlier evidence set $X_s$, $s<t$, is not available to $\theta$, since that would violate evidence exclusivity \cref{C3} of the step-$s$ coalition.
    Every message in any later evidence set $X_s$, $s>t$, yields payoff $w_s \leq w_t$. 
    Every off-path feasible message yields payoff at most $\underline{u}(\theta) \leq w_t$. 
    Thus, type $\theta$ does not have profitable deviations and $\sigma$ is a PBE strategy.
    \end{proof}

\section[Proofs for Section~\ref{sec:cppbe}]{Proofs for \cref{sec:cppbe}}

\begin{lemma}\label{lem:coalition-compact}
    For every non-empty $R\subseteq\Theta$, the coalition payoff set $\W_R$ is non-empty and compact.
\end{lemma}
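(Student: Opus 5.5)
Non-emptiness follows from \cref{remark:coalitions-exist} applied to the restricted game $G\vert_R$: for any $m\in\Msg_R$, the pooling triple on $P_R(m)$ with any $w\in V(\mu^0_{P_R(m)})$ is a coalition. For compactness, I will first prove boundedness and then closedness. Boundedness is immediate: every coalition payoff $w$ satisfies $w\in V(\mu)$ for some belief $\mu$, so $w\in[v_{\min},\max_{\Delta\Theta}\overline v]$. Both extremes are finite, since $\overline v$ is upper semicontinuous and $\underline v$ is lower semicontinuous on the compact simplex.

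For closedness, I will use a finiteness reduction. There are only finitely many pairs $(C,X)$ with $C\subseteq R$ and $X\subseteq\Msg_R$. Hence it suffices to show, for each fixed pair satisfying exclusivity $M^{-1}_R(X)= C$, that the set $\W_{C,X}$ of payoffs $w$ admitting a coalition strategy $\sigma$ on $C$ with $X(\sigma)=X$ and $w\in V(\mu_\sigma(\cdot\sep m))$ for all $m\in X$ is closed. Then $\W_R$ is a finite union of closed sets.

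Take $w_k\to w$ with witnessing strategies $\sigma_k$. Strategies on $C$ form a compact set, so I pass to a subsequence with $\sigma_k\to\sigma$. The main obstacle is that the support may shrink in the limit. Some message $m\in X$ may have $p_{\sigma_k}(m)\to0$, so $X(\sigma)\subsetneq X$ and exclusivity could fail for the smaller evidence set. I also need to control the limit posteriors.

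I handle this as follows. The posteriors $\mu_{\sigma_k}(\cdot\sep m)$ lie in the compact simplex, so a further subsequence makes them converge to limits $\mu_m$ for all $m\in X$. Upper hemicontinuity with compact values (\cref{A4:V-uhc}) then gives $w\in V(\mu_m)$ for every $m\in X$. For messages with $p_\sigma(m)>0$, the limit $\mu_m$ equals the Bayes posterior $\mu_\sigma(\cdot\sep m)$ by continuity of Bayes' rule. To restore full support on $X$, I will instead build the limit coalition directly. Consider the set of messages that keep positive mass in the limit, $X':=X(\sigma)$. Every type in $C$ sends messages in $X$, and the total mass is conserved, so $X'\neq\varnothing$ and every type puts full weight on $X'$ in the limit. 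Thus $\sigma$ is a coalition strategy on $C$ whose evidence is $X'$, with common payoff $w$ by the limit argument.

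Exclusivity survives for $X'$ in $G\vert_R$ because $M^{-1}_R(X')\subseteq M^{-1}_R(X)=C$. So $(C,\sigma,w)$ is a coalition of $G\vert_R$ and $w\in\W_R$. With this step, there is no need to preserve the original $X$: I only need $w\in\W_R$, not $w\in\W_{C,X}$. So I will run the argument on $\W_R$ directly, extracting a subsequence with a constant type set $C$ and then taking limits of strategies. Compactness of $\W_R$ then follows.
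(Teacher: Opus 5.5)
Your proof is correct and follows essentially the same route as the paper's. Fix the type set and evidence set along a subsequence by finiteness, extract a convergent subsequence of coalition strategies, note that the limit evidence $X(\sigma)$ can only shrink so exclusivity is inherited, and pass the common-payoff condition to the limit on $X(\sigma)$ by upper hemicontinuity (the paper uses the semicontinuity of $\underline v$ and $\overline v$, which amounts to the same thing); the differences are cosmetic: the paper proves closedness of the set of pairs $(\sigma,w)$ for fixed $C$ and then projects onto $w$, your detour through the sets $\W_{C,X}$ is unnecessary (they need not be closed, as you noticed), and your final argument still implicitly needs the evidence set to be constant along the subsequence, or the observation that $m\in X(\sigma)$ forces $m\in X(\sigma_k)$ for large $k$.
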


\begin{proof}
Since $G\vert_R$ is itself a disclosure game and $\W_R$ is its set of coalition payoffs, it suffices to prove the claim for $R=\Theta$; write $\W:=\W_\Theta$. Non-emptiness is \cref{remark:coalitions-exist}.

For each non-empty $C\subseteq\Theta$, let $\mathrm{Coal}(C):=\{(\sigma,w)\sep(C,\sigma,w)\text{ is a coalition of }G\}$. We show every $\mathrm{Coal}(C)$ is compact. Hence its image under the continuous projection $(\sigma,w)\mapsto w$ is compact, and $\W$ is a finite union of compact sets, hence compact.

The coalition strategies on $C$ form the compact set $S:=\prod_{\theta\in C}\Delta M(\theta)$, and every coalition payoff lies in the interval $I:=\bigl[\min_{\Delta\Theta}\underline v,\,\max_{\Delta\Theta}\overline v\bigr]$, which is bounded as $\underline v$ is lower semicontinuous and $\overline v$ upper semicontinuous \cref{A4:V-uhc} on the compact set $\Delta\Theta$. Thus $\mathrm{Coal}(C)$ is a subset of the compact set $S\times I$. It suffices to show that it is closed.

Let $(\sigma_k,w_k)\in\mathrm{Coal}(C)$ with $(\sigma_k,w_k)\to(\sigma,w)$. The evidence sets $X(\sigma_k)$ take finitely many values, so passing to a subsequence, we may assume that $X(\sigma_k)=X$ for all $k$. We claim $(C,\sigma,w)$ is a coalition of $G$. First, $\sigma$ is again a coalition strategy on $C$, since the constraints defining $S$ are preserved in the limit. Second, $X(\sigma)\subseteq X$: if $\sigma(m\sep\theta)>0$, then $\sigma_k(m\sep\theta)>0$ for all large $k$, so $m\in X(\sigma_k)=X$. Each $(C,\sigma_k,w_k)$ satisfies exclusivity \cref{C3}, so $M^{-1}(X)=M^{-1}(X(\sigma_k))\subseteq C$; with $X(\sigma)\subseteq X$, this gives $M^{-1}(X(\sigma))\subseteq C$, which is \cref{C3} for $(C,\sigma,w)$.

Third, fix $m\in X(\sigma)$. The denominator $p_\sigma(m)=\sum_{\theta\in C}\mu^0_C(\theta)\,\sigma(m\sep\theta)$ is positive and $p_{\sigma_k}(m)\to p_\sigma(m)$, so the induced beliefs converge, $\mu_{\sigma_k}(\cdot\sep m)\to\mu_\sigma(\cdot\sep m)$. Since $m\in X(\sigma)\subseteq X=X(\sigma_k)$, condition \cref{C4} for $(C,\sigma_k,w_k)$ gives $\underline v(\mu_{\sigma_k}(\cdot\sep m))\leq w_k\leq\overline v(\mu_{\sigma_k}(\cdot\sep m))$. Letting $k\to\infty$, the semicontinuity of the envelopes from \cref{A4:V-uhc} yields $w\in V(\mu_\sigma(\cdot\sep m))$. As $m\in X(\sigma)$ was arbitrary, $(C,\sigma,w)$ is a coalition of $G$, so $(\sigma,w)\in\mathrm{Coal}(C)$. Hence $\mathrm{Coal}(C)$ is a closed subset of $S\times I$, and therefore compact. Messages in $X\smallsetminus X(\sigma)$ simply drop out of the limit coalition, since the common-payoff condition is required only on the evidence actually used.
\end{proof}

\begin{proof}[Proof of \cref{prop:cppbe-characterization}]
($\Leftarrow$) Let $\Pi$ be greedy. By \cref{dfn:greedy}, $w_t$ belongs to $\W_t\cap[\max_{\theta\in R_t}\underline{u}(\theta),w_{t-1}]$ for every $t$. Thus $w_t\leq w_{t-1}$ and $w_t\geq\max_{\theta\in R_t}\underline{u}(\theta)$ for every $t$, so $\Pi$ has non-increasing payoffs and is IR by \cref{lem:strong-IR}. Hence $\Pi$ is a PBE partition.

Suppose, toward a contradiction, that $(\widetilde C,\widetilde\sigma,\widetilde w)$ is a blocking coalition of $\Pi$. Let $\tau:=\min\{t\sep w_t<\widetilde w\}$. 
This index exists because $\Theta(\widetilde w)\neq\varnothing$. Since payoffs are non-increasing, the cells with payoff strictly below $\widetilde w$ are exactly $C_\tau,\ldots,C_T$: for $s<\tau$, $w_s\geq\widetilde w$ by minimality of $\tau$, while for $s\geq\tau$, $w_s\leq w_\tau<\widetilde w$. Hence $\Theta(\widetilde w)=R_\tau$. By the definition of blocking, $(\widetilde C,\widetilde\sigma,\widetilde w)$ is a coalition of $G\vert_{\Theta(\widetilde w)}=G_\tau$, so $\widetilde w\in\W_\tau$. Moreover, $\widetilde w>w_\tau\geq\max_{\theta\in R_\tau}\underline{u}(\theta)$, and $\widetilde w\leq w_{\tau-1}$ by minimality of $\tau$ (with $w_0=\infty$ if $\tau=1$). Thus
\[
    \widetilde w
    \in
    \W_\tau
    \cap
    \bigl[
        \max_{\theta\in R_\tau}\underline{u}(\theta),\,
        w_{\tau-1}
    \bigr],
    \qquad
    \widetilde w>w_\tau,
\]
contradicting \cref{dfn:greedy}, which makes $w_\tau$ the maximum of this set.

($\Rightarrow$) Let $\Pi$ be a coalition-proof PBE partition, and fix $t$. Since $(C_t,\sigma_t,w_t)$ is a coalition of $G_t$, we have $w_t\in\W_t$. Since $\Pi$ is a PBE partition, its payoffs are non-increasing and it is IR; by \cref{lem:strong-IR}, $w_t\geq\max_{\theta\in R_t}\underline{u}(\theta)$, and non-increasing payoffs give $w_t\leq w_{t-1}$. Hence $w_t$ belongs to the greedy constraint set at step $t$.

Suppose $w_t$ is not the maximum of that set. Then there is a payoff $w\in\W_t$ such that
\[
    w
    \in
    \W_t
    \cap
    \bigl[
        \max_{\theta\in R_t}\underline{u}(\theta),\,
        w_{t-1}
    \bigr],
    \qquad
    w>w_t.
\]
Since $w\in\W_t$, there exists a coalition $(C,\sigma,w)$ of $G_t$. For every $s<t$, non-increasing payoffs give $w_s\geq w_{t-1}\geq w$; for every $s\geq t$, they give $w_s\leq w_t<w$. Therefore $\Theta(w)=R_t\neq\varnothing$. Since $(C,\sigma,w)$ is a coalition of $G_t=G\vert_{\Theta(w)}$, it is a blocking coalition of $\Pi$, contradicting coalition proofness. Thus $w_t$ is the maximum of the greedy constraint set at every step, and $\Pi$ is greedy.

Finally, by definition, a strategy is a coalition-proof PBE strategy if and only if it is associated with some coalition-proof PBE (equivalently, greedy) partition.
\end{proof}

\section{Coalition-Optimal Equilibrium Partitions}\label{sec:coe}

This section develops an auxiliary strengthening of greedy partitions which we use to prove existence theorems.

Call a sequence $\{(C_s,\sigma_s,w_s)\}_{s=1}^{t-1}$ (possibly empty) a \emph{prefix} with residual $R_t$ if, setting $R_1:=\Theta$ and $R_{s+1}:=R_s\smallsetminus C_s$, each $(C_s,\sigma_s,w_s)$ is a coalition of $G_s$ and $R_t\neq\varnothing$. Initial segments of partitions are prefixes; conversely, a prefix is an unfinished run of \cref{alg:partition}. Call such a prefix \emph{greedy} if $w_s=\max\bigl(\W_s\cap[\max_{\theta\in R_s}\underline{u}(\theta),\,w_{s-1}]\bigr)$ at every step $s$ (with $w_0:=\infty$). Every initial segment of a greedy partition is a greedy prefix, and an unfinished run of \cref{alg:greedy-partition} is exactly a greedy prefix.
For $m\in\Msg$, let $F(m):=\{\theta\in\Theta\sep M(\theta)=\{m\}\}$ be the set of types \emph{forced} to send $m$.

\begin{lemma}\label{lem:remaining-games}
    Let $R_t$ be the residual of a prefix. Then:
    \begin{enumerate}
        \item[(i)] for every $m\in\Msg_t$, every type forced to send $m$ remains: $F(m)\subseteq R_t$;
        \item[(ii)] for every $m\in\Msg_t$, the conditional prior on the residual senders of $m$ is feasible: $\mu^0_{P(m)\cap R_t}\in\mathcal{F}(m)$;
        \item[(iii)] $\W_t$ is non-empty and compact, and
        $  \max\W_t\geq\max_{\theta\in R_t}\underline{u}(\theta)$.
    \end{enumerate}
\end{lemma}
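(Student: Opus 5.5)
The three parts will be proved in order, with (i) feeding (ii) and (ii) feeding (iii). Throughout, $R_t$ is the residual of a prefix $\{(C_s,\sigma_s,w_s)\}_{s=1}^{t-1}$, so $R_t\subseteq R_s$ for every $s<t$.

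For (i), I will argue by contradiction. Suppose $m\in\Msg_t$ and some $\theta\in F(m)$ lies outside $R_t$. Then $\theta\in C_s$ for some $s<t$. Since $M(\theta)=\{m\}$, the coalition strategy $\sigma_s$ makes $\theta$ send $m$, so $m\in X_s$. Because $m\in\Msg_t$, some type $\theta'\in R_t$ can send $m$. Since $\theta'\in R_t\subseteq R_s$, we get $\theta'\in M^{-1}_{R_s}(X_s)$. Exclusivity \cref{C3} of the step-$s$ coalition in $G_s$ then gives $\theta'\in C_s$. This is impossible, because $C_s$ is disjoint from $R_{s+1}\supseteq R_t$.

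For (ii), I will exhibit a sender strategy $\sigma$ of the full game $G$ whose posterior after $m$ is exactly $\mu^0_{P(m)\cap R_t}$. Let every type in $P(m)\cap R_t$ send $m$ with probability one; this set is non-empty because $m\in\Msg_t$. Let every type outside $R_t$ that can send $m$ send some other message in $M(\theta)\smallsetminus\{m\}$. Such a message exists by (i), since these types are not forced to send $m$. Let all remaining types send any available message. Then $m\in X(\sigma)$, and by Bayes' rule with full support \cref{A2:full-support} the posterior after $m$ is the conditional prior on $P(m)\cap R_t$. Hence $\mu^0_{P(m)\cap R_t}\in\mathcal{F}(m)$. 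This step is the only real obstacle: feasibility requires that the removed types can avoid $m$, and that is exactly what (i) delivers.

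For (iii), non-emptiness and compactness follow from \cref{lem:coalition-compact} applied to $R=R_t$. For the inequality, pick $\theta^*\in R_t$ maximizing $\underline u$ over $R_t$, and pick $m^*\in M(\theta^*)$ attaining the maximum in \cref{dfn:skeptical-payoff}. Then $m^*\in\Msg_t$. By \cref{remark:coalitions-exist} applied in $G\vert_{R_t}$, the triple $(P(m^*)\cap R_t,\sigma^{m^*},w)$ with $w:=\overline v(\mu^0_{P(m^*)\cap R_t})$ is a coalition of $G_t$, so $w\in\W_t$. By (ii), the belief $\mu^0_{P(m^*)\cap R_t}$ lies in $\mathcal{F}(m^*)$, where $\mathcal{F}$ is computed in $G$, as in the definition of $\underline u$. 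Therefore
\[
w\geq\underline v(\mu^0_{P(m^*)\cap R_t})\geq\min_{\mu\in\mathcal{F}(m^*)}\underline v(\mu)=\underline u(\theta^*),
\]
which gives $\max\W_t\geq\max_{\theta\in R_t}\underline u(\theta)$.
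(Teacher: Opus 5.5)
Your proof is correct and follows the paper's argument step for step: (i) by contradiction via exclusivity at the step where the forced type was removed, (ii) by letting exactly the residual senders of $m$ send it (using (i) so that removed senders can avoid $m$), and (iii) by pooling on a skeptical-payoff-attaining message in $G_t$ and chaining $\overline v\geq\underline v\geq\min_{\mathcal{F}(m)}\underline v$. The only cosmetic difference is that you fix the maximizers $\theta^*$ and $m^*$ up front, whereas the paper fixes arbitrary $\theta$ and $m$ and maximizes at the end.
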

\begin{proof}
    (i) Suppose $\theta_F\in F(m)\cap C_s$ for some $s<t$. Since $M(\theta_F)=\{m\}$ and $\varnothing\neq\supp\sigma_s(\cdot\sep\theta_F)\subseteq M(\theta_F)$, we have $m\in X_s:=X(\sigma_s)$. As $m\in\Msg_t$, some $\theta\in R_t$ has $m\in M(\theta)$; then $\theta\in R_s$ and $m\in M(\theta)\cap X_s$ give $\theta\in M^{-1}_s(X_s)\subseteq C_s$ by exclusivity at step $s$, contradicting $\theta\in R_t\subseteq R_{s+1}=R_s\smallsetminus C_s$.

    (ii) Write $P_t:=P(m)\cap R_t$, which is non-empty since $m\in\Msg_t$ and contains $F(m)$ by (i). Let $\sigma$ be a strategy under which every type in $P_t$ sends $m$ and every other type sends a message in $M(\theta)\smallsetminus\{m\}$; this is possible because no type outside $P_t$ is forced to send $m$, by (i). The senders of $m$ are then exactly $P_t$, so $\mu_\sigma(\cdot\sep m)=\mu^0_{P_t}$, and hence $\mu^0_{P_t}\in\mathcal{F}(m)$.

    (iii) $\Msg_t$ is non-empty, so $G_t$ admits a coalition by \cref{remark:coalitions-exist} and $\W_t\neq\varnothing$; it is compact by \cref{lem:coalition-compact}, so the maximum exists. Fix $\theta\in R_t$ and $m\in M(\theta)\subseteq\Msg_t$. Pooling on $m$ in $G_t$ is a coalition with type set $P_t=P(m)\cap R_t$ and payoff $\overline{v}(\mu^0_{P_t})$, so by (ii),
    \[
        \max\W_t\geq\overline{v}(\mu^0_{P_t})\geq\underline{v}(\mu^0_{P_t})\geq\min_{\mu\in\mathcal{F}(m)}\underline{v}(\mu).
    \]
    Taking the maximum over $m\in M(\theta)$ gives $\max\W_t\geq\underline{u}(\theta)$, and then the maximum over $\theta\in R_t$.
\end{proof}

\begin{definition}\label{dfn:coe}
    A partition $\Pi=\{(C_t,\sigma_t,w_t)\}_{t=1}^T$ of $G$ is a \emph{coalition-optimal equilibrium} (COE) partition if, for every $t$ (with $w_0:=\infty$),
    $ w_t=\max\W_t$ and $w_t\leq w_{t-1}$.
\end{definition}

The maximum exists by \cref{lem:remaining-games}(iii). Unlike the greedy selection, a COE partition maximizes the payoff with no upper constraint and then requires the resulting sequence to be non-increasing.

\begin{lemma}\label{lem:coe-greedy}
    \begin{enumerate}
        \item[(i)] Every COE partition of $G$ is a greedy partition.
        \item[(ii)] A greedy partition of $G$ is a COE partition iff $\max\W_t\leq w_{t-1}$ for every $t$.
    \end{enumerate}
\end{lemma}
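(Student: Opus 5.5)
Both parts reduce to comparing the greedy constraint set $\W_t\cap[\max_{\theta\in R_t}\underline{u}(\theta),\,w_{t-1}]$ with the unconstrained maximum $\max\W_t$. The only external input is \cref{lem:remaining-games}(iii). It guarantees that $\max\W_t$ exists and satisfies $\max\W_t\geq\max_{\theta\in R_t}\underline{u}(\theta)$ at every residual of a prefix, and hence at every step of a partition, since the initial segments of a partition are prefixes.

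For (i), let $\Pi$ be a COE partition and fix $t$. By definition, $w_t=\max\W_t$, so $w_t\in\W_t$. By \cref{lem:remaining-games}(iii), $w_t\geq\max_{\theta\in R_t}\underline{u}(\theta)$, and the COE condition gives $w_t\leq w_{t-1}$. Hence $w_t$ lies in the greedy constraint set. Every element of that set lies in $\W_t$, so it is at most $\max\W_t=w_t$. Therefore $w_t$ is the maximum of the constraint set, which is exactly the greedy condition of \cref{dfn:greedy} at step $t$.

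For (ii), forward direction: if a greedy partition is COE, then $\max\W_t=w_t\leq w_{t-1}$ for every $t$. Reverse direction: suppose $\Pi$ is greedy and $\max\W_t\leq w_{t-1}$ for every $t$. Then $\max\W_t$ itself lies in the constraint set, because the lower bound holds by \cref{lem:remaining-games}(iii) and the upper bound holds by hypothesis. So the maximum of the constraint set equals $\max\W_t$, and greediness gives $w_t=\max\W_t$. The inequality $w_t\leq w_{t-1}$ follows either from the greedy constraint or from the hypothesis, so $\Pi$ is COE.

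There is no real obstacle here. The only point needing care is the lower bound $\max\W_t\geq\max_{\theta\in R_t}\underline{u}(\theta)$, which is what places $\max\W_t$ inside the constraint interval, and this is exactly what \cref{lem:remaining-games}(iii) supplies. At $t=1$ the convention $w_0=\infty$ makes the upper constraint vacuous, so no separate case is needed.
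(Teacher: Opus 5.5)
Your proposal is correct and follows the paper's proof essentially step for step. Both parts rest on \cref{lem:remaining-games}(iii) to place $\max\W_t$ above the IR lower bound, and then compare the greedy constraint set with $\W_t$.
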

\begin{proof}
    (i) Let $\Pi$ be a COE partition and fix $t$. The initial segment through step $t-1$ is a prefix with residual $R_t$, so by \cref{lem:remaining-games}(iii), $w_t=\max\W_t\geq\max_{\theta\in R_t}\underline{u}(\theta)$, and $w_t\leq w_{t-1}$ by definition. Hence $w_t$ lies in the greedy constraint set $\W_t\cap[\max_{\theta\in R_t}\underline{u}(\theta),\,w_{t-1}]$ and equals the maximum of $\W_t$, which contains that set; so $w_t$ is the maximum of the constraint set, and $\Pi$ is greedy. By \cref{prop:cppbe-characterization}, $\Pi$ is a coalition-proof PBE partition.

    (ii) Let $\Pi$ be greedy. If $\Pi$ is COE, then $\max\W_t=w_t\leq w_{t-1}$ for every $t$. Conversely, suppose $\max\W_t\leq w_{t-1}$ for every $t$. By \cref{lem:remaining-games}(iii), $\max\W_t\geq\max_{\theta\in R_t}\underline{u}(\theta)$, so $\max\W_t$ lies in the greedy constraint set, whose maximum is therefore $\max\W_t$ itself; by greediness $w_t=\max\W_t$, and $w_t\leq w_{t-1}$ holds since $w_t$ lies in the constraint set. So $\Pi$ is COE.
\end{proof}

\section[Proofs for Section~\ref{sec:existence-qc-mc}]{Proofs for \cref{sec:existence-qc-mc}}

\begin{proof}[Proof of \cref{lem:pooling}]
Consider coalition $(C,\sigma,w)$ of $G$. By exclusivity of evidence, $M^{-1}(X(\sigma)) = C$.
Iterating M-C over the finite set $X(\sigma)$ delivers message $m^*$ with $P(m^*) = \bigcup_{m\in X(\sigma)} P(m) = M^{-1}(X(\sigma)) = C$.
By \cref{remark:coalitions-exist}, the pooling triple $(C,\sigma^{m^*},w')$ is a coalition for every $w'\in V(\mu^0_C)$, in particular, for $w'=\overline{v}(\mu^0_C)$.

Next, in $G\vert_C$, the prior $\mu^0_C$ is a linear combination of the posteriors $\mu_\sigma(\cdot\sep m)$ induced by $\sigma$ at the messages $m\in X(\sigma)$, with weights $p_\sigma(m)>0$ summing to one. 
Iterating QC gives
\[
    \overline{v}(\mu^0_C)\geq\min_{m\in X(\sigma)}\overline{v}(\mu_\sigma(\cdot\sep m)) \geq w,
\]
where the last inequality follows from the common-payoff condition \cref{C4} for coalition $(C,\sigma,w)$.

For the identity, recall that every pooling payoff $\overline{v}(\mu^0_{P(m)})$ is itself a coalition payoff by \cref{remark:coalitions-exist}. 
Also, by the first part of this proof, every coalition $(C,\sigma,w)$ satisfies $w\leq\overline{v}(\mu^0_C)=\overline{v}(\mu^0_{P(m^*)})$, where $P(m^*)=C$.
\end{proof}

The following merging lemma is used in the existence and essential-uniqueness results below. It shows that, under QC$^*$ or genericity, removing a coalition that attains $\max\W_R$ cannot raise the largest coalition payoff on the residual types $R\smallsetminus C$.

\begin{lemma}\label{lem:merging}
    Assume QC and M-C. Let $R\subseteq\Theta$ be non-empty, let $(C,\sigma,w)$ be a coalition of $G\vert_R$ with $w=\max\W_R$, and suppose $R':=R\smallsetminus C$ is non-empty and
    $\max\W_{R'}>\max\W_R$.
    Then there is a coalition of $G\vert_R$ whose type set $\widetilde C$ satisfies $C\subsetneq\widetilde C$ and whose payoff is $w$; moreover $\overline{v}(\mu^0_C)=\overline{v}(\mu^0_{\widetilde C})=w$.
    
    If QC$^*$ holds, or if $G$ is generic, this configuration is impossible; that is,
    $\max\W_{R\smallsetminus C}\leq\max\W_R $
    whenever a coalition with type set $C$ attains $\max\W_R$ and $R\smallsetminus C\neq\varnothing$.
\end{lemma}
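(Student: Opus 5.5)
The proof has two parts: build the merged coalition under QC and M-C, then show that QC$^*$ or genericity makes the configuration impossible. Everything rests on \cref{lem:pooling} applied to restricted games. That lemma is available because QC and M-C are inherited by every restriction, and because relative preimages in $G\vert_R$ satisfy $P_R(m)=P(m)\cap R$.

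\textbf{Step 1: pass to pooling coalitions.} Let $w':=\max\W_{R'}>w$, attained by some coalition of $G\vert_{R'}$ with type set $C'\subseteq R'$. By \cref{lem:pooling} in $G\vert_{R'}$, there is a message $m'$ with $P(m')\cap R'=C'$, and $\overline v(\mu^0_{C'})\geq w'$. Applying \cref{lem:pooling} in $G\vert_R$ to $(C,\sigma,w)$ gives a message $m$ with $P(m)\cap R=C$ and $\overline v(\mu^0_C)\geq w$. Since pooling on $m$ is a coalition of $G\vert_R$ and $w=\max\W_R$, we get $\overline v(\mu^0_C)=w$.

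\textbf{Step 2: merge.} M-C gives a message $m''$ with $P(m'')=P(m)\cup P(m')$. Set $\widetilde C:=P(m'')\cap R=C\cup(P(m')\cap R)$. The set $P(m')\cap R$ contains $C'$ (because $C'\subseteq R'$) and possibly some types of $C$, so
\[
\widetilde C=C\cup C'\supsetneq C,
\]
where the inclusion is strict because $C'\neq\varnothing$ and $C'\cap C=\varnothing$. Pooling on $m''$ is a coalition of $G\vert_R$ with payoff $\overline v(\mu^0_{\widetilde C})\leq w$. On the other hand, $\mu^0_{\widetilde C}$ is a strict convex combination of $\mu^0_C$ and $\mu^0_{C'}$, so QC gives
\[
\overline v(\mu^0_{\widetilde C})\geq\min\{\overline v(\mu^0_C),\overline v(\mu^0_{C'})\}=\min\{w,\overline v(\mu^0_{C'})\}=w,
\]
using $\overline v(\mu^0_{C'})\geq w'>w$. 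Hence $\overline v(\mu^0_{\widetilde C})=w$, and the pooling triple $(\widetilde C,\sigma^{m''},w)$ is the coalition claimed in the statement.

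\textbf{Step 3: impossibility.} Under QC$^*$, the two beliefs $\mu^0_C$ and $\mu^0_{C'}$ are distinct (their supports are disjoint and non-empty), so QC$^*$ gives $\overline v(\mu^0_{\widetilde C})>\min\{w,\overline v(\mu^0_{C'})\}=w$. This contradicts $\overline v(\mu^0_{\widetilde C})=w$. Under genericity, $C\neq\widetilde C$ yet $\overline v(\mu^0_C)=\overline v(\mu^0_{\widetilde C})=w$, which contradicts injectivity of $C\mapsto\overline v(\mu^0_C)$. The contrapositive is the final claim $\max\W_{R\smallsetminus C}\leq\max\W_R$.

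The step needing the most care is Step 2: the merged preimage must lie within $R$, and $\widetilde C$ must equal exactly $C\cup C'$. This is where the identity $P_R=P\cap R$ and the disjointness of $C'$ from $C$ are used.
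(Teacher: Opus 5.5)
Your proposal is correct and follows the paper's proof essentially step for step: you reduce both coalitions to pooling cells via \cref{lem:pooling}, merge them with M-C into a pooling cell on $C\cup C'$ whose value is pinned to $w$ by QC and the maximality of $w$, and then contradict QC$^*$ or injectivity. One justification should be stated more precisely: $P(m')\cap R\subseteq C\cup C'$ holds because $P(m')\cap R'=C'$ exactly, not merely because $C'\subseteq R'$; this is the decomposition $P(m')\cap R=C'\cup\bigl(P(m')\cap C\bigr)$ that the paper writes out.
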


\begin{proof}
    Since QC and M-C are inherited by restricted games (\cref{sec:existence-qc-mc}), \cref{lem:pooling,remark:coalitions-exist} apply in $G\vert_R$ and $G\vert_{R'}$; the sets $\W_R$ and $\W_{R'}$ are non-empty and compact by \cref{lem:coalition-compact}, so both maxima exist.

    By \cref{lem:pooling} in $G\vert_R$, there is a message $m''$ with $P(m'')\cap R=C$ and $\overline{v}(\mu^0_C)\in\W_R$; since $w=\max\W_R$, this gives $\overline{v}(\mu^0_C)=w$. Let $(C',\sigma',w')$ be a coalition of $G\vert_{R'}$ with $w'=\max\W_{R'}$; by \cref{lem:pooling} in $G\vert_{R'}$, there is a message $m$ with $P(m)\cap R'=C'$ and $\overline{v}(\mu^0_{C'})=w'$.
    Apply M-C in $G\vert_R$ to $m''$ and $m$: there is a message $m'$ with
    \[
        P(m')\cap R
        =
        \bigl(P(m'')\cap R\bigr)\cup\bigl(P(m)\cap R\bigr)
        =
        C\cup\bigl(C'\cup(P(m)\cap C)\bigr)
        =
        C\cup C'
        =:
        \widetilde C,
    \]
    using $P(m)\cap R=(P(m)\cap R')\cup(P(m)\cap C)=C'\cup(P(m)\cap C)$. Since $C'$ is non-empty and disjoint from $C$, we get $C\subsetneq\widetilde C$. By \cref{remark:coalitions-exist} in $G\vert_R$, $(\widetilde C,\sigma^{m'},\widetilde w)$ is a coalition of $G\vert_R$ for every $\widetilde w\in V(\mu^0_{\widetilde C})$.
    Since $C$ and $C'$ are disjoint and non-empty,
    \[
        \mu^0_{\widetilde C}
        =
        \lambda\,\mu^0_C+(1-\lambda)\,\mu^0_{C'},
        \qquad
        \lambda:=\frac{\mu^0(C)}{\mu^0(\widetilde C)}\in(0,1).
    \]
    By QC,
    \[
        \overline{v}(\mu^0_{\widetilde C})
        \geq
        \min\{\overline{v}(\mu^0_C),\overline{v}(\mu^0_{C'})\}
        =
        \min\{w,w'\}
        =
        w,
    \]
    while $\overline{v}(\mu^0_{\widetilde C})\in\W_R$ gives $\overline{v}(\mu^0_{\widetilde C})\leq w$. Hence $\overline{v}(\mu^0_{\widetilde C})=w$ and $(\widetilde C,\sigma^{m'},w)$ is the required coalition.

    Under QC$^*$, since $\mu^0_C\neq\mu^0_{C'}$ and the QC inequality is strict,  $\overline{v}(\mu^0_{\widetilde C})>w$, a contradiction. Under genericity, $\overline{v}(\mu^0_{\widetilde C})=w=\overline{v}(\mu^0_C)$ contradicts $C\subsetneq\widetilde C$. In either case $\max\W_{R'}>\max\W_R$ is impossible.
\end{proof}

\begin{proof}[Proof of \cref{thm:one}]
For existence, assume M-C and QC. We construct a COE partition recursively. Let $R_1:=\Theta$. Given a non-empty residual set $R_t$, the maximum $\max\W_t$ exists by \cref{lem:remaining-games}(iii). Let $(C_t,\sigma_t,w_t)$ be a coalition of $G_t$ with $w_t=\max\W_t$ and $C_t$ of largest cardinality among the type sets attaining $\max\W_t$. Each step removes a non-empty cell, so the recursion stops and returns a partition $\Pi$ with $w_t=\max\W_t$ at every step.

We claim $w_t\leq w_{t-1}$ for every $t\geq2$, so that $\Pi$ is a COE partition. Suppose not, so $\max\W_t=w_t>w_{t-1}=\max\W_{t-1}$ at some step. Apply \cref{lem:merging} with $R=R_{t-1}$, the coalition $(C_{t-1},\sigma_{t-1},w_{t-1})$, and $R\smallsetminus C_{t-1}=R_t$: since $w_{t-1}=\max\W_{t-1}$ and $\max\W_t>\max\W_{t-1}$, it gives a coalition of $G_{t-1}$ with payoff $w_{t-1}$ and type set $\widetilde C\supsetneq C_{t-1}$. This contradicts the choice of $C_{t-1}$ as a type set of largest cardinality attaining $\max\W_{t-1}$. Hence $\Pi$ is a COE partition, so by \cref{lem:coe-greedy}(i) it is a coalition-proof PBE partition and $\sigma^\Pi$ a coalition-proof PBE strategy; in particular a coalition-proof PBE exists, proving existence.

(i) Assume M-C and QC$^*$, or M-C, QC, and genericity. We show by induction on $t$ that every greedy choice satisfies $w_t=\max\W_t$. Suppose $w_s=\max\W_s$ for every step $s<t$ of a greedy prefix, and let $R_t\neq\varnothing$. We first bound $\max\W_t$ from above by $w_{t-1}$. If $t\geq2$, apply the QC$^*$ case (respectively, the genericity case) of \cref{lem:merging} with $R=R_{t-1}$ and the step-$(t-1)$ coalition. By the inductive hypothesis this coalition attains $\max\W_{t-1}=w_{t-1}$, so the lemma gives $\max\W_t\leq w_{t-1}$. If $t=1$, the bound $\max\W_t\leq w_0=\infty$ is trivial. For the lower bound, \cref{lem:remaining-games}(iii) gives $\max\W_t\geq\max_{\theta\in R_t}\underline{u}(\theta)$. The two bounds place $\max\W_t$ in the greedy constraint set $\W_t\cap[\max_{\theta\in R_t}\underline{u}(\theta),\,w_{t-1}]$. This set is non-empty, and its maximum is $\max\W_t$. So every greedy choice at step $t$ takes $w_t=\max\W_t$, which completes the induction.

Since the greedy constraint set is non-empty at every step, no run of \cref{alg:greedy-partition} halts. Each run removes a non-empty cell, so it terminates in a greedy partition. By \cref{prop:cppbe-characterization}, that greedy partition is a coalition-proof PBE partition.

For the characterization, every coalition-proof PBE partition is greedy by \cref{prop:cppbe-characterization}, so the induction gives $w_t=\max\W_t$, while $w_t\leq w_{t-1}$ is part of the greedy constraint. Conversely, a partition satisfying these two conditions is coalition-optimal; by \cref{lem:coe-greedy}(i) it is greedy, and by \cref{prop:cppbe-characterization} it is a coalition-proof PBE partition.

(ii) Assume M-C, QC, and genericity. By (i), every greedy partition again has $w_t=\max\W_t$ at every step. Let $\Pi$ and $\Pi'$ be greedy partitions; we show by induction on $t$ that they have the same residual sets and cells. Suppose $R_t=R'_t$, which holds for $t=1$. If both are empty, the two partitions both have length $t-1$. Otherwise both continue, with $w_t=w'_t=\max\W_t$. By the last claim of \cref{lem:pooling} applied in $G_t$, the cells satisfy $\overline{v}(\mu^0_{C_t})=\max\W_t=\overline{v}(\mu^0_{C'_t})$, so $C_t=C'_t$ because $G$ is generic, and then $R_{t+1}=R'_{t+1}$. Hence $\Pi$ and $\Pi'$ have the same length, cells, and payoffs. By \cref{prop:cppbe-characterization} the coalition-proof PBE partitions are exactly the greedy partitions, so any two coalition-proof PBE partitions of $G$ have the same length, cells, and payoffs.
\end{proof}

\section[Proofs for Section~\ref{sec:existence-betweenness}]{Proofs for \cref{sec:existence-betweenness}}

\begin{proof}[Proof of \cref{lem:btw-value-id}]
(i) In $G\vert_C$, the prior is the convex combination
\[
    \mu^0_C=\sum_{m\in X(\sigma)}p_\sigma(m)\,\mu_\sigma(\cdot\sep m),
\]
with positive weights on the on-path posteriors. By \cref{C4} and single-valuedness, each posterior in this combination has value $w$. Applying B to that convex combination yields $w\leq v(\mu^0_C)\leq w$.

(ii) For every non-empty $R\subseteq\Theta$, write
\[
    v^*(R):=\max_{\varnothing\neq X\subseteq\Msg_R}v\bigl(\mu^0_{M^{-1}_R(X)}\bigr).
\]
The maximum is over finitely many sets, and each relative preimage is non-empty because every message in $\Msg_R$ is available to some type in $R$.

Every coalition $(C,\sigma,w)$ of $G\vert_R$ has type set $C=M^{-1}_R(X(\sigma))$ by exclusivity and its converse. Hence \cref{lem:btw-value-id}(i), applied in the restricted game, gives $w=v(\mu^0_C)\leq v^*(R)$, so $\max\W_R\leq v^*(R)$.

Let $X^*\subseteq\Msg_R$ attain $v^*(R)$ and set $C^*:=M^{-1}_R(X^*)$. Consider the auxiliary game
\[
    G':=(C^*,X^*,M',\mu^0_{C^*},V_{C^*}),
    \qquad
    M'(\theta):=M(\theta)\cap X^* .
\]
Note that $G'$ is a disclosure game: by construction every type in $C^*$ can send some message in $X^*$, and every message in $X^*$ is available to some type of $C^*$; the remaining assumptions are inherited from $G$. However, $G'$ is not simply a restricted game of $G$,
because its message space is set to $X^*$ rather than $\Msg_{C^*}$, so some messages available to types in $C^*$ are not available in $G'$.
Nevertheless, the definitions and equilibrium results apply to it.

Every disclosure game has a PBE partition with decreasing payoffs (\cref{lem:pbe-existence,prop:pbe-characterization}); applied to $G'$ this gives a partition $\{(\widetilde C_\tau,\widetilde\sigma_\tau,\widetilde w_\tau)\}_{\tau=1}^k$ with $\widetilde w_1\geq\cdots\geq\widetilde w_k$. Each cell is a coalition in a residual game of $G'$, so \cref{lem:btw-value-id} gives $\widetilde w_\tau=v(\mu^0_{\widetilde C_\tau})$. Since the cells partition $C^*$,
\[
    \mu^0_{C^*}
    =
    \sum_\tau\frac{\mu^0(\widetilde C_\tau)}{\mu^0(C^*)}\,\mu^0_{\widetilde C_\tau}
\]
is a convex combination with positive weights. From quasiconvexity of $v$, we get
\[
    v^*(R)=v(\mu^0_{C^*})\leq\max_\tau\widetilde w_\tau=\widetilde w_1.
\]

Next notice that $(\widetilde{C}_1,\widetilde\sigma_1,\widetilde w_1)$ is a coalition of $G\vert_R$. Conditions \cref{C1}, \cref{C2}, and \cref{C4} carry over because $\widetilde\sigma_1$ uses only messages in $M'(\theta)\subseteq M(\theta)$ and induces the same posteriors when viewed in $G\vert_R$. For exclusivity in $G\vert_R$, take $\theta\in R$ with $M(\theta)\cap X(\widetilde\sigma_1)\neq\varnothing$. Since $X(\widetilde\sigma_1)\subseteq X^*$, this type lies in $C^*$; then $M'(\theta)\cap X(\widetilde\sigma_1)\neq\varnothing$, so exclusivity of the first cell in $G'$ places $\theta$ in $\widetilde C_1$. Thus $\widetilde w_1\in\W_R$. Combining the inequalities,
\[
    \widetilde w_1\leq\max\W_R\leq v^*(R)\leq\widetilde w_1,
\]
so $(\widetilde{C}_1,\widetilde\sigma_1,\widetilde w_1)$ is a coalition of $G\vert_R$ paying $v^*(R)$.
\end{proof}

\begin{lemma}\label{lem:btw-merging}
    Assume B. Let $R\subseteq\Theta$ be non-empty, let $(C,\sigma,w)$ be a coalition of $G\vert_R$ with $w=\max\W_R$, and suppose $R':=R\smallsetminus C$ is non-empty and $G\vert_{R'}$ admits a coalition $(C',\sigma',w')$ with $w'>w$. Then
    $M^{-1}_R\bigl(X(\sigma)\cup X(\sigma')\bigr)=C\cup C'$
        and $v(\mu^0_{C\cup C'})=w$.
        
    If in addition B$^*$ holds, no such configuration exists, so
    $ \max\W_{R\smallsetminus C}\leq\max\W_R $  whenever a coalition with type set $C$ attains $\max\W_R$ and $R\smallsetminus C\neq\varnothing$.
\end{lemma}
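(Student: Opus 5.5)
The plan is to compute the relative preimage of the merged evidence directly from exclusivity. Then I would sandwich the pooling value of $C\cup C'$ between $w$ from below, using quasiconcavity, and $\max\W_R=w$ from above, using \cref{lem:btw-value-id}(ii).

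\emph{Step 1 (the preimage).} By exclusivity \cref{C3} of $(C,\sigma,w)$ in $G\vert_R$ and the automatic reverse inclusion, $M^{-1}_R(X(\sigma))=C$. Likewise, $(C',\sigma',w')$ is a coalition of $G\vert_{R'}$, so $M^{-1}_{R'}(X(\sigma'))=C'$. Because $R=C\cup R'$ is a disjoint union,
\[
    M^{-1}_R\bigl(X(\sigma')\bigr)=M^{-1}_{R'}\bigl(X(\sigma')\bigr)\cup\bigl(M^{-1}_R(X(\sigma'))\cap C\bigr)\subseteq C'\cup C.
\]
Relative preimages distribute over unions of message sets, so
\[
    M^{-1}_R\bigl(X(\sigma)\cup X(\sigma')\bigr)=C\cup M^{-1}_R\bigl(X(\sigma')\bigr)=C\cup C'.
\]
The last equality uses $C'\subseteq M^{-1}_R(X(\sigma'))$. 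This preimage bookkeeping is the only place where some care is needed: types in $C$ may be able to send messages in $X(\sigma')$, but they are absorbed into $C$ anyway.

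\emph{Step 2 (the value).} $X(\sigma)\cup X(\sigma')$ is a non-empty subset of $\Msg_R$, so \cref{lem:btw-value-id}(ii) gives $v(\mu^0_{C\cup C'})\leq\max\W_R=w$. Conversely, \cref{lem:btw-value-id}(i) applies to $G\vert_R$ and to $G\vert_{R'}$ (betweenness is inherited by restricted games), giving $v(\mu^0_C)=w$ and $v(\mu^0_{C'})=w'$. Since $C$ and $C'$ are disjoint and non-empty, we can write
\[
    \mu^0_{C\cup C'}=\lambda\mu^0_C+(1-\lambda)\mu^0_{C'}
\]
with $\lambda=\mu^0(C)/\mu^0(C\cup C')\in(0,1)$. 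Quasiconcavity (part of B) then yields $v(\mu^0_{C\cup C'})\geq\min\{w,w'\}=w$. Hence $v(\mu^0_{C\cup C'})=w$.

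\emph{Step 3 (B$^*$).} Since $w'>w$, we have $v(\mu^0_C)\neq v(\mu^0_{C'})$, so strict betweenness gives $v(\mu^0_{C\cup C'})>\min\{w,w'\}=w$. This contradicts Step 2, so no such configuration exists. For the displayed consequence, suppose instead that $\max\W_{R\smallsetminus C}>\max\W_R$. The maximum is attained by \cref{lem:coalition-compact}, so some coalition of $G\vert_{R\smallsetminus C}$ pays $w'=\max\W_{R\smallsetminus C}>w$. This is exactly the excluded configuration, so $\max\W_{R\smallsetminus C}\leq\max\W_R$.
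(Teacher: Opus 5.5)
Your proposal is correct and follows the paper's proof essentially step for step. Both arguments compute the preimage via distributivity and exclusivity, sandwich $w\leq v(\mu^0_{C\cup C'})\leq\max\W_R=w$ using quasiconcavity and Lemma~\ref{lem:btw-value-id}(ii), and derive a contradiction from the strict inequality under B$^*$; the one cosmetic difference is that you invoke compactness to obtain a coalition attaining $\max\W_{R\smallsetminus C}$, whereas the paper simply observes that no coalition of $G\vert_{R'}$ can pay more than $\max\W_R$.
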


\begin{proof}
Since preimages distribute over unions, $M^{-1}_R(X(\sigma)\cup X(\sigma'))
    =
    M^{-1}_R(X(\sigma))\cup M^{-1}_R(X(\sigma')) $.
The first term is $C$ by exclusivity and its converse. For the second term, we know $C' \subseteq M^{-1}_R(X(\sigma')) \subseteq C \cup C'$.
The union is therefore exactly $C\cup C'$.

By \cref{lem:btw-value-id}, $v(\mu^0_C)=w$ and $v(\mu^0_{C'})=w'>w$. The sets $C$ and $C'$ are disjoint and non-empty, so
\[
    \mu^0_{C\cup C'}
    =
    \lambda\mu^0_C+(1-\lambda)\mu^0_{C'},
    \qquad
    \lambda:=\frac{\mu^0(C)}{\mu^0(C\cup C')}\in(0,1).
\]
By B, $v(\mu^0_{C\cup C'})\geq \min\{w,w'\}=w$.
On the other hand, $C\cup C'$ is a relative preimage in $G\vert_R$, so \cref{lem:btw-value-id}(ii) gives
$ v(\mu^0_{C\cup C'})\leq v^*(R)=\max\W_R=w$.
Thus $v(\mu^0_{C\cup C'})=w$.

If B$^*$ holds, the inequality becomes strict: $v(\mu^0_{C\cup C'})>w$, contradicting the equality just proved. Therefore no coalition of $G\vert_{R'}$ can pay more than $\max\W_R$.
\end{proof}

Existence under plain betweenness in \cref{thm:two} rests on the following ordering of each level set of $v$.

\begin{lemma}\label{lem:btw-order}
    Assume B and fix $y\in\mathbb{R}$. Write $L:=\{\mu\in\Delta\Theta\sep v(\mu)=y\}$, $L^{++}:=\{v>y\}$, and $L^{--}:=\{v<y\}$. There is a complete and transitive relation $\succeq$ on $L$ such that:
    \begin{enumerate}
        \item[(i)] If $x\in L$, $u\in L^{++}$, $\alpha\in(0,1)$, and $z:=\alpha x+(1-\alpha)u\in L$, then $z\succ x$.
        \item[(ii)] If $\bar\mu\in L$ and $\bar\mu=\sum_\tau\alpha_\tau\mu_\tau$ is a convex combination with positive weights of beliefs $\mu_\tau\in L\cup L^{--}$, then $\mu_\tau\in L$ and $\mu_\tau\succeq\bar\mu$ for some $\tau$.
    \end{enumerate}
\end{lemma}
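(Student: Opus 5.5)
The plan is to build $\succeq$ as a \emph{lexicographic linear order}: find linear functionals $f_1,\dots,f_d$ on $\mathbb{R}^\Theta$, set $\ell(\mu):=(f_1(\mu),\dots,f_d(\mu))$, and define $x\succeq z$ on $L$ iff $\ell(x)\geq_{\mathrm{lex}}\ell(z)$. Such a relation is automatically complete and transitive. It is also compatible with convex combinations: if $\bar\mu=\sum_\tau\alpha_\tau\mu_\tau$ with positive weights, then $\ell(\bar\mu)=\sum_\tau\alpha_\tau\ell(\mu_\tau)$, so $\max_\tau\ell(\mu_\tau)\geq_{\mathrm{lex}}\ell(\bar\mu)$.

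First I record the convexity consequences of B. The sets $\{v\geq y\}$ and $\{v\leq y\}$ are convex by quasiconcavity and quasiconvexity. Hence $L$ is convex, and $L^{++}$, $L^{--}$ are convex as well, since quasiconcavity and quasiconvexity also make strict contour sets convex.

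The tool is the lexicographic separation theorem: a convex set $K\subseteq\mathbb{R}^d$ with $0\notin K$ lies in an open lexicographic half-space $\{k:\ell(k)>_{\mathrm{lex}}0\}$. I would prove it by induction on dimension. Weakly separate $0$ from $K$ by some $f_1\neq0$ with $f_1\geq0$ on $K$. Then $K\cap\{f_1=0\}$ is convex, misses $0$, and lives in a lower-dimensional subspace, so I recurse there; points with $f_1>0$ are already lex-positive. I apply this to a difference set so that $\ell$ ranks, in lexicographic order, $L^{--}$ below $L$ and $L$ below $L^{++}$. Concretely, I want
\[
\ell(u)>_{\mathrm{lex}}\ell(x)\ \text{ for } u\in L^{++},\,x\in L,
\qquad
\ell(x)>_{\mathrm{lex}}\ell(b)\ \text{ for } x\in L,\,b\in L^{--}.
\]
Given these two properties, (i) is immediate: $\ell(z)-\ell(x)=(1-\alpha)\bigl(\ell(u)-\ell(x)\bigr)>_{\mathrm{lex}}0$.

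For (ii), split the components into those in $L$ and those in $L^{--}$. Write $\bar\mu=\beta a+(1-\beta)b$, where $a$ is the normalized average of the $L$-components and $b$ that of the $L^{--}$-components. By convexity, $a\in L$ and $b\in L^{--}$ whenever they exist. If every component were in $L^{--}$, convexity of $L^{--}$ would give $\bar\mu\in L^{--}$, which is false, so $\beta>0$. If $\beta<1$, then $\ell(a)-\ell(\bar\mu)=\tfrac{1-\beta}{\beta}\bigl(\ell(\bar\mu)-\ell(b)\bigr)>_{\mathrm{lex}}0$ by the second separation property. In either case $\ell(a)\geq_{\mathrm{lex}}\ell(\bar\mu)$, and averaging over the $L$-components yields some $\mu_\tau\in L$ with $\mu_\tau\succeq\bar\mu$.

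The main obstacle is obtaining a \emph{single} $\ell$ with both separation properties. The natural difference set is $(L^{++}-L)\cup(L-L^{--})\cup(L^{++}-L^{--})$, which is not obviously convex. My first attempt would be to show that its convex hull misses $0$. Suppose, toward a contradiction, that a positive combination $\sum\gamma_i(p_i-q_i)$ equals $0$ with each pair $(p_i,q_i)$ strictly ordered by value. Regrouping the $p$'s and the $q$'s gives two equal mixtures; averaging each class with convexity reduces this to an identity $\gamma^{+}u+\gamma^{0}x+\gamma^{-}b=\gamma'^{+}u'+\gamma'^{0}x'+\gamma'^{-}b'$ with $u,u'\in L^{++}$, $x,x'\in L$, $b,b'\in L^{--}$, in which the left side carries strictly more mass on higher level classes. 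I would then use betweenness along segments to derive a contradiction; for instance, $v$ is monotone along any segment, since it is quasiconcave and quasiconvex. If this fails, the fallback is a two-layer key: first lex-separate $L^{--}$ from $\{v\geq y\}$ and then $L^{++}$ from $\{v\leq y\}$. I would check directly that the dependence needed in each of (i) and (ii) involves only one of the two layers.
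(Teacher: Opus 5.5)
Your convexity facts, the lexicographic separation theorem, and the derivation of (i) and (ii) from the two strict separation properties are all correct. When $L^{++}=\varnothing$ or $L^{--}=\varnothing$, only one property is needed, and the theorem gives it directly from the convex set $L^{++}-L$ or $L-L^{--}$; this is essentially the paper's Cases 2 and 3.

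\textbf{The gap: no single key with both properties need exist.} The problem is the two-sided case, and it breaks the whole framework. Take $n=3$ and a point $p$ in the affine plane of $\Delta\Theta$ outside the simplex, such that the simplex lies in an open half-plane whose boundary passes through $p$. Let $\phi(\mu)$ be the angle of $\mu-p$, and set $v:=g\circ\phi$ with $g(\phi)=y+\min\{\phi-\phi_b,0\}+\max\{\phi-\phi_a,0\}$, where the rays from $p$ at angles $\phi_b<\phi_a$ both cross the interior of the simplex. Every upper and lower contour set of $v$ is the simplex cut by a half-plane through $p$, so B holds. Then $L$ is a full-dimensional wedge. Write $a,b$ for affine functions with $L^{++}=\{a>0\}\cap\Delta\Theta$ and $L^{--}=\{b<0\}\cap\Delta\Theta$. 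The sides of $L$ lie on the lines $\{a=0\}$ and $\{b=0\}$, which are not parallel.

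Suppose $\ell$ had both of your properties, and let $f$ be its first component that is nonconstant on $\Delta\Theta$. Then $f\geq\sup_L f$ on $L^{++}$ and $f\leq\inf_L f$ on $L^{--}$. Every point of each side is a limit of points of $L^{++}$, respectively $L^{--}$, so $f$ is constant on both sides. An affine function constant on two non-parallel segments of the plane is constant, a contradiction. By your own separation theorem, $0$ therefore lies in the convex hull of your difference set. Explicitly, $u+x'=x+b'$ with $u\in L^{++}$, $x,x'\in L$, $b'\in L^{--}$, obtained by small shifts across the two sides along a direction in which $a$ increases and $b$ decreases. So your first attempt must fail, and no betweenness argument can rescue it.

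\textbf{The fallback fails too, and so does every lexicographic linear order.} In a lexicographic key the first layer decides every comparison in which it is strict, so neither property depends on only one layer. In fact, in this example no lexicographic linear order on $L$ satisfies (i) and (ii).
\begin{itemize}
\item For $x$ in the relative interior of $L$ and any $u\in L^{++}$, the point $\alpha x+(1-\alpha)u$ lies in $L$ for $\alpha$ near $1$. So (i) forces the first component $f$ nonconstant on $L$ to satisfy $f(u)\geq f(x)$, and the continuity argument gives $f=c\,a+k$ with $c>0$.
\item Since $a$ is nonconstant along the side $\{b=0\}$, pick $b'\in L^{--}$ and $\bar\mu$ in the relative interior of $L$ near that side with $a(b')>a(\bar\mu)$. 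Set $x':=\bar\mu+\varepsilon(\bar\mu-b')$ for small $\varepsilon>0$.
\item Then $x'\in L$ and $\bar\mu=\tfrac{1}{1+\varepsilon}x'+\tfrac{\varepsilon}{1+\varepsilon}b'$. The only $L$-component satisfies $a(x')<a(\bar\mu)$, so $x'\prec\bar\mu$, violating (ii).
\end{itemize}

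\textbf{The missing idea: a nonlinear rank.} This is what the paper's Case 4 supplies. Take proper separators $a$ (between $\{v\leq y\}$ and $L^{++}$) and $b$ (between $L^{--}$ and $\{v\geq y\}$). Rank $x\in L$ by $\rho(x)=b(x)/(b(x)-a(x))$, the value of $\lambda$ at which $h_\lambda:=\lambda a+(1-\lambda)b$ vanishes at $x$. In the example this is a monotone function of the angle $\phi$, not a linear functional. Since $h_\lambda\geq0$ on $L^{++}$ and $h_\lambda\leq0$ on $L^{--}$ for every $\lambda$:
\begin{itemize}
\item (i) follows from $h_{\rho(x)}(z)=(1-\alpha)h_{\rho(x)}(u)\geq0$;
\item (ii) follows from $\sum_\tau\alpha_\tau h_{\rho(\bar\mu)}(\mu_\tau)=0$.
\end{itemize}
Ties are broken by induction on dimension inside the slice $\{h_\lambda=0\}$. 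That slice is lower-dimensional because $L^{--}$ is relatively open, which is where semicontinuity of $v$ enters.
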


\begin{proof}
We use the following proper-separation fact. If two non-empty convex sets $A_1,A_2\subseteq\mathbb{R}^\Theta$ are disjoint, then their relative interiors are disjoint, so there is an affine functional $h$ with $h\leq0$ on $A_1$, $h\geq0$ on $A_2$, and $h$ not identically zero on $A_1\cup A_2$ \parencite[Theorem~11.3]{rockafellar1970}.

The order is constructed recursively on convex subsets of the simplex. Fix a non-empty compact convex set $F\subseteq\Delta\Theta$ and write
\[
    S_F:=L\cap F,\qquad
    U_F:=L^{++}\cap F,\qquad
    B_F:=L^{--}\cap F.
\]
Under B, these three sets are convex; so are $L^{+}:=\{v\geq y\}$ and $L^{-}:=\{v\leq y\}$. We prove, by induction on the affine dimension $d$ of $F$, that there is a complete and transitive relation $\succeq_F$ on $S_F$ satisfying (i) and (ii), with $(S_F,U_F,B_F)$ in place of $(L,L^{++},L^{--})$. The lemma is the case $F=\Delta\Theta$. If $S_F=\varnothing$, there is nothing to order, so assume $S_F\neq\varnothing$.

\emph{Base case $d=0$.} Then $F$ is a singleton. Declare its points indifferent. Condition (i) is vacuous, because it would require the singleton to lie in both $S_F$ and $U_F$, and (ii) holds because every belief in the decomposition is the singleton.

\emph{Inductive step.} Let $d\geq1$ and assume the claim for all non-empty compact convex sets of affine dimension below $d$. In Cases 2--4, we construct the order in two layers. First, we assign each $x\in S_F$ a rank $\rho(x)$. We choose the rank so that, for every $r\in\rho(S_F)$, all beliefs in $S_F$ with rank $r$ lie in a compact convex set $F_r\subseteq F$ of affine dimension below $d$. The inductive hypothesis gives an order $\succeq_{F_r}$ on each such slice. We then order beliefs lexicographically:
\[
    x\succeq_F x'
    \quad:\iff\quad
    \rho(x)>\rho(x')
    \ \text{ or }\
    \bigl(\rho(x)=\rho(x')\ \text{ and }\ x\succeq_{F_{\rho(x)}}x'\bigr).
\]
Thus, the rank compares beliefs in different slices, while the inductively constructed order compares beliefs within the same slice. The resulting relation is complete and transitive.

\emph{Case 1: $U_F=B_F=\varnothing$.} Declare indifference; (i) is vacuous and (ii) holds with any $\mu_\tau$, all of which lie in $S_F$.

\emph{Case 2: $U_F\neq\varnothing$, $B_F=\varnothing$.} Separate the weak lower set $L^{-}\cap F$ from the strict upper set $U_F$. Thus there is an affine $h$ with $h\leq0$ on $L^{-}\cap F$, $h\geq0$ on $U_F$, and $h$ not identically zero on their union. Put $\rho:=h$ on $S_F$ and $F_r:=\{x\in F\sep h(x)=r\}$. The functional $h$ is non-constant on $\aff(F)$: if it were constant on $F$, its value would be both non-positive on $S_F$ and non-negative on $U_F$, hence zero, contradicting proper separation. Each slice is therefore lower-dimensional.

For (i), write $z=\alpha x+(1-\alpha)u$ with $x,z\in S_F$ and $u\in U_F$. Then
\[
    h(z)-h(x)=(1-\alpha)\bigl(h(u)-h(x)\bigr)\geq0.
\]
If the inequality is strict, then $z\succ_F x$ by rank. If it is an equality, then $x,z,u\in F_0$, and the inductive version of (i) inside $F_0$ gives $z\succ_F x$.

For (ii), all beliefs in the decomposition lie in $S_F$, because $B_F=\varnothing$. If some component has $h(\mu_\tau)>h(\bar\mu)$, it is ranked above $\bar\mu$. Otherwise every component has $h(\mu_\tau)\leq h(\bar\mu)$, and the affine identity
\[
    \sum_\tau\alpha_\tau h(\mu_\tau)=h(\bar\mu)
\]
forces all components to have the same value $r:=h(\bar\mu)$. They all lie in $F_r$, so the inductive version of (ii) in $F_r$ applies.

\emph{Case 3: $U_F=\varnothing$, $B_F\neq\varnothing$.} Separate the strict lower set $B_F$ from the weak upper set $L^{+}\cap F$. Let $h$ be affine with $h\leq0$ on $B_F$, $h\geq0$ on $L^{+}\cap F$, and not identically zero on their union. Define $\rho$ and the slices as in Case 2. The same argument shows that $h$ is non-constant on $\aff(F)$. Condition (i) is vacuous because $U_F=\varnothing$.

For (ii), at least one component of the decomposition lies in $S_F$. If all components lay in $B_F$, the quasiconvex half of B would give
\[
    v(\bar\mu)\leq\max_\tau v(\mu_\tau)<y,
\]
contradicting $\bar\mu\in S_F$. Let $r:=h(\bar\mu)$. If some component in $S_F$ has $h(\mu_\tau)>r$, it works. Otherwise every term in
\[
    \sum_\tau\alpha_\tau\bigl(h(\mu_\tau)-r\bigr)=0
\]
is non-positive: this is true for $\mu_\tau\in S_F$ by assumption and for $\mu_\tau\in B_F$ because $h(\mu_\tau)\leq0\leq r$. Hence all terms vanish, every component lies in $F_r$, and the inductive version of (ii) in $F_r$ applies.

\emph{Case 4: $U_F\neq\varnothing$ and $B_F\neq\varnothing$.} When both strict sides are non-empty, the two properties require different separators: one controls mixtures with beliefs in $U_F$ for property~(i), while the other controls decompositions involving beliefs in $B_F$ for property~(ii). Take affine separators $a$ and $b$, each proper in the sense above, such that
\[
    a\leq0\text{ on }L^{-}\cap F,\quad a\geq0\text{ on }U_F,
    \qquad
    b\leq0\text{ on }B_F,\quad b\geq0\text{ on }L^{+}\cap F.
\]
On $S_F$ we have $a\leq0\leq b$. For $\lambda\in[0,1]$, put
\[
    h_\lambda:=\lambda a+(1-\lambda)b.
\]
Then $h_\lambda\geq0$ on $U_F$ and $h_\lambda\leq0$ on $B_F$. Call $x\in S_F$ \emph{degenerate} if $a(x)=b(x)=0$; otherwise $b(x)-a(x)>0$. Define
\[
    \rho(x):=\frac{b(x)}{b(x)-a(x)}\in[0,1]\ \text{ for non-degenerate }x,
    \qquad
    \rho(x):=1\ \text{ for degenerate }x.
\]
A direct calculation gives, for non-degenerate $x\in S_F$,
\[
    h_\lambda(x)=\bigl(\rho(x)-\lambda\bigr)\bigl(b(x)-a(x)\bigr).
\]
Thus the sign of $h_\lambda(x)$ is the sign of $\rho(x)-\lambda$, while every degenerate point has $h_\lambda(x)=0$ for all $\lambda$. Let $ F_\lambda:=\{x\in F\sep h_\lambda(x)=0\}$.
Every point of $S_F$ with rank $\lambda$ lies in $F_\lambda$, with degenerate points placed in $F_1$.

We now check that every relevant slice is lower-dimensional. This is the only place where semicontinuity is used. Because $v$ is upper semicontinuous, $L^{--}$ is relatively open in $\Delta\Theta$; since $B_F$ is non-empty, $\aff(B_F)=\aff(F)$. Fix $\lambda$ with $F_\lambda\cap S_F\neq\varnothing$. If $h_\lambda$ were constant on $F$, then evaluating it at a point of $F_\lambda$ would make the constant zero. For $\lambda\in(0,1)$, the equality $\lambda a+(1-\lambda)b=0$ on $B_F$, together with $a\leq0$ and $b\leq0$ there, would force $a=b=0$ on $B_F$. Since $\aff(B_F)=\aff(F)$, $b$ would vanish on all of $\aff(F)$, contradicting proper separation for $b$. For $\lambda=1$, the same argument uses proper separation for $a$; for $\lambda=0$, it uses proper separation for $b$. Hence $h_\lambda$ is non-constant on $\aff(F)$, and $F_\lambda$ has affine dimension below $d$.

For (i), put $\lambda:=\rho(x)$. Then $h_\lambda(x)=0$ and
\[
    h_\lambda(z)=(1-\alpha)h_\lambda(u)\geq0.
\]
If $h_\lambda(z)>0$, then $z$ has rank strictly above $\lambda=\rho(x)$, so $z\succ_F x$. If $h_\lambda(z)=0$, then $x,z,u\in F_\lambda$. Any strict rank increase already proves the claim; otherwise the tie is resolved inside $F_\lambda$, where the inductive version of (i) gives $z\succ_F x$.

For (ii), first note as in Case 3 that some component lies in $S_F$. Put $\bar\lambda:=\rho(\bar\mu)$, so $h_{\bar\lambda}(\bar\mu)=0$. If some component in $S_F$ has rank above $\bar\lambda$, it works. Otherwise every component in $S_F$ has non-positive $h_{\bar\lambda}$ value, and every component in $B_F$ has non-positive $h_{\bar\lambda}$ value as well. Since
\[
    \sum_\tau\alpha_\tau h_{\bar\lambda}(\mu_\tau)=h_{\bar\lambda}(\bar\mu)=0,
\]
all these values are zero, so every component lies in $F_{\bar\lambda}$. The inductive version of (ii) in $F_{\bar\lambda}$ gives a component $\mu_{\tau^*}\in S_{F_{\bar\lambda}}$ with $\mu_{\tau^*}\succeq_{F_{\bar\lambda}}\bar\mu$. This component is weakly above $\bar\mu$ in the order on $F$, either by a higher rank or by the tie-breaker in $F_{\bar\lambda}$. This completes the induction.
\end{proof}

\begin{proof}[Proof of \cref{thm:two}]
For existence, we construct the desired partition recursively. Let $R_1:=\Theta$. Given a non-empty residual set $R_t$, set $w_t:=\max\W_t$; this maximum exists and is attained by \cref{lem:btw-value-id}(ii). Among the coalitions of $G_t$ attaining $w_t$, choose one with a maximal prior under the relation from \cref{lem:btw-order} for the level set $\{v=w_t\}$. Such a coalition exists because only finitely many type sets, and hence only finitely many priors, can arise. Call the selected coalition $(C_t,\sigma_t,w_t)$ and let $R_{t+1}:=R_t\smallsetminus C_t$. Since each step removes a non-empty cell, the recursion returns a partition $\Pi$ with $w_t=\max\W_t$ at every step.

It remains to show that $\{ w_t \}$ is non-increasing. Suppose, toward a contradiction, that $w_{s+1}>w_s$ for some step $s$. Write
\[
    y:=w_s,\qquad
    L:=\{v=y\},\qquad
    L^{++}:=\{v>y\},\qquad
    L^{--}:=\{v<y\}.
\]
By \cref{lem:btw-value-id}, $\mu^0_{C_s}\in L$ and $\mu^0_{C_{s+1}}\in L^{++}$. Since the step-$s$ coalition attains $\max\W_s$, \cref{lem:btw-merging} applied in $G_s$ gives
\[
    D:=C_s\cup C_{s+1}=M^{-1}_{R_s}\bigl(X_s\cup X_{s+1}\bigr)
    \qquad\text{and}\qquad
    v(\mu^0_D)=y,
\]
and $\mu^0_D=\alpha\mu^0_{C_s}+(1-\alpha)\mu^0_{C_{s+1}}$ with $\alpha:=\mu^0(C_s)/\mu^0(D)\in(0,1)$.

Now form the following auxiliary game:
\[
    G'
    :=
    \bigl(D,\,
    X_s\cup X_{s+1},\,
    \theta\mapsto M(\theta)\cap(X_s\cup X_{s+1}),\,
    \mu^0_D,\,
    V_D\bigr).
\]
Note that $G'$ is a disclosure game by the same argument as in the proof of \cref{lem:btw-value-id}(ii): every type in $D$ can send some message in $X_s\cup X_{s+1}$ because $D$ is the relative preimage, and every message in $X_s\cup X_{s+1}$ is on path in one of the two coalitions and hence is available to some type in $D$.

By \cref{lem:pbe-existence} and the strict-payoff form of \cref{prop:pbe-characterization}, $G'$ admits a PBE partition
$  \{(\widetilde C_\tau,\widetilde\sigma_\tau,\widetilde w_\tau)\}_{\tau=1}^k$
with $    \widetilde w_1>\cdots>\widetilde w_k$.
\Cref{lem:btw-value-id} gives $\widetilde w_\tau=v(\mu^0_{\widetilde C_\tau})$ for every $\tau$, and the cells partition $D$, so 
   $ \mu^0_D=\sum_\tau\beta_\tau\mu^0_{\widetilde C_\tau}
    $ with all $\beta_\tau>0$.
As in the proof of \cref{lem:btw-value-id}(ii), the first cell $(\widetilde C_1,\widetilde\sigma_1,\widetilde w_1)$ is a coalition of the original residual game $G_s$. Hence $\widetilde w_1\leq\max\W_s=y$. On the other hand, the quasiconvex half of B applied to the last convex combination gives
$    y=v(\mu^0_D)\leq\max_\tau\widetilde w_\tau=\widetilde w_1$.
Therefore $\widetilde w_1=y$, and strict decrease gives $\widetilde w_\tau<y$ for all $\tau\geq2$. Equivalently, $\mu^0_{\widetilde C_1}\in L$ and $\mu^0_{\widetilde C_\tau}\in L^{--}$ for $\tau\geq2$.

Apply \cref{lem:btw-order} at level $y$. Property (i), with $x=\mu^0_{C_s}$, $u=\mu^0_{C_{s+1}}$, and $z=\mu^0_D$, gives
$  \mu^0_D\succ\mu^0_{C_s}$.
Property (ii), applied to the decomposition $\mu^0_D=\sum_\tau\beta_\tau\mu^0_{\widetilde C_\tau}$, gives a cell on the level set whose prior is weakly above $\mu^0_D$. Since the auxiliary partition has strictly decreasing payoffs, the only cell on level $y$ is the first one. Thus $  \mu^0_{\widetilde C_1}\succeq\mu^0_D\succ\mu^0_{C_s}$.
But $(\widetilde C_1,\widetilde\sigma_1,y)$ is a coalition of $G_s$ attaining $\max\W_s$, so this contradicts the choice of $(C_s,\sigma_s,w_s)$ as maximal in the level-set order among payoff-maximal coalitions at step $s$.

Hence $w_t\leq w_{t-1}$ at every step. The constructed partition is a COE partition (\cref{dfn:coe}), so by \cref{lem:coe-greedy}(i) it is greedy. By \cref{prop:cppbe-characterization}, it is a coalition-proof PBE partition. Therefore a coalition-proof PBE exists.

\emph{Part (i).} Assume B$^*$ holds or $G$ is generic. We first prove the bound on residual maxima that drives the argument. Let a coalition of $G\vert_R$ with type set $C$ attain $\max\W_R$, and let $R':=R\smallsetminus C$ be non-empty. We claim that
$  \max\W_{R'}\leq\max\W_R $.

Under B$^*$ this is exactly the strict-betweenness conclusion of \cref{lem:btw-merging}. Under genericity, suppose instead that $G\vert_{R'}$ admits a coalition $(C',\sigma',w')$ with $w'>w:=\max\W_R$. By \cref{lem:btw-merging}, the union $C\cup C'$ is a relative preimage in $G\vert_R$ and $v(\mu^0_{C\cup C'})=w$.
But \cref{lem:btw-value-id} gives $v(\mu^0_C)=w$, contradicting genericity. This proves the bound.

We show by induction that every greedy choice takes the unconstrained maximum in the current residual game. Suppose this holds before step $t$, and let $R_t\neq\varnothing$. If $t\geq2$, the step-$(t-1)$ coalition attains $\max\W_{t-1}$ by the induction hypothesis, so the bound above gives
$ \max\W_t\leq w_{t-1}$.
 As for the lower bound, \cref{lem:remaining-games}(iii) gives
$    \max\W_t\geq\max_{\theta\in R_t}\underline u(\theta)$.
Thus $\max\W_t \in \bigl[\max_{\theta\in R_t}\underline u(\theta),\,w_{t-1}\bigr]$,
whence the greedy choice at step $t$ has $w_t=\max\W_t$, completing the induction.

The constraint set is non-empty at every step, so no run of \cref{alg:greedy-partition} halts. Each run removes a non-empty cell and therefore terminates in a greedy partition. By \cref{prop:cppbe-characterization}, every such partition is a coalition-proof PBE partition.

For the characterization, if a partition satisfies $w_t=\max\W_t\leq w_{t-1}$ at every step, then it is a coalition-optimal partition in the sense of \cref{dfn:coe}. By \cref{lem:coe-greedy}(i) it is greedy, and by \cref{prop:cppbe-characterization} it is a coalition-proof PBE partition. Conversely, any coalition-proof PBE partition is greedy. The induction above gives $w_t=\max\W_t$ at every step, and the inequality $w_t\leq w_{t-1}$ is part of the greedy constraint.

\emph{Part (ii).} Let $\Pi$ and $\Pi'$ be coalition-proof PBE partitions of a generic game. They are both greedy by \cref{prop:cppbe-characterization}. We prove by induction on $t$ that they have the same residual sets, payoffs, and cells. At $t=1$ the residual sets and previous payoffs agree. Suppose they agree at step $t$. If the common residual set is empty, both partitions have length $t-1$. Otherwise both continue, and greediness gives the same payoff:
\[
    w_t=w'_t
    =
    \max\Bigl(\W_t\cap\bigl[\max_{\theta\in R_t}\underline u(\theta),\,w_{t-1}\bigr]\Bigr).
\]
By \cref{lem:btw-value-id},
$    v(\mu^0_{C_t})=w_t=w'_t=v(\mu^0_{C'_t})$,
so genericity gives $C_t=C'_t$. Hence $R_{t+1}=R'_{t+1}$, and the induction continues.
\end{proof}

\begin{proof}[Proof of \cref{prop:HKP}]
Fix a truth-leaning equilibrium $(\sigma,\mu,r)$ and let $\Pi$ be its associated PBE partition from \cref{prop:pbe-characterization}.

If message $\theta'$ is on path, type $\theta'$ sends it with probability one. To see this, choose a sender type $\theta$ with $\sigma(\theta'\sep\theta)>0$. Sequential rationality gives $u(\theta)=r(\theta')$. Since $\theta$ can send message $\theta'$, transitivity gives $M(\theta')\subseteq M(\theta)$. Together with reflexivity, this yields
\[
    r(\theta')\leq u(\theta')\leq u(\theta)=r(\theta').
\]
Hence $r(\theta')=u(\theta')$, and (A0) gives $\sigma(\theta'\sep\theta')=1$. If type $\theta$ does not send its own message with probability one, that message is therefore off path, and (P0) and (A0) give $v(\delta_\theta)=r(\theta)<u(\theta)$.

Suppose $(\widetilde C,\widetilde\sigma,\widetilde w)$ blocks $\Pi$. By \cref{dfn:blocking} and exclusivity,
\[
    \widetilde C=M^{-1}_{\Theta(\widetilde w)}(X(\widetilde\sigma)),
    \qquad u(\theta)<\widetilde w\quad\text{for every }\theta\in\widetilde C,
\]
and \cref{lem:btw-value-id} gives $v(\mu^0_{\widetilde C})=\widetilde w$. Let
$ \widetilde C^{\mathrm{tr}}
    :=\{\theta'\in\widetilde C:\sigma(\theta'\sep\theta')=1\}$
be the members who tell the whole truth in the original equilibrium.

Fix a truth-telling member $\theta'\in\widetilde C^{\mathrm{tr}}$. Every actual type $\theta$ that sends message $\theta'$ in equilibrium must also belong to $\widetilde C$. Indeed, $u(\theta)=r(\theta')=u(\theta')<\widetilde w$, so $\theta\in\Theta(\widetilde w)$. Type $\theta'$ sends some message $m\in M(\theta')\cap X(\widetilde\sigma)$ in the proposed deviation; transitivity makes this message available to type $\theta$ as well, and exclusivity therefore includes $\theta$ in $\widetilde C$. Thus the entire equilibrium pool at message $\theta'$ consists of coalition members. Its posterior $\mu_\sigma(\cdot\sep\theta')$ is supported on a subset of $\widetilde C$ and has value $r(\theta')=u(\theta')<\widetilde w$.

Decompose $\mu^0_{\widetilde C}$ into these equilibrium posteriors and the remaining mass at each non-truth-telling member. Let
\[
    \begin{aligned}
        \alpha_{\theta'}&:=\frac{p_\sigma(\theta')}{\mu^0(\widetilde C)}
        &&(\theta'\in\widetilde C^{\mathrm{tr}}),\\
        \beta_\theta&:=\mu^0_{\widetilde C}(\theta)
            \left(1-\sum_{\theta'\in\widetilde C^{\mathrm{tr}}}\sigma(\theta'\sep\theta)\right)
        &&(\theta\in\widetilde C\smallsetminus\widetilde C^{\mathrm{tr}}).
    \end{aligned}
\]
These weights are nonnegative. Since each truth-telling member sends its own message with probability one, Bayes' rule gives the convex decomposition
\[
    \mu^0_{\widetilde C}
    =\sum_{\theta'\in\widetilde C^{\mathrm{tr}}}\alpha_{\theta'}\mu_\sigma(\cdot\sep\theta')
     +\sum_{\theta\in\widetilde C\smallsetminus\widetilde C^{\mathrm{tr}}}\beta_\theta\delta_\theta.
\]
Every posterior in the first sum has value below $\widetilde w$, and every singleton in the second satisfies $v(\delta_\theta)<u(\theta)<\widetilde w$. Quasiconvexity gives $v(\mu^0_{\widetilde C})<\widetilde w$, a contradiction. Thus $\sigma$ is a coalition-proof PBE strategy. The optimal-commitment conclusion follows from Proposition~1 and Theorem~1 of \textcite{hart2017}. 
\end{proof}

\begin{proof}[Proof of \cref{cor:hkp-generic}]
Since $G$ is generic, \cref{thm:two}(ii) applies: all coalition-proof PBE partitions of $G$ have the same cells and payoffs. By \cref{prop:HKP}, all truth-leaning equilibria are coalition-proof PBE strategies, and hence share these unique cells and payoffs.
\end{proof}

\section[Proofs for Section~\ref{sec:existence-cheaptalk}]{Proofs for \cref{sec:existence-cheaptalk}}

For the existence proofs in Section \ref{sec:existence-cheaptalk}, we build an auxiliary game from the quasiconcave closure $\overline{v}^{qc}$.

\begin{lemma}\label{lem:closure-game}
    \begin{enumerate}
        \item[(i)] $\overline{v}^{qc}$ is upper semicontinuous and quasiconcave, $\{\overline{v}^{qc}\geq y\}=\mathrm{conv}\{\overline{v}\geq y\}$ for every $y$, and $\overline{v}^{qc}\geq\overline{v}$;
        \item[(ii)] the \emph{qc-closure} of $G$, $G^{qc}:=(\Theta,\Msg,M,\mu^0,V^{qc})$ with $V^{qc}(\mu):=[\min_{\Delta\Theta}\underline{v},\ \overline{v}^{qc}(\mu)]$, is a disclosure game with upper envelope $\overline{v}^{qc}$;
    \end{enumerate}
\end{lemma}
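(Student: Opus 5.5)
For part (i), I would work directly from \cref{dfn:qc-closure} and first establish the level-set identity $\{\overline{v}^{qc}\geq y\}=\mathrm{conv}\{\overline{v}\geq y\}$. Quasiconcavity, upper semicontinuity, and the bound $\overline{v}^{qc}\geq\overline{v}$ all follow from it.

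For the inclusion $\subseteq$, suppose $\overline{v}^{qc}(\mu)\geq y$. Then there is a decomposition $\mu=\sum_i\lambda_ip_i$ with $\min_i\overline{v}(p_i)\geq y$. Every $p_i$ with $\lambda_i>0$ lies in $\{\overline{v}\geq y\}$, so $\mu$ lies in its convex hull.

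For $\supseteq$, any point of $\mathrm{conv}\{\overline{v}\geq y\}$ is, by Carathéodory's theorem in the $(n-1)$-dimensional simplex, a convex combination of at most $n$ points of $\{\overline{v}\geq y\}$. Padding with zero weights if needed, this decomposition is admissible in the definition, so $\overline{v}^{qc}(\mu)\geq y$. This is the one place where the cap of $n$ posteriors matters: Carathéodory is exactly what makes $n$ points suffice.

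Given the identity, the remaining properties are short:
\begin{itemize}
\item \emph{Quasiconcavity.} Every upper level set is convex, being a convex hull.
\item \emph{Upper semicontinuity.} The set $\{\overline{v}\geq y\}$ is closed in the compact simplex because $\overline{v}$ is upper semicontinuous, and the convex hull of a compact set in finite dimensions is compact. So every upper level set of $\overline{v}^{qc}$ is closed, which is upper semicontinuity.
\item \emph{Domination.} Taking the trivial decomposition $p_i=\mu$ for all $i$ gives $\overline{v}^{qc}\geq\overline{v}$.
\end{itemize}
The maximum in \cref{dfn:qc-closure} is attained, as the paper already notes, so all the quantities above are well defined.

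For part (ii), the primitives $\Theta,\Msg,M,\mu^0$ are unchanged, so \cref{A1:finite-sets}--\cref{A3:message-coverage} hold automatically. Only \cref{A4:V-uhc} for $V^{qc}$ needs checking. Write $c:=\min_{\Delta\Theta}\underline{v}$; it exists because $\underline{v}$ is lower semicontinuous on a compact set. Then $\overline{v}^{qc}(\mu)\geq\overline{v}(\mu)\geq\underline{v}(\mu)\geq c$, so $V^{qc}(\mu)=[c,\overline{v}^{qc}(\mu)]$ is a non-empty compact interval, and its upper envelope is $\overline{v}^{qc}$ by construction.

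The only real obstacle is upper hemicontinuity of $V^{qc}$. I would prove it by showing the graph is closed. Take $\mu_k\to\mu$ and $w_k\in V^{qc}(\mu_k)$ with $w_k\to w$. Then $c\leq w_k\leq\overline{v}^{qc}(\mu_k)$, and by upper semicontinuity from part (i), $w\leq\limsup_k\overline{v}^{qc}(\mu_k)\leq\overline{v}^{qc}(\mu)$, so $w\in V^{qc}(\mu)$. The values all lie in the fixed bounded interval $[c,\max\overline{v}]$, so a closed graph with compact range gives upper hemicontinuity. Hence $G^{qc}$ is a disclosure game.
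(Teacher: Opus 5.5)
Your proof is correct. For part (i) it takes a different, self-contained route.

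The paper obtains upper semicontinuity, quasiconcavity, and the level-set identity (in closed-convex-hull form) by citing Lipnowski and Ravid's results on the quasiconcave envelope. It checks by hand only that $\overline{v}^{qc}\geq\overline{v}$, and that the closure can be dropped because $\{\overline{v}\geq y\}$ is compact. You instead derive $\{\overline{v}^{qc}\geq y\}=\mathrm{conv}\{\overline{v}\geq y\}$ directly from the $n$-posterior max--min formula: attainment of the maximum gives $\subseteq$, and Carath\'eodory gives $\supseteq$. You then read off quasiconcavity and upper semicontinuity from convexity and closedness of the upper level sets.

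Your route works with the definition the paper actually states, so it does not tacitly rely on identifying that formula with Lipnowski and Ravid's envelope. Your identity in fact yields that identification in one line: any quasiconcave $f\geq\overline{v}$ has convex upper level sets containing $\{\overline{v}\geq y\}$, hence $f\geq\overline{v}^{qc}$. The citation route buys brevity and the explicit link to the cheap-talk literature.

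One small detail: the definition takes the minimum over all $n$ indices, including those with zero weight. When you pad a Carath\'eodory decomposition, repeat one of its points rather than padding arbitrarily.

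For part (ii), your closed-graph argument and the paper's direct open-neighborhood argument are interchangeable ways of turning upper semicontinuity of $\overline{v}^{qc}$ into upper hemicontinuity of $V^{qc}$. Your version maps into the compact range $[c,\max_{\Delta\Theta}\overline{v}]$, which uses the immediate bound $\overline{v}^{qc}\leq\max_{\Delta\Theta}\overline{v}$.
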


\begin{proof}
(i) $\overline{v}^{qc}$ is the quasiconcave envelope of $\overline{v}$ \parencite{lipnowskiravid}: by their Theorem~2 it is upper semicontinuous and quasiconcave, and by their securability characterization (Corollary~1) $\{\overline{v}^{qc}\geq y\}$ is the closed convex hull of $\{\overline{v}\geq y\}$. The two remaining points are immediate: $\overline{v}^{qc}\geq\overline{v}$ because $p_i=\mu$ is admissible in \cref{dfn:qc-closure}, and that closed convex hull equals $\mathrm{conv}\{\overline{v}\geq y\}$ because $\{\overline{v}\geq y\}$ is compact ($\overline{v}$ upper semicontinuous by \cref{A4:V-uhc}).

(ii) The interval $V^{qc}(\mu)=[\min_{\Delta\Theta}\underline{v},\ \overline{v}^{qc}(\mu)]$ is non-empty and compact, as $\min_{\Delta\Theta}\underline{v}\leq\underline{v}(\mu)\leq\overline{v}(\mu)\leq\overline{v}^{qc}(\mu)$. It is upper hemicontinuous in $\mu$: given an open $U\supseteq V^{qc}(\mu)$, compactness gives $\varepsilon>0$ with $[\min_{\Delta\Theta}\underline{v},\ \overline{v}^{qc}(\mu)+\varepsilon)\subseteq U$, and upper semicontinuity of $\overline{v}^{qc}$ keeps $V^{qc}$ inside $U$ near $\mu$. Hence $G^{qc}$ satisfies \cref{A4:V-uhc}; it inherits \cref{A1:finite-sets}--\cref{A3:message-coverage} from $G$, its upper envelope is $\overline{v}^{qc}$, and its lower envelope is the constant $\min_{\Delta\Theta}\underline{v}$.
%
\end{proof}

\begin{lemma}\label{lem:qc-bound}
    For every non-empty $R\subseteq\Theta$ and every coalition $(C,\sigma,w)$ of $G\vert_R$,
    $        w\leq\overline{v}^{qc}(\mu^0_C)$.
\end{lemma}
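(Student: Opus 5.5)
The plan is to write the restricted prior $\mu^0_C$ as a convex combination of the coalition's on-path posteriors and then apply the definition of $\overline{v}^{qc}$ directly. The only care needed is that \cref{dfn:qc-closure} allows at most $n$ posteriors, while the coalition may use more messages.

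First, fix a coalition $(C,\sigma,w)$ of $G\vert_R$. By transitivity of restriction, $\sigma$ is a coalition strategy on $C$, and its posteriors are computed in $G\vert_C$. Bayes' rule there gives
\[
    \mu^0_C=\sum_{m\in X(\sigma)}p_\sigma(m)\,\mu_\sigma(\cdot\sep m),
\]
with positive weights summing to one. By \cref{C4}, $w\in V(\mu_\sigma(\cdot\sep m))$, so $w\leq\overline{v}(\mu_\sigma(\cdot\sep m))$ for every $m\in X(\sigma)$. Setting $y:=w$, every posterior therefore lies in the upper contour set $\{\overline{v}\geq y\}$, and hence $\mu^0_C\in\mathrm{conv}\{\overline{v}\geq y\}$.

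Second, I would invoke \cref{lem:closure-game}(i), which states $\{\overline{v}^{qc}\geq y\}=\mathrm{conv}\{\overline{v}\geq y\}$. This gives $\overline{v}^{qc}(\mu^0_C)\geq w$ at once and avoids any counting of posteriors.

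If one prefers to argue from the definition alone, the step needing care is reducing the number of posteriors to $n$. All posteriors and $\mu^0_C$ lie in $\Delta\Theta$, which has affine dimension $n-1$. By Carathéodory's theorem, $\mu^0_C$ is then a convex combination of at most $n$ of these posteriors. Padding with zero-weight copies gives an admissible $(\lambda,p_1,\dots,p_n)$ in \cref{dfn:qc-closure} whose minimum $\overline{v}(p_i)$ is at least $w$. One subtlety remains: zero-weight $p_i$ must also satisfy $\overline{v}(p_i)\geq w$. This holds if they are chosen among the posteriors themselves. Either route finishes the proof, and no real obstacle is expected.
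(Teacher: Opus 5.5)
Your proposal is correct and follows the paper's proof: you write $\mu^0_C$ as the Bayes-weighted average of the on-path posteriors, use \cref{C4} to place each posterior in $\{\overline{v}\geq w\}$, and conclude $\overline{v}^{qc}(\mu^0_C)\geq w$ from the identity $\{\overline{v}^{qc}\geq w\}=\mathrm{conv}\{\overline{v}\geq w\}$ in \cref{lem:closure-game}(i). Your Carath\'eodory fallback, which pads with zero-weight copies of the posteriors themselves, is also valid and has the small advantage of working directly from \cref{dfn:qc-closure} without the level-set identity the paper takes from Lipnowski and Ravid.
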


\begin{proof}
    For each $m\in X(\sigma)$, condition \cref{C4} gives $w\in V(\mu_\sigma(\cdot\sep m))$, and hence $\overline{v}(\mu_\sigma(\cdot\sep m))\geq w$. The prior $\mu^0_C=\sum_{m\in X(\sigma)}p_\sigma(m)\,\mu_\sigma(\cdot\sep m)$ is a convex combination of these posteriors, so $\mu^0_C\in\mathrm{conv}\{\overline{v}\geq w\}=\{\overline{v}^{qc}\geq w\}$ by \cref{lem:closure-game}(i). Hence $\overline{v}^{qc}(\mu^0_C)\geq w$.
\end{proof}

\begin{lemma}\label{lem:ct-realization}
    Assume M-CT. For every non-empty $R\subseteq\Theta$ and every message $\bar m\in\Msg_R$, the restricted game $G\vert_R$ has a coalition $(C,\sigma,w)$ with type set $C=P(\bar m)\cap R$ and payoff $w=\overline{v}^{qc}(\mu^0_C)$.
\end{lemma}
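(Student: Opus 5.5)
Set $C:=P(\bar m)\cap R$, which is non-empty because $\bar m\in\Msg_R$. By \cref{dfn:qc-closure}, the maximum defining $\overline{v}^{qc}(\mu^0_C)$ is attained. We fix a maximizer $(\lambda,p_1,\dots,p_n)$ with $\sum_i\lambda_ip_i=\mu^0_C$ and $\min_i\overline{v}(p_i)=\overline{v}^{qc}(\mu^0_C)=:w^*$. We discard the indices with $\lambda_i=0$; this can only raise the minimum, and $w^*$ is the maximum, so the minimum stays equal to $w^*$. Every remaining $p_i$ is supported on $C$, because $\mu^0_C$ is a positive combination of the $p_i$ and all of them are non-negative.

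By M-CT there are at least $n$ messages $m_1,\dots,m_n$ with $P(m_i)=P(\bar m)$, so $P(m_i)\cap R=C$. We implement the splitting as a coalition strategy on $C$. Type $\theta\in C$ sends $m_i$ with probability $\sigma(m_i\sep\theta):=\lambda_i p_i(\theta)/\mu^0_C(\theta)$. These probabilities sum to one because $\sum_i\lambda_ip_i=\mu^0_C$, and $\mu^0_C(\theta)>0$ by full support. Each $m_i\in M(\theta)$ since $\theta\in P(\bar m)=P(m_i)$. Bayes' rule gives $p_\sigma(m_i)=\lambda_i$ and $\mu_\sigma(\cdot\sep m_i)=p_i$ for every used $m_i$. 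Hence $X(\sigma)\subseteq\{m_1,\dots,m_n\}$, and $M_R^{-1}(X(\sigma))=P(\bar m)\cap R=C$, which is exclusivity \cref{C3}.

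The remaining and main step is \cref{C4}. We need a single $w$ lying in $V(p_i)$ for every used message, whereas we only know $\overline{v}(p_i)\geq w^*$. So the natural choice $w=w^*$ works only if $\underline{v}(p_i)\leq w^*$ for each $i$. We handle indices with $\underline{v}(p_i)>w^*$ by modifying the splitting. For such an $i$, we move $p_i$ toward another component $p_j$ along the segment from $p_i$ to $p_j$. Since $V$ is upper hemicontinuous with interval values, its graph over a segment is connected in the sense used in the proof sketch of \cref{thm:binary}. The values therefore pass through $w^*$ on the segment, or else the segment stays in $\{\overline{v}\geq w^*\}$ throughout.

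A cleaner route is to choose, among all maximizers, one that minimizes the number of components, or equivalently to select each $p_i$ extremal in $\{\overline{v}\geq w^*\}\cap\Delta_C$ in a suitable sense. Suppose some component has $\underline{v}(p_i)>w^*$. Then we can merge $p_i$ with a neighbouring component, or push it inward, keeping $\overline{v}\geq w^*$ while strictly raising the minimum of $\overline{v}$ or reducing the count. The first would contradict maximality and the second minimality. If this argument stalls, the fallback is to show directly that $\min_i\overline{v}(p_i)$ can be increased whenever every component has $\underline{v}(p_i)>w^*$. In that case all the $\overline{v}(p_i)$ exceed $w^*$ except those with $\overline{v}(p_i)=w^*$, and those already satisfy $w^*\in V(p_i)$, so only the strictly-above components need treatment via the segment argument. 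This part is where we expect the real work, and we would attempt the segment and intermediate-value argument first.

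Once every used posterior satisfies $w^*\in V(p_i)$, the triple $(C,\sigma,w^*)$ is a coalition of $G\vert_R$ with the required type set and payoff.
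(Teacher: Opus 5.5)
Your setup is correct and matches the paper's. The optimal splitting $\mu^0_C=\sum_i\lambda_ip_i$, the strategy $\sigma(m_i\sep\theta)=\lambda_ip_i(\theta)/\mu^0_C(\theta)$ over cheap-talk copies of $\bar m$, and the exclusivity check all go through.

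The gap is that you never establish (C4), which is the step that carries the content. You know only $\overline v(p_i)\geq w^*$, and none of your proposed repairs is sound.

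\emph{Moving $p_i$ toward another component $p_j$} fails as stated. For an arbitrary $p_j$, both endpoints may satisfy $\underline v>w^*$, so the segment need not meet the level $w^*$. Moreover, moving one component alone breaks $\sum_i\lambda_ip_i=\mu^0_C$, and you give no reweighting that restores it.

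\emph{The minimal-count argument by contradiction} is false: a maximizer with the fewest components can still have a component with $\underline v(p_i)>w^*$. Take two types, write $x:=\mu(\theta_2)$, put $\mu^0_C$ at $x=\tfrac12$, and let $V=\{v\}$ with $v(x)=\max\{2-4x,\,2x-1\}$. Every splitting has a component with $x\geq\tfrac12$, where $v\leq 1$, so $w^*=1$. One component does not suffice, since $v(\tfrac12)=0$. The equal-weight splitting into $x=0$ and $x=1$ is then a count-minimal maximizer whose first component has $v=2>w^*$. Yet no merge is possible and the minimum cannot be raised. So a bad component must be repaired constructively, not ruled out.

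The missing idea is to move each $p_i$ toward the prior $\mu^0_C$ itself. This is the one belief where you know $\overline v\leq w^*$, because $\overline v(\mu^0_C)\leq\overline v^{qc}(\mu^0_C)=w^*$.

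\emph{Crossing the level.} Along $\gamma_i(t):=(1-t)p_i+t\mu^0_C$, consider the sets $\{t\sep\overline v(\gamma_i(t))<w^*\}$ and $\{t\sep\underline v(\gamma_i(t))>w^*\}$. They are open by semicontinuity of the envelopes, and they are disjoint. The first excludes $t=0$ and the second excludes $t=1$, so by connectedness they cannot cover $[0,1]$. Hence some $t_i$ gives $w^*\in V(\gamma_i(t_i))$.

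\emph{Preserving the decomposition.} Shrinking all components toward their common barycenter preserves the decomposition automatically. If every $t_i<1$, then $\sum_i\frac{\lambda_i}{1-t_i}\gamma_i(t_i)=\bigl(\sum_i\frac{\lambda_i}{1-t_i}\bigr)\mu^0_C$. So the beliefs $\widetilde p_i:=\gamma_i(t_i)$, which are supported on $C$, average to $\mu^0_C$ with positive normalized weights. If some $t_i=1$, then $w^*\in V(\mu^0_C)$, and pooling on $\bar m$ already gives the coalition.

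Applying your splitting construction to the $\widetilde p_i$ then yields (C4) with $w=w^*$; this is the paper's proof.
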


\begin{proof}
    By \cref{lem:closure-game}(i), $\mu^0_C=\sum_{i=1}^k\lambda_i p_i$ is a convex combination of beliefs $p_i$ with $\overline{v}(p_i)\geq w$, weights $\lambda_i>0$, and $\supp p_i\subseteq C$; by Carath\'eodory's theorem, since $\Delta\Theta$ has dimension $n-1$, we may take $k\leq n$. We construct the coalition in two steps. First, we replace each $p_i$ by a belief $\widetilde p_i$ on the segment from $p_i$ to $\mu^0_C$ with $w\in V(\widetilde p_i)$, keeping $\mu^0_C$ a convex combination of the $\widetilde p_i$. Second, using the cheap-talk copies of $\bar m$, we construct a coalition strategy on $C$ that induces these beliefs.

    \emph{An intermediate belief on each segment.} Fix $i$ and let $\gamma(t)=(1-t)p_i+t\mu^0_C$ for $t\in[0,1]$. Then $\overline{v}(\gamma(0))=\overline{v}(p_i)\geq w$ and $\overline{v}(\gamma(1))=\overline{v}(\mu^0_C)\leq\overline{v}^{qc}(\mu^0_C)=w$. We claim that $w\in V(\gamma(t))$ for some $t$. If not, then every $t$ has $w\notin V(\gamma(t))=[\underline{v}(\gamma(t)),\overline{v}(\gamma(t))]$, so $\overline{v}(\gamma(t))<w$ or $\underline{v}(\gamma(t))>w$. By upper semicontinuity of $\overline{v}$ and lower semicontinuity of $\underline{v}$ (\cref{A4:V-uhc}), these two conditions define open subsets of $[0,1]$; they are disjoint and cover $[0,1]$. The first excludes $0$ and the second excludes $1$, so $0$ belongs to the second and $1$ to the first, and both are non-empty. This splits the connected interval $[0,1]$ into two disjoint non-empty open sets, which is impossible. Hence $w\in V(\gamma(t_i))$ for some $t_i$. Set $\widetilde p_i=\gamma(t_i)$; since $p_i$ and $\mu^0_C$ are supported on $C$, so is $\widetilde p_i$.

    If $t_i=1$ for some $i$, then $\widetilde p_i=\mu^0_C$, so $w\in V(\mu^0_C)$ and \cref{remark:coalitions-exist}, applied to $\bar m$ in $G\vert_R$, already gives a coalition $(C,\sigma^{\bar m},w)$; so assume $t_i<1$ for every $i$. Put $\beta_i:=\lambda_i/(1-t_i)>0$. Then
    \[
        \sum_i\beta_i\widetilde p_i=\sum_i\lambda_i p_i+\Bigl(\sum_i\tfrac{\lambda_i t_i}{1-t_i}\Bigr)\mu^0_C=\Bigl(\sum_i\beta_i\Bigr)\mu^0_C,
    \]
    so, after normalizing the $\beta_i$ to sum to one, $\mu^0_C=\sum_i\beta_i\widetilde p_i$ is a convex combination of the $\widetilde p_i$ with positive weights.

    \emph{Splitting across cheap-talk copies.} By M-CT there are $k$ distinct copies $m_1,\dots,m_k$ of $\bar m$ sharing its preimage, $P(m_i)=P(\bar m)$, so $P(m_i)\cap R=C$ and every type in $C$ can send every copy. The plan is to spread the coalition across the copies so that copy $m_i$ carries exactly the belief $\widetilde p_i$. To do this, let each $\theta\in C$ send copy $m_i$ with probability
    \[
        \sigma(m_i\sep\theta):=\frac{\beta_i\,\widetilde p_i(\theta)}{\mu^0_C(\theta)}\qquad(1\leq i\leq k),
    \]
    which is well-defined since $\mu^0_C(\theta)>0$ by \cref{A2:full-support}. These probabilities sum to one over $i$, because $\sum_i\beta_i\widetilde p_i=\mu^0_C$, so $\sigma$ is a coalition strategy on $C$. Copy $m_i$ is then on path with probability $p_\sigma(m_i)=\sum_{\theta\in C}\mu^0_C(\theta)\,\sigma(m_i\sep\theta)=\beta_i$, so the evidence is exactly $\{m_1,\dots,m_k\}$, exclusive to $C$ (that is, $M^{-1}(X(\sigma))\cap R=C$), and Bayes' rule returns the intended belief, $\mu_\sigma(\cdot\sep m_i)=\widetilde p_i$. Since $w\in V(\widetilde p_i)$ at every copy, every message in the coalition's evidence pays $w$, so $(C,\sigma,w)$ is a coalition of $G\vert_R$.
\end{proof}

\begin{proof}[Proof of \cref{thm:three}]
By \cref{lem:closure-game}, the closure game $G^{qc}$ satisfies M-C and QC, so the recursive construction in the existence part of the proof of \cref{thm:one} yields a COE partition of $G^{qc}$. Write $R_t$ for its residual sets, $G^{qc}_t:=G^{qc}\vert_{R_t}$, and $\W^{qc}_t$ for the coalition payoff set of $G^{qc}_t$, so $w_t=\max\W^{qc}_t$ at every step. By \cref{lem:pooling} applied in $G^{qc}_t$, we may take each cell to be a pooling cell $(C_t,\sigma^{m_t},w_t)$ with $C_t=P(m_t)\cap R_t$ and $w_t=\overline{v}^{qc}(\mu^0_{C_t})$.

For each $t$, \cref{lem:ct-realization} applied to $R_t$ and $m_t$ yields a coalition $(C_t,\widetilde\sigma_t,w_t)$ of $G_t$, with type set $C_t=P(m_t)\cap R_t$ and payoff $w_t=\overline{v}^{qc}(\mu^0_{C_t})$. These cells and residual sets coincide with those of the COE partition of $G^{qc}$, so $\widetilde\Pi:=\{(C_t,\widetilde\sigma_t,w_t)\}_{t=1}^T$ is a partition of $G$, and its payoffs are non-increasing because those of the COE partition of $G^{qc}$ are.

It remains to show $w_t=\max\W_t$ in $G$ at every step; with the non-increasing payoffs established above, $\widetilde\Pi$ is then a COE partition of $G$. The payoff $w_t$ is attained in $G_t$ by $(C_t,\widetilde\sigma_t,w_t)$, so $w_t\in\W_t$. For the upper bound, let $(C,\sigma,w)$ be any coalition of $G_t$. By \cref{lem:qc-bound}, $w\leq\overline{v}^{qc}(\mu^0_C)$, so it suffices to show $\overline{v}^{qc}(\mu^0_C)\leq w_t$. Iterating M-C over the finite evidence $X(\sigma)$, as in the proof of \cref{lem:pooling}, gives a message $m^*$ with $P(m^*)=\bigcup_{m\in X(\sigma)}P(m)$, so $P(m^*)\cap R_t=M^{-1}_t(X(\sigma))=C$. Pooling on $m^*$ in $G^{qc}_t$ is a coalition with payoff $\overline{v}^{qc}(\mu^0_C)$, so $\overline{v}^{qc}(\mu^0_C)\in\W^{qc}_t$ and hence $\overline{v}^{qc}(\mu^0_C)\leq\max\W^{qc}_t=w_t$. Therefore $w\leq w_t$, giving $w_t=\max\W_t$. By \cref{lem:coe-greedy}(i), $\widetilde\Pi$ is a coalition-proof PBE partition, so a coalition-proof PBE exists.
\end{proof}

\begin{proof}[Proof of \cref{cor:pure-ct}]
The message space trivially satisfies M-C and M-CT. Every non-empty $X\subseteq\Msg$ has $M^{-1}(X)=\Theta$, so by exclusivity \cref{C3} and its converse the type set of any coalition is $\Theta$. By \cref{lem:qc-bound}, every coalition payoff is at most $\overline{v}^{qc}(\mu^0)$, and by \cref{lem:ct-realization} (with $R=\Theta$ and any $\bar m$) the payoff $\overline{v}^{qc}(\mu^0)$ is attained, so $\max\W_\Theta=\overline{v}^{qc}(\mu^0)$. By \cref{prop:cppbe-characterization}, a coalition-proof PBE partition is greedy. All this implies a single-cell partition $\{\Theta,\sigma, \overline{v}^{qc}(\mu^0)\}$, which is essentially unique.
\end{proof}

\section[Proofs for Section~\ref{sec:existence-disposal}]{Proofs for \cref{sec:existence-disposal}}


\begin{proof}[Proof of \cref{prop:DEG-conditions}]
Fix a non-empty $R\subseteq\Theta$, a coalition $(C,\sigma,w)$ of $G\vert_R$, and $w'\in[v_{\min},w]$; in each case we produce a coalition $(C,\sigma',w')$ of $G\vert_R$, so that $G$ satisfies \DEG. Write $q_m:=\mu_\sigma(\cdot\sep m)$ for the posterior at each $m\in X(\sigma)$; by \cref{C4}, $\overline{v}(q_m)\geq w\geq w'$, and $\supp q_m\subseteq P(m)\cap C$.

\emph{Case (i).} Free disposal gives $V(q_m)=[v_{\min},\overline{v}(q_m)]$ for each $m\in X(\sigma)$, and $v_{\min}\leq w'\leq w\leq\overline{v}(q_m)$, so $w'\in V(q_m)$. Hence $(C,\sigma,w')$, with the same strategy $\sigma$, is again a coalition of $G\vert_R$: conditions \cref{C1}--\cref{C3} are unchanged and \cref{C4} is the membership just shown. No assumption on the message mapping is used.

\emph{Case (ii).} We first show that revelation aversion implies the following property: for every $q\in\Delta\Theta$ and every $w'\in[v_{\min},\overline{v}(q)]$,
\[
    q\in\mathrm{conv}\bigl\{\mu\in\Delta\Theta\sep \supp\mu\subseteq\supp q,\ w'\in V(\mu)\bigr\}.
\] 
To see why, fix $q\in\Delta\Theta$ and $w'\in[v_{\min},\overline{v}(q)]$. If $w'\in V(q)$, the claim is immediate. Otherwise, for each $\theta_j\in\supp q$, the intermediate-value argument in the proof of \cref{lem:ct-realization}, together with $\underline{v}(\delta_{\theta_j})=v_{\min}\leq w'\leq\overline{v}(q)$, gives existence of $t_j \in (0,1]$ such that $q_j:=(1-t_j)q+t_j\delta_{\theta_j}$ satisfies $w'\in V(q_j)$.
Each $q_j$ is supported in $\supp q$, and
\[
    q=\sum_j\alpha_j q_j \, \, \text{ for } \, \, \,
        \alpha_j:=\frac{q(\theta_j)/t_j}{\sum_k q(\theta_k)/t_k}.
\]

It remains to construct $\sigma'$ that induces a lottery over these beliefs in place of the  $q$ induced by each original message. Group the evidence by preimage: let $P_\rho$ be the distinct preimages among $\{P(m)\sep m\in X(\sigma)\}$, and let $\mu_\rho:=\lambda_\rho^{-1}\sum_{m\,:\,P(m)=P_\rho} p_\sigma(m)\,q_m$ be the coalition's posterior conditional on sending a message with preimage $P_\rho$, where $\lambda_\rho:=\sum_{m\,:\,P(m)=P_\rho} p_\sigma(m)>0$ and $\sum_\rho\lambda_\rho=1$, so that $\mu^0_C=\sum_\rho\lambda_\rho\mu_\rho=\sum_{m\in X(\sigma)}p_\sigma(m)\,q_m$. Each $\mu_\rho$ is a convex combination of at most $n$ beliefs supported on $P_\rho\cap C$ and paying $w'$. Since cheap-talk copies supply at least $n$ messages with preimage $P_\rho$, spreading the coalition across these copies as in the proof of \cref{lem:ct-realization} yields a coalition strategy $\sigma'$ on $C$ whose on-path posteriors are exactly these beliefs; in particular every message in $X(\sigma')$ pays $w'$. Since $X(\sigma')$ realizes the same preimages as $X(\sigma)$, $M^{-1}_R(X(\sigma'))=M^{-1}_R(X(\sigma))\subseteq C$ by exclusivity \cref{C3} for $(C,\sigma,w)$. Hence $(C,\sigma',w')$ is a coalition of $G\vert_R$.
\end{proof}

\begin{proof}[Proof of \cref{thm:DEG-exists}]
Let $\{(C_s,\sigma_s,w_s)\}_{s<t}$ be a greedy prefix with non-empty residual $R_t$; we show the greedy constraint set $\W_t\cap[\max_{\theta\in R_t}\underline{u}(\theta),\,w_{t-1}]$ is non-empty (with $w_0:=\infty$). Write $\overline{w}_t:=\max\W_t$, which exists and satisfies $\overline{w}_t\geq\max_{\theta\in R_t}\underline{u}(\theta)$ by \cref{lem:remaining-games}(iii).

If $\overline{w}_t\leq w_{t-1}$, then $\overline{w}_t$ itself lies in the constraint set. Otherwise $\overline{w}_t>w_{t-1}$, so $t\geq2$. Let $(C,\sigma,\overline{w}_t)$ be a coalition of $G_t$ attaining $\overline{w}_t$. Every coalition payoff lies in the interval $[v_{\min},\,\max_{\Delta\Theta}\overline{v}]$ (see the proof of \cref{lem:coalition-compact}), so $w_{t-1}\in\W_{t-1}$ gives $w_{t-1}\in[v_{\min},\overline{w}_t]$. Because $G$ satisfies \DEG, applied in $G_t$ to this coalition with target $w_{t-1}$, the residual game $G_t$ admits a coalition $(C,\sigma',w_{t-1})$, so $w_{t-1}\in\W_t$. Moreover $w_{t-1}\geq\max_{\theta\in R_{t-1}}\underline{u}(\theta)\geq\max_{\theta\in R_t}\underline{u}(\theta)$ by greediness at step $t-1$ and $R_t\subseteq R_{t-1}$, so $w_{t-1}$ lies in the constraint set.

In either case the constraint set is non-empty; being the intersection of the compact $\W_t$ with a closed set, it has a maximum, which a greedy choice at step $t$ selects. Hence no run of \cref{alg:greedy-partition} halts. Each run removes a non-empty cell, so it terminates in a greedy partition, which by \cref{prop:cppbe-characterization} is a coalition-proof PBE partition.
The final claim follows from \cref{prop:DEG-conditions}.
\end{proof}

\section[Proofs for Section~\ref{sec:existence-tightness}]{Proofs for \cref{sec:existence-tightness}}

\begin{proof}[Proof of \cref{thm:binary}]
Covered in the text.
\end{proof}

\begin{proof}[Proof of \cref{prop:nonexistence}]

Non-existence of coalition-proof PBE in \cref{ex:nonexistence} is covered in the text.

\emph{Cheap-talk copies.} Augment $\Msg$ with three copies of each message. Coalition type sets still lie in $\{\{1,2\},\{1,3\},\Theta\}$, as preimages $P(m)$ are unchanged. Note that $v$ is quasiconcave, so for any coalition $(C,\sigma,w)$,  $w\leq v(\mu^0_C)$ (\cref{lem:pooling}). Then a coalition with $C=\{1,2\}$ yields payoff $w \leq v\left(\frac{1}{2},\frac{1}{2},0\right)=0.5$, while one with $C=\{1,3\}$ yields payoff $w \leq v\left(\frac{1}{2},0,\frac{1}{2}\right)=0.4$. A coalition with $C=\Theta$ must include at least one copy of $a$ with induced belief $(\mu_1,\mu_2,0)$ with $\mu_1 \leq \mu_2$, hence payoff $w \leq 0.5$; $w=0.5$ is only attainable if every $a$-copy has $\mu_1=\mu_2$, and hence $b$-copies are used only by type 3, yielding $w=0.9$, a contradiction. So every greedy run again selects $C_1=\{1,2\}$ at $w_1=\tfrac12$; the residual coalition payoffs are $V(\delta_3)=\{0.9\}$, and every run halts.


\emph{Evidence structure.} Augment $\Msg$ with a message $c$ and set $M(1)=\{a,b,c\}$, keeping $M(2)=\{a\}$ and $M(3)=\{b\}$. Relabeling $a$, $b$, $c$ as $2$, $3$, $1$ gives $\Msg=\Theta$, $M(1)=\Theta$, $M(2)=\{2\}$, and $M(3)=\{3\}$, which satisfies (L1) and (L2). Let $(C,\sigma,w)$ be a coalition of $G$. If $c\notin X(\sigma)$, then $(C,\sigma,w)$ is a coalition of the original game, so $w\in\{0.4,\tfrac12\}$, and $w=\tfrac12$ only if $C=\{1,2\}$. If $c\in X(\sigma)$, the posterior belief after $c$ is $\delta_1$, so $w=v(\delta_1)=-0.1$ by \cref{C4}. 
Then $\max\W_\Theta$ remains equal to $\tfrac12$, attained only by coalitions with type set $\{1,2\}$. 
After such a coalition is removed, the residual game on $\{3\}$ has coalition payoff set $V(\delta_3)=\{0.9\}$, so every run halts.
\end{proof}

\section[Proofs for Section~\ref{sec:rich-games}]{Proofs for \cref{sec:rich-games}}

\begin{lemma}\label{lem:attainability}
    Let $\nu\colon\Theta\to\mathbb{R}_{\ge0}$ be nonzero and let $\mu\in\Delta_{\supp\nu}$. Then there is a nonzero $\eta\le\nu$ with $\hat\eta=\mu$ and $\supp(\nu-\eta)\subsetneq\supp\nu$.
\end{lemma}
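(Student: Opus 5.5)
The construction is the one already used informally in \cref{sec:rich-games}: scale $\mu$ up as far as the residual masses allow. Write $S:=\supp\nu$ and $S_\mu:=\supp\mu\subseteq S$, which is non-empty because $\mu$ is a probability vector. I would set
\[
    \lambda:=\min_{\theta\in S_\mu}\frac{\nu(\theta)}{\mu(\theta)},
    \qquad
    \eta:=\lambda\mu .
\]
This is the analogue of $\lambda^*(\nu)$ with $\mu$ in place of $\mu^*_C$. The minimum is over a finite non-empty set. Each ratio is strictly positive because $\theta\in S_\mu\subseteq S$ gives $\nu(\theta)>0$ and $\mu(\theta)>0$. Hence $\lambda\in(0,\infty)$.

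The properties then follow in order.
\begin{enumerate}
    \item[(a)] $\eta$ is nonzero: $\eta(\Theta)=\lambda>0$.
    \item[(b)] $\eta\le\nu$: for $\theta\in S_\mu$, minimality of $\lambda$ gives $\lambda\mu(\theta)\le\nu(\theta)$. For $\theta\notin S_\mu$, we have $\eta(\theta)=0\le\nu(\theta)$.
    \item[(c)] $\hat\eta=\mu$: $\hat\eta=\lambda\mu/\lambda=\mu$, since $\mu$ sums to one.
    \item[(d)] The support shrinks strictly. By (b), $\nu-\eta\ge 0$, and it vanishes outside $S$, so $\supp(\nu-\eta)\subseteq S$. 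Let $\theta^*\in S_\mu$ attain the minimum defining $\lambda$. Then $\eta(\theta^*)=\nu(\theta^*)$, so $\theta^*\in S\smallsetminus\supp(\nu-\eta)$, and the inclusion is strict.
\end{enumerate}

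There is no real obstacle here. The only point needing care is to take the minimum over $\supp\mu$ rather than over all of $\supp\nu$, which avoids division by zero. The hypothesis $\supp\mu\subseteq\supp\nu$ is then exactly what makes $\lambda$ strictly positive.
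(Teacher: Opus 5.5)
Your proof is correct and follows the paper's argument exactly: scale $\mu$ by $\lambda=\min_{\theta\in\supp\mu}\nu(\theta)/\mu(\theta)$, and use a minimizing type to show that the support shrinks strictly. Your explicit check that $\lambda>0$ because $\supp\mu\subseteq\supp\nu$ is a detail the paper leaves implicit.
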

\begin{proof}
    Set $\lambda:=\min_{\theta:\,\mu(\theta)>0}\nu(\theta)/\mu(\theta)$ and $\eta:=\lambda\mu$. Since $\mu\ge0$ and $\lambda>0$, we have $\eta\ge0$. Then $\eta\le\nu$, $\eta\neq0$, and $\hat\eta=\mu$. Because $\eta\ge0$, we have $\supp(\nu-\eta)\subseteq\supp\nu$. If $\theta'$ attains the minimum, then $\eta(\theta')=\nu(\theta')$, so $(\nu-\eta)(\theta')=0$ and $\theta'$ leaves the support after $\eta$ is removed; hence the inclusion is strict.
\end{proof}

\begin{proof}[Proof of \cref{prop:cutecase}]
By \cref{lem:attainability}, the greedy increment induces $\mu^*_{\supp\nu}$, attains the maximum payoff over $\Delta_{\supp\nu}$, and removes at least one type from the support, proving (i). Iterating from $\nu_0=\mu^0$ therefore exhausts the prior in at most $n$ steps. The selected payoffs are weakly decreasing because the feasible faces shrink with the support, proving (ii). For (iii), if maximizers are unique, any partition satisfying (i) must induce $\mu^*_{\supp\nu_{t-1}}$ at step $t$. Since the step also retires a type, feasibility forces its scale to be $\lambda^*(\nu_{t-1})$, so the increment is the greedy increment. Induction from $\nu_0=\mu^0$ gives uniqueness. For coalition proofness, suppose a coalition $\widetilde\eta\leq\mu^0$ blocks, and let $t$ be the smallest index with $\overline v(\hat\eta_t)<\overline v(\hat{\widetilde\eta})$. By (ii), payoffs are weakly decreasing, so the blocking condition gives $\widetilde\eta\leq\sum_{s\geq t}\eta_s=\nu_{t-1}$ and hence $\hat{\widetilde\eta}\in\Delta_{R_t}$. Therefore $\overline v(\hat{\widetilde\eta})\leq\overline v(\mu^*_{R_t})=\overline v(\hat\eta_t)$, a contradiction. 
\end{proof}

\medskip
The remaining proofs maintain the uniqueness hypothesis of \cref{prop:cutecase}(iii), as in the tent subsection of \cref{sec:rich-games}: $\overline{v}$ has a unique maximizer $\mu^*_C$ over each face $\Delta_C$.

\begin{remark}\label{lem:peak-consistency}
    Let $\varnothing\neq D\subseteq C\subseteq\Theta$. If $\supp\mu^*_C\subseteq D$, then $\mu^*_D=\mu^*_C$.
\end{remark}

\begin{proof}[Proof of \cref{prop:tent}]
 Under the coalition-proof mass partition of \cref{prop:cutecase} with prior $\mu^0$, the senders in coalition $\eta_t$ have mass $\eta_t(\Theta)$ and each obtains $\overline v(\hat\eta_t)$, so the sender's ex-ante payoff is $\sum_t\eta_t(\Theta)\,\overline v(\hat\eta_t)=\vtent(\mu^0)$.
For a nonzero sub-measure $\nu$, the greedy construction of \cref{prop:cutecase} run from $\nu$ produces increments $\eta_t$ with $\hat\eta_t=\mu^*_{\supp\nu_{t-1}}$; call $\sum_t\eta_t(\Theta)\,\overline{v}(\hat\eta_t)$ the \emph{greedy value} of $\nu$. Existence of the representation in \cref{prop:tent} follows by running this construction from $\mu$. The increments sum to $\mu$, their masses sum to one, and their induced beliefs are the constrained-optimal beliefs of the successive residual supports. If a step exhausts several types, add the omitted intermediate sets with zero weights; the resulting sets form a deletion order $R_1,\ldots,R_n$. If $\supp\mu\neq\Theta$, begin the deletion order by removing the types outside $\supp\mu$, with zero weights. We prove the following claim: for every family of non-empty sets $E_0\supseteq E_1\supseteq\cdots\supseteq E_K$ and weights $\alpha_0,\dots,\alpha_K>0$ with $\nu=\sum_k\alpha_k\,\mu^*_{E_k}$, the value $\sum_k\alpha_k\,\overline{v}(\mu^*_{E_k})$ equals the greedy value of $\nu$. For the value claim, drop the zero-weight terms from any representation based on a deletion order. The remaining terms form a family of the kind just described.

First normalize the family. Let $S:=\supp\nu$. For every $k$, $\alpha_k\mu^*_{E_k}\leq\nu$ gives $\supp\mu^*_{E_k}\subseteq E_k\cap S$, so $\mu^*_{E_k\cap S}=\mu^*_{E_k}$ by \cref{lem:peak-consistency}; replacing each $E_k$ by $E_k\cap S$ changes neither the summands nor the values, so assume $E_k\subseteq S$ for every $k$. Then $S=\supp\nu\subseteq\bigcup_k\supp\mu^*_{E_k}\subseteq E_0$ forces $E_0=S$. Merging consecutive equal sets, whose constrained-optimal beliefs coincide, assume $E_0\supsetneq E_1\supsetneq\cdots\supsetneq E_K$.

We induct on $K$. If $K=0$, then $\nu=\alpha_0\mu^*_S$, so $\lambda^*(\nu)=\alpha_0$: the run retires all of $\nu$ in one step, with greedy value $\alpha_0\,\overline{v}(\mu^*_S)$. If $K\geq1$, the first greedy increment is $\lambda^*(\nu)\,\mu^*_S$ with $\lambda^*(\nu)=\min_{\theta\in\supp\mu^*_S}\nu(\theta)/\mu^*_S(\theta)$. On the one hand, $\nu-\alpha_0\mu^*_S=\sum_{k\geq1}\alpha_k\mu^*_{E_k}\geq0$, so $\lambda^*(\nu)\geq\alpha_0$. On the other hand, $\supp\mu^*_S\not\subseteq E_1$: otherwise every summand of $\nu$ would be supported in $E_1$, giving $S=\supp\nu\subseteq E_1\subsetneq S$. Any $\theta^*\in\supp\mu^*_S\smallsetminus E_1$ has $\mu^*_{E_k}(\theta^*)=0$ for $k\geq1$, so $\nu(\theta^*)=\alpha_0\,\mu^*_S(\theta^*)$ and $\lambda^*(\nu)\leq\alpha_0$. Hence the first increment is exactly $\alpha_0\mu^*_S$, contributing $\alpha_0\,\overline{v}(\mu^*_S)$, and the run continues from $\nu_1=\sum_{k\geq1}\alpha_k\mu^*_{E_k}$, whose greedy value is $\sum_{k\geq1}\alpha_k\,\overline{v}(\mu^*_{E_k})$ by the induction hypothesis. 

For (i), choose a deletion order containing $C$ and assign weight $1$ to $C$; this representation has value $\overline{v}(\mu^*_C)$, and the path-independence claim just proved gives the same value for every representation of $\mu^*_C$. For (ii), beliefs represented using the same deletion order share the same list $(\mu^*_{R_j})$, so their tent values, which are well defined by the path-independence claim, are the corresponding sums $\sum_j(\cdot)_j\,\overline{v}(\mu^*_{R_j})$, which are linear in the weight vector.

For uniqueness, let $f$ satisfy (i) and (ii), fix $\mu$, and choose a representation $\mu=\sum_j\omega_j\mu^*_{R_j}$ based on a deletion order. Induct on the number of positive weights. If there is only one, then $\mu=\mu^*_{R_j}$ for some $j$, and (i) gives $f(\mu)=\overline{v}(\mu^*_{R_j})=\vtent(\mu)$. If there are at least two, choose an index $J$ with $\zeta:=\omega_J\in(0,1)$ and set $\mu':=\sum_{j\neq J}\frac{\omega_j}{1-\zeta}\mu^*_{R_j}$. Then $\mu'$ has a representation based on the same deletion order with one fewer positive weight and $\mu=\zeta\mu^*_{R_J}+(1-\zeta)\mu'$. By (ii) and (i), $f(\mu)=\zeta\,\overline{v}(\mu^*_{R_J})+(1-\zeta)f(\mu')$, and the same calculation gives $\vtent(\mu)=\zeta\,\overline{v}(\mu^*_{R_J})+(1-\zeta)\vtent(\mu')$. The induction hypothesis gives $f(\mu')=\vtent(\mu')$, so $f(\mu)=\vtent(\mu)$. 
\end{proof}

\clearpage
\section{Technical Lemmas (for online publication)}\label{sec:technical}

The following lemma records the standard microfoundation under which a disclosure game's payoff correspondence $V$ satisfies \cref{A4:V-uhc}: it is induced by a receiver who best-responds to her belief with continuous utilities over a compact action space.
\begin{lemma}\label{lem:best-response}
    Let $A$ be a non-empty compact Hausdorff space, let $u_S\colon A\to\mathbb{R}$ be continuous, and, for each $\theta\in\Theta$, let $u_R(\cdot,\theta)\colon A\to\mathbb{R}$ be continuous. For $\mu\in\Delta\Theta$ and $a\in A$ let $U_R(a,\mu):=\sum_{\theta\in\Theta}\mu(\theta)\,u_R(a,\theta)$ and $a^*(\mu):=\arg\max_{a\in A}U_R(a,\mu)$, and set $V(\mu):=\mathrm{conv}\big(u_S(a^*(\mu))\big)$, with envelopes $\overline v(\mu):=\max_{a\in a^*(\mu)}u_S(a)$ and $\underline v(\mu):=\min_{a\in a^*(\mu)}u_S(a)$, so that $V(\mu)=[\underline v(\mu),\overline v(\mu)]$. Then $V$ is non-empty, compact-valued, convex-valued, and upper hemicontinuous, so it satisfies \cref{A4:V-uhc}; moreover $\overline v$ is upper semicontinuous and $\underline v$ is lower semicontinuous.
\end{lemma}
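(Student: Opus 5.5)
This is the standard Berge maximum theorem argument, applied to the receiver's best-response problem. Everything rests on one continuity fact. Because $\Theta$ is finite, $U_R(a,\mu)=\sum_\theta\mu(\theta)u_R(a,\theta)$ is jointly continuous on $A\times\Delta\Theta$. It is a finite sum of products of a continuous function of $a$ with a coordinate of $\mu$, so joint continuity follows from the product topology.

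First, fix $\mu$. Since $U_R(\cdot,\mu)$ is continuous on the compact set $A$, the argmax $a^*(\mu)$ is non-empty. It is also closed, as the preimage of the maximal value, and hence compact. Then $u_S(a^*(\mu))$ is a non-empty compact subset of $\mathbb{R}$. Its convex hull is the interval $[\min,\max]$, which gives the envelope formulas and shows that $V(\mu)$ is non-empty, compact and convex.

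Second, I will show that $a^*$ has a closed graph. Take a convergent net $(\mu_\alpha,a_\alpha)\to(\mu,a)$ with $a_\alpha\in a^*(\mu_\alpha)$. I use nets because $A$ is only compact Hausdorff, not metrizable. Fix any $b\in A$. Then $U_R(a_\alpha,\mu_\alpha)\geq U_R(b,\mu_\alpha)$, and joint continuity passes this inequality to the limit: $U_R(a,\mu)\geq U_R(b,\mu)$. So $a\in a^*(\mu)$. Since $A$ is compact, a closed graph makes $a^*$ upper hemicontinuous; this is a standard result, see Aliprantis--Border.

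Third, I show that $\overline v$ is upper semicontinuous. Suppose not: then there is a net $\mu_\alpha\to\mu$ with $\overline v(\mu_\alpha)\geq c>\overline v(\mu)$. Pick $a_\alpha\in a^*(\mu_\alpha)$ attaining $\overline v(\mu_\alpha)$. By compactness, a subnet converges to some $a$, and by the closed graph, $a\in a^*(\mu)$. Continuity of $u_S$ then gives $u_S(a)\geq c>\overline v(\mu)$, a contradiction. The symmetric argument shows that $\underline v$ is lower semicontinuous.

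Finally, upper hemicontinuity of $V$ follows from these two semicontinuities. Take an open set $U\supseteq[\underline v(\mu),\overline v(\mu)]$. Then $U$ contains $(\underline v(\mu)-\varepsilon,\overline v(\mu)+\varepsilon)$ for some $\varepsilon>0$, and semicontinuity keeps $V(\mu')$ inside this interval for all $\mu'$ near $\mu$.

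The only delicate step is working with nets and subnets in a non-metrizable $A$; everything else is routine.
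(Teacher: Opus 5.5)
Your proof is correct and follows the same route as the paper. The paper cites Berge's maximum theorem for the upper hemicontinuity of $a^*$ and the semicontinuity of $\overline v$ and $\underline v$, then deduces upper hemicontinuity of $V$ from the semicontinuity of its endpoints; you do the same but unpack the Berge step by hand with nets, and your subnet argument for $\overline v$ and $\underline v$ spells out the step (maximizing $u_S$ over an upper hemicontinuous, compact-valued correspondence) that the paper's one-line citation leaves implicit.
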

\begin{proof}
    The map $(a,\mu)\mapsto U_R(a,\mu)$ is continuous and $A$ is compact, so by Berge's maximum theorem \parencite[Theorem~17.31]{aliprantisborder} the best-response correspondence $a^*$ is non-empty, compact-valued, and upper hemicontinuous, and the value functions $\overline v$ and $\underline v$ are upper and lower semicontinuous, respectively. Each $V(\mu)=[\underline v(\mu),\overline v(\mu)]$ is therefore a non-empty compact interval, and the upper hemicontinuity of $V$ follows from the semicontinuity of its two endpoints.
\end{proof}

\begin{lemma}\label{lem:restriction}
    For every disclosure game $G$ and non-empty $R\subseteq\Theta$, the restricted game $G\vert_R$ is itself a disclosure game, and $G\vert_\Theta=G$. Moreover, restriction is transitive: for non-empty $R'\subseteq R\subseteq\Theta$,
    \[
        (G\vert_R)\vert_{R'}=G\vert_{R'}.
    \]
\end{lemma}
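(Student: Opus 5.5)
The lemma is essentially a bookkeeping check, so the plan is to verify the definitions directly. Throughout, write $G\vert_R=(R,\Msg_R,M_R,\mu^0_R,V_R)$.

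\emph{Step 1: $G\vert_R$ is a disclosure game.} I would check \cref{A1:finite-sets}--\cref{A4:V-uhc} in turn.
\begin{enumerate}
\item Finiteness holds because $R\subseteq\Theta$ and $\Msg_R\subseteq\Msg$.
\item Full support holds because $\mu^0(\theta)>0$ for every $\theta\in R$, so $\mu^0(R)>0$ and $\mu^0_R(\theta)=\mu^0(\theta)/\mu^0(R)>0$ on $R$.
\item Message coverage holds by the definition $\Msg_R=\bigcup_{\theta\in R}M(\theta)$. Each $M_R(\theta)=M(\theta)$ is non-empty and is a subset of $\Msg_R$.
\item For \cref{A4:V-uhc}, note that $V_R=V\circ\iota_{R,\Theta}$. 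The zero-extension $\iota_{R,\Theta}$ is an affine, hence continuous, map $\Delta R\to\Delta\Theta$. Composing an upper hemicontinuous correspondence with non-empty compact interval values with a continuous map preserves upper hemicontinuity and keeps the same values. To see this, take an open $U\supseteq V(\iota(\nu))$; upper hemicontinuity gives a neighbourhood $N$ of $\iota(\nu)$ with $V(N)\subseteq U$, and $\iota^{-1}(N)$ is then a neighbourhood of $\nu$ with $V_R(\iota^{-1}(N))\subseteq U$.
\end{enumerate}

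\emph{Step 2: $G\vert_\Theta=G$.} This is immediate componentwise:
\begin{itemize}
\item $\Msg_\Theta=\Msg$ by \cref{A3:message-coverage};
\item $\mu^0(\Theta)=1$, so $\mu^0_\Theta=\mu^0$;
\item $\iota_{\Theta,\Theta}$ is the identity, so $V_\Theta=V$.
\end{itemize}

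\emph{Step 3: transitivity.} For $R'\subseteq R$, I compare each component of $(G\vert_R)\vert_{R'}$ with $G\vert_{R'}$.
\begin{itemize}
\item Type set: both are $R'$.
\item Messages: the restricted game uses $\bigcup_{\theta\in R'}M_R(\theta)=\bigcup_{\theta\in R'}M(\theta)=\Msg_{R'}$. The message mapping is $M$ restricted to $R'$ in both cases.
\item Prior: for $\theta\in R'$, $(\mu^0_R)_{R'}(\theta)=\dfrac{\mu^0(\theta)/\mu^0(R)}{\mu^0(R')/\mu^0(R)}=\mu^0_{R'}(\theta)$.
\item Payoff: $(V_R)_{R'}=V\circ\iota_{R,\Theta}\circ\iota_{R',R}$, and $\iota_{R,\Theta}\circ\iota_{R',R}=\iota_{R',\Theta}$ because zero-extension composes.
\end{itemize}

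The only non-trivial step is the upper hemicontinuity of $V_R$ in Step 1. It is handled by the composition argument above, since the continuity of $\iota$ is evident.
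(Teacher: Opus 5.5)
Your proposal is correct and follows the paper's proof essentially step for step: a componentwise check of (A1)--(A4) using $V_R=V\circ\iota_{R,\Theta}$, and transitivity via the prior computation and the identity $\iota_{R,\Theta}\circ\iota_{R',R}=\iota_{R',\Theta}$. Your explicit neighbourhood argument for upper hemicontinuity of the composition, and your appeal to (A3) for $\Msg_\Theta=\Msg$, simply spell out steps the paper asserts without detail.
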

\begin{proof}
    The sets $R$ and $\Msg_R$ are non-empty and finite, each $M_R(\theta)=M(\theta)$ is non-empty, $\Msg_R=\bigcup_{\theta\in R}M(\theta)$ gives \cref{A3:message-coverage}, and $\mu^0_R$ has full support on $R$. The zero-extension $\iota_{R,\Theta}$ is continuous, so $V_R=V\circ\iota_{R,\Theta}$ is upper hemicontinuous and takes the same non-empty compact interval values as $V$, giving \cref{A4:V-uhc}. Thus $G\vert_R$ is a disclosure game. For $R=\Theta$ we have $\mu^0(\Theta)=1$ and $\iota_{\Theta,\Theta}=\mathrm{id}$, so every component of $G\vert_\Theta$ equals that of $G$.

    For transitivity, both sides have type space $R'$, message space $\bigcup_{\theta\in R'}M(\theta)$, and message mapping $M$. The priors agree: for $\theta\in R'$,
    \[
        (\mu^0_R)_{R'}(\theta)=\frac{\mu^0_R(\theta)}{\mu^0_R(R')}=\frac{\mu^0(\theta)/\mu^0(R)}{\mu^0(R')/\mu^0(R)}=\mu^0_{R'}(\theta).
    \]
    The payoff correspondences agree because $\iota_{R,\Theta}\circ\iota_{R',R}=\iota_{R',\Theta}$.
\end{proof}

Fix a disclosure game $G=(\Theta,\Msg,M,\mu^0,V)$ and a message $m\in\Msg$. Recall that $P(m)=M^{-1}(\{m\})$ is the set of types that can send $m$, and $F(m)$ is the set of types \emph{forced} to send $m$; note that $F(m)\subseteq P(m)$.

\begin{lemma}\label{lem:feasible-polytope}
    The set of feasible beliefs $\mathcal{F}(m)$ is the set of all $\mu\in\Delta\Theta$ such that
    \begin{enumerate}
        \item[(i)] $\mu(\theta)=0$ for every $\theta\notin P(m)$;
        \item[(ii)] $\mu(\theta)\,\mu^0(\theta')=\mu(\theta')\,\mu^0(\theta)$ for all $\theta,\theta'\in F(m)$;
        \item[(iii)] $\mu(\theta)\,\mu^0(\theta')\leq\mu^0(\theta)\,\mu(\theta')$ for all $\theta\in P(m)\smallsetminus F(m)$ and $\theta'\in F(m)$.
    \end{enumerate}
    In particular, $\mathcal{F}(m)$ is a non-empty compact convex polytope.
\end{lemma}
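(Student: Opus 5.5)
The plan is to prove the two inclusions separately. Write $\mathcal{P}$ for the set of $\mu\in\Delta\Theta$ satisfying (i)--(iii). Every sender strategy $\sigma$ with $m\in X(\sigma)$ induces
\[
    \mu_\sigma(\theta\sep m)=\frac{\mu^0(\theta)\,s_\theta}{\sum_{\theta'}\mu^0(\theta')\,s_{\theta'}},\qquad s_\theta:=\sigma(m\sep\theta)\in[0,1].
\]
The vector $s$ has the following properties: $s_\theta=0$ for $\theta\notin P(m)$, $s_\theta=1$ for $\theta\in F(m)$, and $s\neq0$. Conversely, any such $s$ comes from a strategy. For $\theta\in P(m)\smallsetminus F(m)$ the type has another message available, to which it can send the remaining mass $1-s_\theta$. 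Types outside $P(m)$ simply send any other message. Thus $\mathcal{F}(m)$ is exactly the image of the box $S:=\{s\neq0: s_\theta=0 \text{ off } P(m),\ s_\theta=1 \text{ on } F(m),\ s_\theta\in[0,1]\}$ under this normalization map.

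For $\mathcal{F}(m)\subseteq\mathcal{P}$, I would check the three conditions from the formula. Condition (i) is immediate. For (ii), both $\mu(\theta)/\mu^0(\theta)$ and $\mu(\theta')/\mu^0(\theta')$ equal the common normalizing constant $1/p_\sigma(m)$ on $F(m)$. For (iii), $\mu(\theta)/\mu^0(\theta)=s_\theta/p_\sigma(m)\le 1/p_\sigma(m)=\mu(\theta')/\mu^0(\theta')$.

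For $\mathcal{P}\subseteq\mathcal{F}(m)$, take $\mu\in\mathcal{P}$ and recover $s$ by setting $s_\theta:=\mu(\theta)/(\kappa\,\mu^0(\theta))$ with a well-chosen $\kappa>0$. If $F(m)\neq\varnothing$, condition (ii) makes the ratio $\mu(\theta')/\mu^0(\theta')$ constant on $F(m)$, and I set $\kappa$ equal to it. Then $s=1$ on $F(m)$, and (iii) gives $s_\theta\le1$ elsewhere.

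The one delicate point is that $\kappa$ must be positive. Suppose the common ratio were $0$. Then (iii) would force $\mu(\theta)=0$ on $P(m)\smallsetminus F(m)$, and (i) forces $\mu=0$ off $P(m)$, so $\mu$ would vanish entirely, contradicting $\mu\in\Delta\Theta$. If $F(m)=\varnothing$, I instead take $\kappa:=\max_\theta \mu(\theta)/\mu^0(\theta)>0$. In either case $s\in S$ and the induced belief equals $\mu$ after normalization, using full support of $\mu^0$ from \cref{A2:full-support}.

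Finally, the conditions (i)--(iii) are finitely many linear equalities and weak inequalities intersected with the simplex, so $\mathcal{P}$ is a compact convex polytope. It is non-empty because $\mu^0_{P(m)}$ satisfies (i)--(iii): it corresponds to $s\equiv1$ on $P(m)$. The main (mild) obstacle is exactly the realizability step, namely that types in $P(m)\smallsetminus F(m)$ can place mass elsewhere and that $\kappa>0$. Both are handled above.
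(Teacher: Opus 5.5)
Your proposal is correct and follows essentially the same route as the paper: parametrize $\mathcal{F}(m)$ by the sending probabilities $\sigma(m\sep\theta)$, which are zero off $P(m)$ and one on $F(m)$, and check (i)--(iii) directly from Bayes' rule. For the converse you invert by rescaling with the common ratio on $F(m)$, shown positive by the same vanishing-mass argument as the paper, or with the maximal ratio when $F(m)=\varnothing$; the non-emptiness via $\mu^0_{P(m)}$ and the polytope conclusion from finitely many linear constraints also match the paper.
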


\begin{proof}
    Write $P:=P(m)$ and $F:=F(m)$; the set $P$ is non-empty by \cref{A3:message-coverage}, and write $\mu^0(F):=\sum_{\theta\in F}\mu^0(\theta)$. Every sender strategy $\sigma$ satisfies $\sigma(m\sep\theta)=0$ for $\theta\notin P$, since $\supp\sigma(\cdot\sep\theta)\subseteq M(\theta)$, and $\sigma(m\sep\theta)=1$ for $\theta\in F$, since $M(\theta)=\{m\}$.

    \emph{Parametrization.} Let $\sigma$ be a strategy with $m\in X(\sigma)$, and put $\alpha_\theta:=\sigma(m\sep\theta)\in[0,1]$ for $\theta\in P\smallsetminus F$. The denominator of Bayes' rule is $D:=p_\sigma(m)=\mu^0(F)+\sum_{\theta\in P\smallsetminus F}\mu^0(\theta)\,\alpha_\theta$, and the induced belief $\mu:=\mu_\sigma(\cdot\sep m)$ is
    \[
        \mu(\theta)=
        \begin{cases}
            \mu^0(\theta)/D, & \theta\in F,\\
            \mu^0(\theta)\,\alpha_\theta/D, & \theta\in P\smallsetminus F,\\
            0, & \theta\notin P.
        \end{cases}
    \]

    \emph{Feasible beliefs satisfy \textup{(i)--(iii)}.} Let $\mu\in\mathcal{F}(m)$, with parameter $\alpha$ and denominator $D$ as above. Condition (i) is immediate. For $\theta,\theta'\in F$, both $\mu(\theta)\,\mu^0(\theta')$ and $\mu(\theta')\,\mu^0(\theta)$ equal $\mu^0(\theta)\,\mu^0(\theta')/D$, so (ii) holds. For $\theta\in P\smallsetminus F$ and $\theta'\in F$,
    \[
        \mu(\theta)\,\mu^0(\theta')=\frac{\mu^0(\theta)\,\alpha_\theta\,\mu^0(\theta')}{D}\leq\frac{\mu^0(\theta)\,\mu^0(\theta')}{D}=\mu^0(\theta)\,\mu(\theta'),
    \]
    since $\alpha_\theta\leq1$, so (iii) holds.

    \emph{Beliefs satisfying \textup{(i)--(iii)} are feasible.} Let $\mu\in\Delta\Theta$ satisfy (i)--(iii). Each $\theta\in P\smallsetminus F$ has $M(\theta)\smallsetminus\{m\}\neq\varnothing$, so it can send $m$ with any chosen probability and place the remaining mass on $M(\theta)\smallsetminus\{m\}$; each $\theta\in F$ sends $m$; each $\theta\notin P$ sends a message in $M(\theta)$, none of which is $m$.

    If $F=\varnothing$, then (i) gives $\supp\mu\subseteq P$. Set $c:=\max_{\theta\in\supp\mu}\mu(\theta)/\mu^0(\theta)>0$ and let each $\theta\in P$ send $m$ with probability $\alpha_\theta:=\mu(\theta)/(c\,\mu^0(\theta))\in[0,1]$. Then $p_\sigma(m)=\sum_{\theta\in P}\mu^0(\theta)\,\alpha_\theta=1/c>0$ and $\mu_\sigma(\theta\sep m)=\mu^0(\theta)\,\alpha_\theta\,c=\mu(\theta)$.

    If $F\neq\varnothing$, then $\mu(\theta')>0$ for every $\theta'\in F$: otherwise (ii) makes $\mu$ vanish on all of $F$, and then (iii) makes $\mu(\theta)\leq0$ for every $\theta\in P\smallsetminus F$, so with (i) the belief $\mu$ would have total mass zero. By (ii) the ratio $\lambda:=\mu(\theta')/\mu^0(\theta')$ is the same positive number for every $\theta'\in F$. Put $\alpha_\theta:=\mu(\theta)/(\lambda\,\mu^0(\theta))$ for $\theta\in P\smallsetminus F$; then $\alpha_\theta\geq0$, and $\alpha_\theta\leq1$ by (iii). Since $\mu$ has total mass one,
    \[
        1=\sum_{\theta\in F}\lambda\,\mu^0(\theta)+\sum_{\theta\in P\smallsetminus F}\lambda\,\mu^0(\theta)\,\alpha_\theta=\lambda\Bigl(\mu^0(F)+\sum_{\theta\in P\smallsetminus F}\mu^0(\theta)\,\alpha_\theta\Bigr),
    \]
    so the denominator $D$ equals $1/\lambda$, and the strategy that sends $m$ with probabilities $\alpha$ induces $\mu$ by the parametrization.

    \emph{Conclusion.} Taking the strategy under which every type in $P$ sends $m$ gives $\mu^0_{P(m)}\in\mathcal{F}(m)$, so $\mathcal{F}(m)$ is non-empty. Conditions (i)--(iii) are finitely many linear equalities and weak inequalities, so they cut the simplex $\Delta\Theta$ down to a bounded polyhedron. Hence $\mathcal{F}(m)$ is a non-empty compact convex polytope.
\end{proof}

\begin{lemma}\label{lem:pbe-existence}
    Every disclosure game admits a PBE.
\end{lemma}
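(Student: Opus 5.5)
We will prove existence directly by a trembling-hand fixed-point argument on the sender's strategy and the receiver's payoff selection. We will not appeal to \cref{prop:pbe-characterization} or to any partition algorithm: the PBE-existence claim is itself used in the converse direction of those results, and a constructive greedy-type recursion can halt, as in \cref{ex:nonexistence}.

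\emph{Perturbed game.} Fix $\varepsilon\in(0,\min_\theta 1/|M(\theta)|]$. Let $S_\varepsilon$ be the set of sender strategies with $\sigma(m\sep\theta)\geq\varepsilon$ for every $m\in M(\theta)$. It is a non-empty compact convex set. Under any $\sigma\in S_\varepsilon$, every message is on path: this uses \cref{A3:message-coverage} and full support \cref{A2:full-support}. Hence the map $\sigma\mapsto\mu_\sigma(\cdot\sep m)$ is continuous on $S_\varepsilon$ for each $m$, because $p_\sigma(m)\geq\varepsilon\min_\theta\mu^0(\theta)>0$.

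Let $I:=[\min_{\Delta\Theta}\underline v,\max_{\Delta\Theta}\overline v]$, and consider the correspondence on $S_\varepsilon\times I^{\Msg}$
\[
  \Phi(\sigma,r):=\mathrm{BR}_\varepsilon(r)\times\prod_{m\in\Msg}V\bigl(\mu_\sigma(\cdot\sep m)\bigr).
\]
Here $\mathrm{BR}_\varepsilon(r)$ is the set of $\sigma'\in S_\varepsilon$ that put all mass beyond the floor $\varepsilon$ on $\arg\max_{m\in M(\theta)}r(m)$, for each $\theta$. This correspondence has three properties:
\begin{itemize}
\item $\mathrm{BR}_\varepsilon$ has non-empty convex values and a closed graph, as a standard constrained linear best response.
\item The second factor has non-empty compact interval values by \cref{A4:V-uhc}.
\item The second factor is upper hemicontinuous, being the composition of the upper hemicontinuous $V$ with a continuous map.
\end{itemize}
Kakutani's theorem therefore yields a fixed point $(\sigma_\varepsilon,r_\varepsilon)$. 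Set $\mu_\varepsilon(\cdot\sep m):=\mu_{\sigma_\varepsilon}(\cdot\sep m)$. Each such belief lies in $\mathcal F(m)$, since $m\in X(\sigma_\varepsilon)$.

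\emph{Limit.} Take $\varepsilon_k\to0$ and pass to a subsequence along which $(\sigma_{\varepsilon_k},r_{\varepsilon_k},\mu_{\varepsilon_k})\to(\sigma,r,\mu)$. This is possible because all three components lie in compact sets. We then verify the four PBE conditions in turn.
\begin{enumerate}
\item \emph{Feasibility.} $\mathcal F(m)$ is closed by \cref{lem:feasible-polytope}, so $\mu(\cdot\sep m)\in\mathcal F(m)$.
\item \emph{Bayes on path.} For $m\in X(\sigma)$ we have $p_\sigma(m)>0$, so Bayes' rule is continuous at $\sigma$ and $\mu(\cdot\sep m)=\mu_\sigma(\cdot\sep m)$.
\item \emph{Payoff compatibility.} The graph of $V$ is closed, since $V$ is upper hemicontinuous with compact values. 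This gives $r(m)\in V(\mu(\cdot\sep m))$.
\item \emph{Sequential rationality.} Suppose $\sigma(m\sep\theta)>0$. Then $\sigma_{\varepsilon_k}(m\sep\theta)>\varepsilon_k$ for large $k$, so $m\in\arg\max_{M(\theta)}r_{\varepsilon_k}$. Since $M(\theta)$ is finite, this weak-inequality characterization passes to the limit, giving $m\in\arg\max_{M(\theta)}r$.
\end{enumerate}

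The step we expect to be the main obstacle is the off-path feasibility in the limit. Limits of Bayesian beliefs at vanishing-probability messages are a priori arbitrary. The point is that every perturbed belief is generated by a genuine strategy that uses $m$, and the feasible set is a closed polytope, so the limit remains feasible. The remaining limit steps are routine closed-graph arguments.
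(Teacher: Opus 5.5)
Your proof is correct, but it reaches the result by a different route from the paper's.

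The paper runs a single Kakutani fixed point on the enlarged space $\Sigma\times\prod_{m}\mathcal F(m)\times I^{\Msg}$ and carries beliefs as explicit variables. Its belief correspondence is the singleton $\{\mu_\sigma(\cdot\sep m)\}$ when $m\in X(\sigma)$ and all of $\mathcal F(m)$ when $m$ is unused. The substantive step is verifying closed graph across the switch between these two regimes. That step uses closedness of $\mathcal F(m)$, and its convexity is needed for convex values.

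You remove that discontinuity instead. The $\varepsilon$-floor on every available message puts every message on path, so Bayes' rule is continuous and beliefs need not enter the fixed point at all. The cost is a second, limiting step, in which closedness of $\mathcal F(m)$ (from \cref{lem:feasible-polytope}) does the job that the off-path branch of the paper's belief correspondence does. Your handling of sequential rationality in the limit, through $\sigma_{\varepsilon_k}(m\sep\theta)>\varepsilon_k$ for large $k$, is exactly what the floor construction requires.

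What each approach buys: the paper's route is shorter and produces the PBE in one shot, with no sequence of perturbed equilibria. Yours needs only closedness, not convexity, of $\mathcal F(m)$. As a by-product, it also selects a PBE that is a limit of equilibria of the $\varepsilon$-perturbed games.
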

\begin{proof}
We exhibit a triple $(\sigma,\mu,r)$ satisfying the conditions of \cref{dfn:PBE} as a fixed point of a correspondence on the non-empty compact convex set
\[
    K := \Sigma \times \prod_{m\in\Msg}\mathcal{F}(m) \times \prod_{m\in\Msg} I,
    \qquad
    \Sigma := \prod_{\theta\in\Theta}\Delta M(\theta),
    \quad
    I := \Bigl[\min_{\Delta\Theta}\underline{v},\,\max_{\Delta\Theta}\overline{v}\Bigr].
\]
Here $\Sigma$ is the set of sender strategies, a product of simplices; each $\mathcal{F}(m)$ is a non-empty compact convex polytope (\cref{lem:feasible-polytope}); and $I$ is a compact interval, finite because $\underline{v}$ is lower semicontinuous and $\overline{v}$ upper semicontinuous on the compact simplex $\Delta\Theta$ by \cref{A4:V-uhc}, with $V(\mu)=[\underline{v}(\mu),\overline{v}(\mu)]\subseteq I$ for every $\mu\in\Delta\Theta$. Since $\Theta$ and $\Msg$ are finite, $K$ is a subset of a finite-dimensional Euclidean space.

Define $\Phi:K\rightrightarrows K$ by
\[
    \Phi(\sigma,\mu,r) := S(r)\times\prod_{m\in\Msg} B_m(\sigma)\times\prod_{m\in\Msg} V(\mu(\cdot\sep m)),
\]
where
\[
    S(r) := \prod_{\theta\in\Theta}\arg\max_{\tau\in\Delta M(\theta)}\sum_{m\in M(\theta)}\tau(m)\,r(m),
    \qquad
    B_m(\sigma) := \begin{cases}
        \{\mu_\sigma(\cdot\sep m)\}, & m\in X(\sigma),\\
        \mathcal{F}(m), & m\notin X(\sigma).
    \end{cases}
\]
Because a distribution maximizes a linear objective if and only if its support lies in the set of maximizing coordinates,
\begin{equation}\label{eqn:argmax-face}
    \arg\max_{\tau\in\Delta M(\theta)}\sum_{m\in M(\theta)}\tau(m)\,r(m)
    = \bigl\{\tau\in\Delta M(\theta)\sep \supp\tau\subseteq\textstyle\arg\max_{m\in M(\theta)} r(m)\bigr\}.
\end{equation}

\emph{$\Phi$ has non-empty, compact, and convex values.} Fix $(\sigma,\mu,r)\in K$. For each $\theta$, the component $\arg\max_{\tau\in\Delta M(\theta)}\sum_{m}\tau(m)\,r(m)$ of $S(r)$ is non-empty, because the continuous linear map $\tau\mapsto\sum_{m}\tau(m)\,r(m)$ attains its maximum on the non-empty compact simplex $\Delta M(\theta)$; by \eqref{eqn:argmax-face} it is a face of $\Delta M(\theta)$, hence compact and convex. Each $B_m(\sigma)$ is a subset of $\mathcal{F}(m)$: on path it is the singleton $\{\mu_\sigma(\cdot\sep m)\}$, with $\mu_\sigma(\cdot\sep m)\in\mathcal{F}(m)$ by \cref{dfn:set-feasible-beliefs}; off path it is $\mathcal{F}(m)$, which is non-empty, compact, and convex. Finally, $V(\mu(\cdot\sep m))=[\underline{v}(\mu(\cdot\sep m)),\overline{v}(\mu(\cdot\sep m))]$ is a non-empty compact interval contained in $I$ by \cref{A4:V-uhc}. As products of these sets, the values of $\Phi$ are non-empty, compact, and convex.

\emph{$\Phi$ has closed graph.} Let $(\sigma_k,\mu_k,r_k)\to(\sigma,\mu,r)$ in $K$, and let $(\sigma'_k,\mu'_k,r'_k)\in\Phi(\sigma_k,\mu_k,r_k)$ converge to $(\sigma',\mu',r')$. We verify $(\sigma',\mu',r')\in\Phi(\sigma,\mu,r)$ coordinate by coordinate.
\begin{itemize}
    \item \emph{Sender.} Since $\sigma'_k\in S(r_k)$, for every $\theta$ and every $\tau\in\Delta M(\theta)$,
    \[
        \sum_{m\in M(\theta)}\sigma'_k(m\sep\theta)\,r_k(m)\;\geq\;\sum_{m\in M(\theta)}\tau(m)\,r_k(m).
    \]
    Each side is a finite sum of products of convergent sequences, so the inequality is preserved in the limit: $\sum_{m}\sigma'(m\sep\theta)\,r(m)\geq\sum_{m}\tau(m)\,r(m)$ for every $\tau$. Hence $\sigma'\in S(r)$.
    \item \emph{Beliefs.} Fix $m$. If $m\notin X(\sigma)$, then $B_m(\sigma)=\mathcal{F}(m)$, and $\mu'(\cdot\sep m)\in\mathcal{F}(m)$ because $\mu'_k(\cdot\sep m)\in B_m(\sigma_k)\subseteq\mathcal{F}(m)$ for every $k$ and $\mathcal{F}(m)$ is closed. If $m\in X(\sigma)$, then $p_\sigma(m)>0$. The on-path probability $p_\sigma(m)=\sum_{\theta}\mu^0(\theta)\,\sigma(m\sep\theta)$ is continuous in $\sigma$, so $p_{\sigma_k}(m)\to p_\sigma(m)>0$ and hence $m\in X(\sigma_k)$ for all large $k$. For such $k$, $B_m(\sigma_k)=\{\mu_{\sigma_k}(\cdot\sep m)\}$, so $\mu'_k(\cdot\sep m)=\mu_{\sigma_k}(\cdot\sep m)$; and since the induced belief $\mu_\sigma(\theta\sep m)=\mu^0(\theta)\,\sigma(m\sep\theta)/p_\sigma(m)$ is continuous in $\sigma$ wherever $p_\sigma(m)>0$, we get $\mu'_k(\cdot\sep m)\to\mu_\sigma(\cdot\sep m)$. By uniqueness of limits, $\mu'(\cdot\sep m)=\mu_\sigma(\cdot\sep m)\in B_m(\sigma)$.
    \item \emph{Payoffs.} For each $m$, $r'_k(m)\in V(\mu_k(\cdot\sep m))$ with $\mu_k(\cdot\sep m)\to\mu(\cdot\sep m)$ and $r'_k(m)\to r'(m)$. By \cref{A4:V-uhc}, $V$ is upper hemicontinuous with closed values on the compact simplex $\Delta\Theta$, so it has closed graph; therefore $r'(m)\in V(\mu(\cdot\sep m))$.
\end{itemize}

By Kakutani's fixed-point theorem \parencite[Corollary~15.3, p.~72]{border1985}, $\Phi$ has a fixed point $(\sigma,\mu,r)$. It satisfies the four conditions of \cref{dfn:PBE}: $\mu(\cdot\sep m)\in B_m(\sigma)\subseteq\mathcal{F}(m)$ gives feasibility; for $m\in X(\sigma)$, $B_m(\sigma)=\{\mu_\sigma(\cdot\sep m)\}$ gives the on-path Bayesian condition; $r(m)\in V(\mu(\cdot\sep m))$ gives payoff compatibility; and $\sigma\in S(r)$ together with \eqref{eqn:argmax-face} gives sequential rationality. Hence $(\sigma,\mu,r)$ is a PBE.
\end{proof}

\end{document}